\newtheorem{theorem}{Theorem}
\newtheorem{Lemma}{Lemma}
\newtheorem{Corollary}{Corollary}
\newtheorem{Definition}{Definition}
\newtheorem{Claim}{Claim}
\providecommand{\Pr}{\text{Pr}}
\renewcommand{\Pr}{\text{Pr}}
\renewcommand{\r}{\mbox{rank}}
\newcommand{\facloc}{\textsc{FacLoc}}
\newcommand{\pmedian}{$p$\textsc{Median}}
\newcommand{\pcenter}{$p$\textsc{Center}}
\renewcommand{\to}{\tilde{O}}
\providecommand{\rdown}{r}
\renewcommand{\rdown}{r}
\providecommand{\dup}{d}
\renewcommand{\dup}{d}
\providecommand{\rsim}{r}
\renewcommand{\rsim}{r}
\newcommand*\samethanks[1][\value{footnote}]{\footnotemark[#1]}
\begin{document}
\title{Near-Optimal Clustering in the $k$-machine model}

\author{Sayan Bandyapadhyay \thanks{Department of Computer Science, The University of Iowa \texttt{\{sayan-bandyapadhyay, tanmay-inamdar, shreyas-pai, sriram-pemmaraju\}@uiowa.edu}} \and Tanmay Inamdar \samethanks[1] \and Shreyas Pai \samethanks[1] \and Sriram V.~Pemmaraju \samethanks[1]}



\maketitle

\begin{abstract}
	The \textit{clustering problem}, in its many variants, has numerous applications in operations research and computer science (e.g., in applications in bioinformatics, image processing, social network analysis, etc.).
	As sizes of data sets have grown rapidly, researchers have focused on designing algorithms for clustering problems in 
	models of computation suited for large-scale computation such as MapReduce, Pregel, and streaming models.
	The \textit{$k$-machine model} (Klauck et al., SODA 2015) is a simple, message-passing model for large-scale distributed graph processing.
	This paper considers three of the most prominent examples of clustering problems: the \textit{uncapacitated facility location}
	problem, the \textit{$p$-median} problem, and the \textit{$p$-center} problem and presents $O(1)$-factor approximation
	algorithms for these problems running in $\tilde{O}(n/k)$ rounds in the $k$-machine model. 
	These algorithms are optimal up to polylogarithmic factors because this paper also shows $\tilde{\Omega}(n/k)$ lower bounds for
	obtaining polynomial-factor approximation algorithms for these problems.
	These are the first results for clustering problems in the $k$-machine model.

	We assume that the metric provided as input for these clustering problems in only 
	implicitly provided, as an edge-weighted graph and in a nutshell, our main technical contribution is to show that constant-factor approximation algorithms for all three clustering
	problems can be obtained by learning only a small portion of the input metric.
\end{abstract}




\IncMargin{0.5em}
\section{Introduction}

The problem of \textit{clustering} data has a wide variety of applications in areas such as information retrieval,
bioinformatics, image processing, and social network analysis. In general, clustering is a key component of data mining 
and machine learning algorithms. Informally speaking, the objective of data clustering is to partition data into groups
such that data within each group are ``close'' to each other according to some similarity measure. 
For example, we might want to partition visitors to an online retail store (e.g., Amazon) into groups of
customers who have expressed preferences for similar products.
As the sizes of data sets have grown significantly over the last few years, it has become imperative that clustering
problems be solved efficiently in models of computation that allow multiple machines to process data in parallel.
Distributing input data across multiple machines is important not just for speeding up computation through parallelism, but also because
no single machine may have sufficiently large memory to hold a full data set. 
Motivated by these concerns, recent research has considered problems of designing clustering algorithms 
\cite{EneIMKDD2011}\cite{GarimellaDGSCIKM2015} in systems such as MapReduce \cite{DeanGhemawatCACM2010} and Pregel \cite{MalewiczABDHLCSIGMOD2010}. 
Clustering algorithms \cite{SilvaFBHCCompSurv2013} have also been designed for streaming models of computation 
\cite{AlonMSSTOC1996}. 

In this paper we present distributed algorithms for three of the most prominent clustering problems: 
the \textit{uncapacitated metric facility location} problem,
the \textit{$p$-median} problem, and the \textit{$p$-center} problem.
All three problems have been studied for several decades now and are well-known to be NP-hard.
On the positive side, all three problems have constant-factor (polynomial-time) approximation algorithms.
We consider these problems in the recently proposed \textit{$k$-machine model} \cite{KlauckNPRSODA15}, a synchronous, message-passing
model for large-scale distributed computation. 
This model cleanly abstracts essential features of systems such as Pregel \cite{MalewiczABDHLCSIGMOD2010} and Giraph (see \verb+http://giraph.apache.org/+) that have been designed for large-scale graph processing\footnote{Researchers at Facebook recently used Apache Giraph to process graphs with trillion edges \cite{ChingEKLMVLDB2015}.}, allowing researchers to prove
precise upper and lower bounds.
One of the main features of the $k$-machine model is that the input, consisting of $n$
items, is randomly partitioned across $k$ machines. Of particular interest are settings in which $n$ is 
much larger than $k$. Communication occurs via bandwidth-restricted
communication links between every pair of machines and thus the underlying communication network is a size-$k$ clique.
For all three problems, we present constant-factor approximation algorithms that run in $\tilde{O}(n/k)$ rounds in 
the $k$-machine model. We also show that these algorithms have optimal round complexity, to within polylogarithmic factors,
by providing complementary $\tilde{\Omega}(n/k)$ lower bounds for polynomial-factor approximation algorithms\footnote{Throughout the paper, we use $\to(f(n))$ as a shorthand for $O(f(n) \cdot \mbox{poly}(\log n))$ and $\tilde{\Omega}(f(n))$ as a shorthand for $\Omega(f(n)/\mbox{poly}(\log n))$.}.
These are the first results on clustering problems in the $k$-machine model.

\subsection{Problem Definitions}
The input to the \textit{uncapacitated metric facility location} problem (in short, \facloc) is a set $V$ of points, a metric
$d: V \rightarrow \mathbb{R}^{+}$ that assigns distances to point-pairs, and a \textit{facility opening cost} $f: V \rightarrow \mathbb{R}^+$ associated
with each point $v \in V$. 
The problem is to find a subset $F \subseteq V$ of points to open (as ``facilities'') so as to minimize
the objective function  $\sum_{i \in F} f_i + \sum_{j \in V} d(j, F)$, where
$d(j, F) = \min_{x \in F} d(j, x)$.
(For convenience, we abuse notation and use $f_i$ instead of $f(i)$.)
\facloc\ is NP-hard and is in fact hard to approximate with an approximation factor better than 1.463 \cite{GuhaKhullerSODA1998}.
There are several well-known constant-factor approximation algorithms for \facloc\ including the primal-dual
algorithm of Jain and Vazirani \cite{JainVaziraniJACM2001} and the greedy algorithm of Mettu and Plaxton \cite{MettuPlaxtonSICOMP2003}.
The best approximation factor currently achieved by an algorithm for \facloc\ is 1.488 \cite{LiICALP2011}.

The input to the \textit{$p$-median} problem (in short, \pmedian) is a set $V$ of points and a metric
$d: V \rightarrow \mathbb{R}^{+}$ that assigns distances to point-pairs, and a positive integer $p$.
The problem is to find a subset $F \subseteq V$ of exactly $p$ points to open (as ``facilities'') so as to minimize
the objective function  $\sum_{j \in V} d(j, F)$.
\pmedian\ is NP-hard and and is in fact hard to approximate with an approximation factor better than $1 + \frac{2}{e}
\approx 1.736$ \cite{JainMSSTOC2002}.
A well-known approximation algorithm for the $p$-median problem is due to Jain and Vazirani \cite{JainVaziraniJACM2001}, who present
a 6-approximation algorithm. This approximation factor has been improved by subsequent results -- see 
\cite{AryaGKMMPSTOC2001}, for example.
The input to the \textit{$p$-center} problem (in short, \pcenter) is the same as the input to \pmedian,
but the objective function that is minimized is $\max_{j \in V} d(j, F)$.
Like \facloc\ and \pmedian, the \pcenter\ problem is not only NP-hard, it is in fact hard to approximate 
with an approximation factor strictly better than 2 \cite{GonzalezTCS1985}.
There is also an optimal 2-approximation algorithm for this problem \cite{GonzalezTCS1985} obtained via 
a simple, greedy technique called \textit{farthest first traversal}.

In all three problems, it is assumed that each point is ``connected'' to the nearest open facility. 
So an open facility along with the ``clients'' that are connected to it forms a cluster.

\subsection{The $k$-machine Model and Input-Output Specification}
Let $n$ denote $|V|$. 
The $k$-machine model is a message-passing, synchronous model of distributed computation.
Time proceeds in \textit{rounds} and in each round, each of the $k$ machine performs
local computation and then sends, possibly distinct, messages to the remaining $k-1$ machines.
A fundamental constraint of the $k$-machine model is that each message is required to be small;
as is standard, we assume here that each message is of size $O(\log n)$ bits.
It is assumed that the $k$ machines have unique \texttt{ID}s, that are represented by 
$O(\log n)$-bit strings.

As per the random partition assumption of $k$-machine model \cite{KlauckNPRSODA15},
the points in $V$ are distributed uniformly at random across the $k$ machines.
This results in $\to(n/k)$ points per machine, with high probability (w.h.p.)\footnote{We use ``with high
probability'' to refer to probability that is at least $1 - 1/n^c$ for any constant $c \ge 1$.}.
We use $m_j$, $1 \le j \le k$, to denote the machines and $H(m_j)$ to denote the subset of points ``hosted''
by $m_j$.
The natural way to distribute the rest of the input, namely $d: V \times V \rightarrow \mathbb{R}^{+}$ and $f: V 
\rightarrow \mathbb{R}^{+}$ (in the case of \facloc), is for each machine $m_j$ to be given $f_i$ and $\{d(i, x)\}_{x \in V}$ for each
point $i \in H(m_j)$.
The distribution of $f$ in this manner is fine, but there is a problem with distributing $\{d(i, x)\}_{x\in V}$
in this manner.
Since $n$ is extremely large, it is infeasible for $m_j$ to hold the $\tilde{\Omega}(n^2/k)$ elements in 
$\cup_{i \in H(m_j)} \{d(i, x)\}_{x \in V}$. (Recall that $n >> k$.)
In general, this explicit knowledge of the metric space consumes too much memory, even when divided among
$k$ machines, to be feasible.
So we make, what we call the \textit{graph-metric assumption}, that the metric $d: V \times V \rightarrow \mathbb{R}^+$
is specified implicitly by an edge-weighted graph with vertex set $V$.
Let $G = (V, E)$ be the edge-weighted graph with non-negative edge weights representing the metric $d: V \times 
V \rightarrow \mathbb{R}^+$. Thus for any $i, j \in V$, $d(i, j)$ is the shortest path distance between points $i$ 
and $j$ in $G$.

Klauck et al.~\cite{KlauckNPRSODA15} consider a number of graph problems in the $k$-machine model
and we follow their lead in determining the initial distribution of $G$ across machines.
For each point $i \in H(m_j)$, machine $m_j$ knows all the edges in $G$ incident on $i$ and 
for each such edge
$(i, x)$, machine $m_j$ knows the \texttt{ID} of the machine that hosts $x$. 
Thus, $\sum_{i \in H(m_j)} \mbox{degree}_G(i)$ elements are needed at each machine $m_j$ to represent
the metric space and if $G$ is a sparse graph, this representation can be quite compact.

The graph-metric assumption fundamentally affects the algorithms we design. Since the metric $d$ is provided 
implicitly, via $G$, access to the metric is provided through shortest path computations on $G$.
In fact, it turns out that these shortest path computations are the costliest part of our algorithms.
One way to view our main technical contribution is this: we show that for all three clustering problems,
there are constant-factor approximation algorithms that only require a small (i.e., polylogarithmic) number of calls to a subroutine that
solves the \textit{Single Source Shortest Path (SSSP)} problem.

For all three problems, the output consists of $F$, the set of open facilities, and
connections between clients (i.e., points that have not been open as facilities) and their 
nearest open facilities. More precisely, for any machine $m_j$ and any point $i \in H(m_j)$:
\begin{itemize}
\item If $i \in F$, then $m_j$ knows that $i$ has been opened as a facility and furthermore
$m_j$ knows all $(x, \texttt{ID}_x)$-pairs where $x$ is a client that connects to $i$ and $\texttt{ID}_x$ is the \texttt{ID} of the machine hosting $x$.
\item If $i \in V \setminus F$, then $m_j$ knows that $i$ is a client and it also knows 
the $(x, \texttt{ID}_x)$ pair, where $x$ is the open facility that $i$ connects to and
$\texttt{ID}_x$ is the \texttt{ID} of the machine hosting $x$.
\end{itemize}

\subsection{Our Results}
We first prove $\tilde{\Omega}(n/k)$ lower bounds (in Section \ref{section:lowerBounds}) for \facloc, \pmedian, and \pcenter.
For each problem, we show that obtaining an $\alpha$-approximation algorithm in the $k$-machine model, for any $\alpha = (\mbox{poly}(n))$, 
requires at least $\tilde{\Omega}(n/k)$ rounds.
In the subsequent three sections, we present $\to(n/k)$-round, constant-factor approximation algorithms for the \facloc, \pmedian,
and \pcenter\ problem, respectively.
Our lower bound results show that our algorithms have optimal round complexity,
at least up to polylogarithmic factors.

We bring to bear a wide variety of old and new techniques to derive our upper bound results including the facility location
algorithm of Mettu and Plaxton \cite{MettuPlaxtonSICOMP2003}, the fast version of this algorithm due to Thorup \cite{Thorup2001}, 
the neighborhood-size estimation framework of 
Cohen \cite{Cohen1997,Cohen2015}, the $p$-median Lagrangian relaxation algorithm of Jain and Vazirani \cite{JainVaziraniJACM2001} and the recent distributed shortest path algorithms 
due to Becker et al.~\cite{BeckerKKLdisc17}.
In our view, an important contribution of this paper is to show how all of these techniques can be utilized in the $k$-machine model.

\subsection{Related Work}
\label{section:relatedWork}

Following Klauck et al.~\cite{KlauckNPRSODA15}, two other papers \cite{PanduranganRSarxiv16,PanduranganRSSPAA2016} have studied
graph problems in the $k$-machine model. In \cite{PanduranganRSSPAA2016}, the authors present an $\tilde{O}(n/k^2)$-round algorithm
for graph connectivity, which then serves as the basis for $\tilde{O}(n/k^2)$-round algorithms for other graph problems such as
minimum spanning tree (MST) and approximate min-cut. The upper bound for MST does not contradict the $\Omega(n/k)$ lower
bounds shown for this problem in Klauck et al.~\cite{KlauckNPRSODA15} because Pandurangan et al.~\cite{PanduranganRSSPAA2016} use
a more relaxed notion of how the output MST is represented. Specifically, at the end of the algorithm in \cite{PanduranganRSSPAA2016}
every MST edge is known to \textit{some} machine, whereas Klauck et al.~\cite{KlauckNPRSODA15} use the stricter requirement
that every MST edge be known to the machines hosting the two end points of the edge.
This phenomena in which the round complexity of the problem is quite sensitive to the output representation may be
relevant to our resuts as well and is further discussed in Section \ref{section:conclusions}. 

Earlier in this section, we have mentioned models and systems for large-scale parallel computation such as MapReduce and Pregel.
Another model of large-scale parallel computation, that seems essentially equivalent to the $k$-machine model is
the \textit{Massively Parallel Computation model (MPC)} which according to \cite{YaroslavtsevVarxiv2017} is the
``most commonly used theoretical model of computation on synchronous large-scale
data processing platforms such as MapReduce and Spark.''

\section{Lower Bound Results}
\label{section:lowerBounds}

In this section, we derive $\tilde{\Omega}(n/k)$ lower bounds for achieving \(\mbox{poly(n)}\)-factor approximation algorithms in the $k$-machine model for all three problems considered in this paper. Our lower bounds are inspired by the \(\Omega(n/k)\) lower bound result from \cite{KlauckNPRSODA15} for the \textit{Spanning Tree Computation} problem.

To prove the lower bounds we describe a family of lower bound graphs \(F_b(X, Y)\) where \(X\) and \(Y\) are sampled from the same distribution as the one used in \cite{KlauckNPRSODA15}. That is, \((X, Y)\) is chosen uniformly at random from $\{0, 1\}^b \times \{0, 1\}^b$, satisfying the constraint that for every $i \in [b]$, $X_i + Y_i \ge 1$. Let \(b = n/2 - 1\) and let \(L = n^c\) for some large enough constant \(c\) that depends on the approximation factor considered. The graph \(F_b(X, Y)\) has \(2b + 2\) vertices \(u, w, u_1, \dots, u_b, w_1, \dots, w_b\). We fix the ID's of the vertices to be the first \(n\) natural numbers which means that each machine knows whether a vertex \(v\) is \(u, w, u_i, w_i\) just by knowing ID(v). For every \(i \in [b]\) there are three edges in the graph  of the form \(\{u, u_i\}, \{u_i, w_i\}, \{w_i, w\}\) and the weights of these edges depend on the bit values of \(X_i\) and \(Y_i\) where \(X, Y \in \{0, 1\}^b\). In particular, we assign weights to \((\{u, u_i\}, \{u_i, w_i\}, \{w_i, w\})\) as follows -- if \(X_i = 1\) and \(Y_i = 0\), the weights are \((1, 1, L)\), if \(X_i = 0\) and \(Y_i = 1\), the weights are \((L, 1, 1)\), and if \(X_i = 1\) and \(Y_i = 1\), the weights are \((1, L, 1)\). There is no weight assignment for the case when \(X_i = Y_i = 0\) because the distribution of \((X, Y)\) places no probability mass on this case.

In the following lemma we show that any protocol that reveals \(X\) and \(Y\) to a single machine must do so by making it receive large messages from other machines. The proof is the same as the entropy argument made in theorem 2.1 in \cite{KlauckNPRSODA15} with the added simplification that the entropy at the end of the protocol is zero. Nevertheless, we prove the lemma for completeness.

\begin{Lemma} \label{lem:entropymessage}
  Let \(\Pi\) be a public-coin $\epsilon$-error randomized protocol in the $k$-machine model \((k \ge 4)\) on an \(n\)-vertex input graph sampled uniformly at random from \(F_b(X, Y)\). If a machine knows both \(X\) and \(Y\) at the end of the protocol \(\Pi\) then it must receive \(\Omega(b)\) bit messages in expectation from other machines.
\end{Lemma}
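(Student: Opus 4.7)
The plan is to adapt the entropy argument from Theorem 2.1 of Klauck et al.~\cite{KlauckNPRSODA15}, exploiting the simplification that $m$'s final residual entropy about $(X,Y)$ is zero. Fix notation: let $V_0$ denote the initial view of $m$, consisting of the IDs of the vertices in $H(m)$, the weights of all edges incident to those vertices, and the IDs of the machines hosting the other endpoints of those edges. Let $R$ denote the shared public randomness, and let $M$ denote the concatenation, in order of arrival, of all bit-messages that $m$ receives during $\Pi$. Since the final local state of $m$ is a deterministic function of $(V_0, R, M)$ and, by hypothesis, $m$ knows $(X, Y)$ at the end, we have $H(X, Y \mid V_0, R, M) = 0$.

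Using the chain rule and standard information-theoretic inequalities,
\[
H(X, Y \mid V_0, R) \;=\; I(X, Y; M \mid V_0, R) \;\le\; H(M \mid V_0, R) \;\le\; H(M) \;\le\; \mathbb{E}[|M|].
\]
The random partition and the public coins $R$ are independent of $(X, Y)$, so $H(X, Y \mid V_0, R) = H(X, Y \mid V_0)$, and it therefore suffices to prove $H(X, Y \mid V_0) = \Omega(b)$.

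The structural step is to pin down exactly what $V_0$ discloses about $(X, Y)$. Examining the three allowed weight patterns $(1,1,L)$, $(L,1,1)$, $(1,L,1)$, one checks that: (i) the two edges incident to $u_i$ uniquely determine $(X_i, Y_i)$; (ii) symmetrically for the two edges incident to $w_i$; (iii) the edge $\{u, u_j\}$ alone reveals $X_j$ (weight $1$ vs.\ $L$); (iv) the edge $\{w_j, w\}$ alone reveals $Y_j$; and no edge encodes information about a triple $i'$ other than through its own four endpoints. Let $S$ be the random set of indices $i$ such that none of $u, w, u_i, w_i$ lies in $H(m)$. Then conditional on $V_0$, the pairs $\{(X_i, Y_i)\}_{i \in S}$ are i.i.d.\ uniform over three values, so $H(X, Y \mid V_0) \ge |S| \log_2 3$. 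Under the random partition, each of $u, w, u_i, w_i$ lands on $m$ with probability $1/k$; treating these events as essentially independent yields $\Pr[i \in S] \ge (1 - 1/k)^4 \ge (3/4)^4 > 1/4$ for $k \ge 4$, whence $\mathbb{E}[|S|] > b/4$.

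Assembling the pieces, $\mathbb{E}[|M|] \ge H(X, Y \mid V_0) \ge \mathbb{E}[|S|] \log_2 3 = \Omega(b)$, as required. The most delicate point I anticipate is the structural characterization of what $V_0$ reveals: in particular, verifying that in the probability-$1/k$ events $u \in H(m)$ or $w \in H(m)$ the exposure of an entire string $X$ or $Y$ does not undermine the bound---since the unexposed string still has conditional entropy $\Omega(b)$ because $H(Y_i \mid X_i) = 2/3$ per coordinate---so restricting attention to the ``untouched'' indices in $S$ yields a valid (and much cleaner) lower bound rather than a vacuous one. Beyond that, everything is a routine application of the chain rule and the source-coding bound $H(M) \le \mathbb{E}[|M|]$.
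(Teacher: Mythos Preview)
Your proof is correct and follows essentially the same entropy argument as the paper (both adapted from Theorem~2.1 of Klauck et al.). The only notable difference is bookkeeping: where the paper conditions on $p$ not hosting both $u$ and $w$ and then invokes a Chernoff bound to control how many $u_i,w_i$ land on $p$, you lower-bound $H(X,Y\mid V_0)$ directly via $\mathbb{E}[|S|]\log_2 3$ for the set $S$ of untouched indices---this is a bit cleaner (linearity of expectation in place of concentration plus conditioning) but yields the same $\Omega(b)$ conclusion.
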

\begin{proof}
  Let \(p\) be the machine that knows both \(X\) and \(Y\) at the end of the protocol. Since \(X\) and \(Y\) are encoded in the edge weights of the graph, if the machine \(p\) hosts \(u\) then it knows the string \(X\) via the edges \(\{u, u_i\}\) and similarly it knows \(Y\) if it hosts \(w\). But if \(p\) hosts both \(u\) and \(w\) then it knows \(X\) and \(Y\) before the protocol even begins. This is a bad event so we condition on the event that no machine hosts both \(u\) and \(w\) which happens with probability \(1 - 1/k\).

  Before the first round of communication, it can be shown that the entropy \(H(X, Y) \ge H(Y \mid X) = H(X \mid Y) =  2b/3\). The machine \(p\) also hosts some vertices \(u_i\) and \(w_i\) giving it access to some bits of \(X\) and \(Y\). It is easy to see via the Chernoff bound that with very high probability \(p\) hosts at most \((1 + \zeta)2b/k\) \(u_i\)'s and \(w_i\)'s for \(\zeta = 0.01\) which means it cannot know more than \((1 + \zeta)2b/k\) bits of \(X\) and \(Y\) by virtue of hosting these vertices whp. The event where \(p\) hosts more vertices cannot influence \(H(X, Y)\) the entropy by more than \(2^{-\zeta^2 2b/(3k)} \cdot b = o(1)\) for \(b\) large enough. Hence, the entropy of \(X, Y\) given this initial information (which we denote by a random variable \(A\)) is \(H(X, Y \mid A) \ge 2b/3 - (1 + \zeta)2b/k - o(1)\). Note that if \(p\) hosts either \(u\) or \(w\) then \(A\) will contain information about either \(X\) or \(Y\) respectively but that does not affect our lower bound on the initial entropy.

  Let \(\Pi_p\) be the messages received by the machine \(p\) during the course of the protocol \(\Pi\). With probability \(1 - \epsilon\), \(p\) knows both \(X\) and \(Y\) at the end of the protocol and therefore \(H(X, Y \mid \Pi_p, A) = 0\). This means that \(I(X, Y; \Pi_p | A) = H(X, Y | A) \ge  2b/3 - (1 + \zeta)b/k - o(1)\) and that \(|\Pi_p| = \Omega(b)\). This is under the assumption that different machines host \(u\) and \(w\) and there is no error, therefore the expected number of messages received by \(p\) must be at least \((1 - \epsilon) \cdot (1 - 1/k) \cdot \Omega(b) = \Omega(b)\).
\end{proof}

\begin{Lemma}\label{lem:lbfl}
For any $1 \le \alpha \le \mbox{poly}(n)$, every public-coin $\epsilon$-error randomized protocol in the $k$-machine model that computes an $\alpha$-factor approximate solution of \facloc\ on an $n$-vertex input graph has an expected round complexity $\tilde{\Omega}(n/k)$.
\end{Lemma}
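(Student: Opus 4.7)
The plan is to reduce from Lemma~\ref{lem:entropymessage} by putting a specific \facloc\ instance on the lower-bound graph $F_b(X, Y)$ so that every $\alpha$-approximate solution forces the machine hosting $u$ to learn both $X$ and $Y$. Concretely, I would set $f(u) = f(w) = 0$ and $f(v) = L$ for every other vertex $v$, where $L = n^c$ is the large weight parameter already used in the construction of $F_b$.

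The first step is to compute, by a short case analysis over the three allowed $(X_i, Y_i)$ patterns, the distances from $u_i$ and $w_i$ to $u$ and $w$. One obtains $d(u_i, u) = 1$ iff $X_i = 1$ (otherwise $\ge L$) and $d(w_i, w) = 1$ iff $Y_i = 1$ (otherwise $\ge L$), while the ``cross'' distances $d(u_i, w)$ and $d(w_i, u)$ are always $\ge 2$ and are $\Theta(L)$ except in the one case where the spoke $(u_i, w_i)$ provides a length-$2$ path. Consequently, with $F^\star = \{u, w\}$ each $u_i$ connects to $u$ iff $X_i = 1$ and each $w_i$ connects to $w$ iff $Y_i = 1$, for a total cost of at most $3b = O(n)$. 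Any structural deviation is expensive: opening an extra facility adds $L$, omitting $u$ or $w$ forces a connection of length $\ge L$, and swapping any single client from its optimal assignment inflates its connection cost from $O(1)$ to $\Theta(L)$.

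Next I would choose $c$ so that $L = n^c > \alpha \cdot (3b + 1)$, which is possible because $\alpha = \mbox{poly}(n)$. Under this calibration, the cost gap between the optimal solution and any structurally different one exceeds $\alpha \cdot \mbox{OPT}$, so every $\alpha$-approximate solution must have $F = \{u, w\}$ together with the correct client-to-facility assignment. By the output specification of the $k$-machine model, the machine $M_u$ hosting $u$ must know the complete list of client-IDs connecting to $u$. From this list it reads off $X$ (via which $u_i$'s are assigned to $u$) and $Y$ (via which $w_i$'s are assigned to $u$, noting $w_i$ connects to $u$ iff $Y_i = 0$). Hence $M_u$ knows both $X$ and $Y$ at termination, and Lemma~\ref{lem:entropymessage} implies that $M_u$ receives $\Omega(b)$ bits in expectation. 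Since at most $O(k \log n)$ bits can be delivered to any single machine per round, this yields the claimed $\tilde{\Omega}(n/k)$ round lower bound.

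I expect the main obstacle to be the shortest-path case analysis: I have to rule out that long detours through other spokes $(u_j, w_j)$ create a significantly cheaper $u$-to-$w_i$ path (and symmetric variants), and I must confirm that no ``hybrid'' solution---for instance, opening some intermediate $u_j$ or $w_j$ as a third facility to shave off connection cost---can sit within an $\alpha$ factor of the optimum. Both are resolved by taking $L$ and the non-terminal facility costs to be the same large polynomial in $n$, but they require careful per-case verification so that the $c$-versus-$\alpha$ calibration is uniform over all polynomial approximation factors.
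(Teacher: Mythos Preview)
Your approach is essentially the paper's: the same facility-cost assignment ($f(u)=f(w)=0$, all others $L$), the same cost gap forcing $F=\{u,w\}$ in any $\alpha$-approximate solution, and the same invocation of Lemma~\ref{lem:entropymessage} once the host of $u$ reads off $Y$ from its client list. One small simplification: since the machine hosting $u$ already knows $X$ from the input edge weights on $\{u,u_i\}$, you only need to argue it learns $Y$ from the output---your step of also recovering $X$ from the $u_i$-assignments is correct but unnecessary.
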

\begin{proof}
  To prove the lemma we consider the family of lower bound graphs \(F_b(X, Y)\) with the additional property that the vertices \(u\) and \(w\) have facility opening cost \(0\) and every other vertex has opening cost \(L\). 

  Consider the solution \(\mathcal{S}\) to Facility Location where we open the vertices \(u\) and \(w\) and connect all other vertices to the closest open facility.
  The cost of this solution is \(O(n)\) whereas any other solution will incur a cost of at least \(\Omega(L)\). By our choice of \(L\), the solution \(\mathcal{S}\) is optimal and any \(\alpha\)-approximate solution is forced to have the same form as \(\mathcal{S}\).

  After the facility location algorithm terminates, with probability \(1 - \epsilon\), the machine \(p\) hosting \(u\) will know the ID's of the \(w_i\)'s that \(u\) serves in \(\mathcal{S}\). This allows \(u\) to figure out \(Y\) because \(Y_i = 0\) if \(u\) serves \(w_i\) and \(Y_i = 1\) otherwise. By Lemma \ref{lem:entropymessage}, \(p\) receives \(\Omega(b)\) bit messages in expectation throughout the course of the algorithm. This implies an \(\tilde{\Omega}(n/k)\) lower bound on the expected round complexity.
\end{proof}

\begin{Lemma}
For any $1 \le \alpha \le \mbox{poly}(n)$, every public-coin $\epsilon$ error randomized protocol on a $k$-machine network that computes a $\alpha$-factor approximate solution of $\pmedian$ and $\pcenter$ on an $n$-vertices input graph has an expected round complexity of $\tilde{\Omega}(n/k)$.
\end{Lemma}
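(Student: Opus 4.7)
The plan is to reuse the lower bound family $F_b(X, Y)$ from Lemma \ref{lem:lbfl}, now with the problem parameter $p = 2$ and the weight parameter $L = n^c$ for a constant $c = c(\alpha)$ chosen large enough. The high-level idea is to argue that any $\alpha$-approximate solution must cluster the graph into a ``$u$-side'' and a ``$w$-side'' in a way that reveals both $X$ and $Y$ to one machine, after which Lemma \ref{lem:entropymessage} supplies the communication lower bound.

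First I would set $C_u := \{u\} \cup \{u_i : X_i = 1\} \cup \{w_i : Y_i = 0\}$ and $C_w := V \setminus C_u$. A short inspection of the three possible weight patterns $(1,1,L)$, $(L,1,1)$, $(1,L,1)$ shows that every vertex of $C_u$ sits within distance $2$ of $u$ and every vertex of $C_w$ within distance $2$ of $w$, while $d(u,w) = L + 2$; so the two clusters have constant internal diameter and are separated by distance at least $L - 2$. A Chernoff bound on the marginals of $(X, Y)$ (each index is in case $(1,0)$, $(0,1)$, or $(1,1)$ uniformly) gives $|C_u|, |C_w| = \Omega(b)$ w.h.p.

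Next I would observe that opening $\{u, w\}$ is feasible with \pmedian\ cost $O(b)$ and \pcenter\ cost $O(1)$, and then argue the crucial structural step: any $\alpha$-approximate solution $\{a_1, a_2\}$ must place one facility in $C_u$ and one in $C_w$. If instead both lie in (say) $C_u$, every client in $C_w$ pays at least $L - 4$, inflating the \pmedian\ objective to $\Omega(Lb)$ and the \pcenter\ objective to $\Omega(L)$; taking $c$ large enough then forces each ratio to exceed $\alpha$, a contradiction. I expect this structural step to be the main obstacle, since unlike in Lemma \ref{lem:lbfl} we have no facility opening costs at our disposal, and the argument for pinning down the shape of the output has to come entirely from the metric geometry of $F_b(X, Y)$.

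Once one facility $a \in C_u$ is guaranteed, the constant diameter of $C_u$ versus the $\Omega(L)$ inter-cluster distance means every vertex in $C_u$ is assigned to $a$ by the nearest-facility rule. Hence the machine hosting $a$ reads off the entire set $C_u$ from the $(x, \texttt{ID}_x)$ pairs in its output, and can decode $X_i$ from whether $u_i \in C_u$ and $Y_i$ from whether $w_i \in C_u$, thereby recovering both $X$ and $Y$. Applying Lemma \ref{lem:entropymessage} to this machine forces it to receive $\Omega(b)$ bits of messages in expectation, and since each message carries $O(\log n)$ bits in the $k$-machine model, this yields the desired $\tilde{\Omega}(n/k)$ lower bound on the expected round complexity.
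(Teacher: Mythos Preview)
Your proposal is correct and follows essentially the same strategy as the paper: take $p=2$ on the family $F_b(X,Y)$, argue that any $\alpha$-approximate solution must place one facility on the $u$-side and one on the $w$-side so that the machine hosting a facility learns both $X$ and $Y$, and then invoke Lemma~\ref{lem:entropymessage}. Your treatment of the structural step is in fact a bit more careful than the paper's, which only enumerates four specific facility pairs rather than defining $C_u$ and $C_w$ and arguing for an arbitrary facility in $C_u$; your version handles, e.g., the case where a $w_i$ with $Y_i=0$ is opened, which the paper's enumeration glosses over.
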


\begin{proof}
  We show the lower bound for \(p=2\) on graphs that come from the family \(F_b(X, Y)\). An optimal solution in a graph from this family is to open \(u\) and \(w\) which gives a solution of cost \(O(n)\) for \(\pmedian\) and \(O(1)\) for \(\pcenter\). But, we need to be a bit more careful because the \(\pmedian\) or \(\pcenter\) algorithms can choose to open some of the \(u_i\)'s and \(w_j\)'s instead of \(u\) and \(w\) with only a constant factor increase in the cost of the solution. More specifically, there are four possible cases where we can open different pairs of vertices to get an \(O(1)\)-approximate solution -- \((u, w)\), \((u_i, w)\), \((u_i, w_j)\), and \((u, w_j)\) where \(u_i\) and \(w_j\) are connected by an edge of weight \(1\) to \(u\) and \(w\) respectively. In all these cases, the opened vertices know both \(X\) and \(Y\) at the end of the algorithm by virtue of knowing the vertices that it serves in the final solution. This is because the value of \(L\) is high enough to ensure that the two clusters formed in any \(\alpha\)-approximate solution are the same as the optimal solution no matter what centers are chosen. Therefore, we can apply lemma \ref{lem:entropymessage} to all these cases which gives us that the machine hosting one of these vertices will receive \(\Omega(b)\) bit messages in expectation during the course of the algorithm. This means that the expected round complexity for both the algorithms is \(\tilde{\Omega}(n/k)\).
\end{proof}

 \section{Technical Preliminaries}
\label{section:preliminaries}

Since the input metric is only implicitly provided, as an edge-weighted graph, computing shortest path distances to learn
parts of the metric space turns out to be a key element of our algorithms.
The \textit{Single Source Shortest Path (SSSP)} problem has been considered in the $k$-machine model in Klauck et al.~\cite{KlauckNPRSODA15}
and they describe a $(1+\epsilon)$-approximation algorithm that runs in the $k$-machine model in 
$\to(n/\sqrt{k})$ rounds.
This is too slow for our purpose, since we are looking for an overall running time of $\to(n/k)$.
We instead turn to a recent result of Becker at al.~\cite{BeckerKKLdisc17} and using this we can easily obtain an $\to(n/k)$-round SSSP
algorithm.
Becker et al.~do not work in the $k$-machine model; their result relevant to us is in the 
\textit{Broadcast Congested Clique} model.
Informally speaking, the \textit{Congested Clique} model can be thought of as a 
special case of the $k$-machine model with $k = n$.
The \textit{Broadcast Congested Clique} model imposes the additional restriction on communication
that in each round each machine sends the \textit{same} message (i.e., broadcasts) to the 
remaining $n-1$ machines.
We now provide a brief description of the Congested Clique models.
The \textit{Congested Clique} model consists of $n$ nodes (i.e., computational entities)
connected by a clique communication network. Communication is point-to-point via
message passing and each message can be at most $O(\log n)$ bits in length.
Computation proceeds in synchronous rounds and in each round, each node performs
local computations and sends a (possibly different) message to each of the other
$n-1$ nodes in the network.
For graph problems, the input is assumed to be a spanning subgraph of the underlying
clique network and each node is initially aware of the incident edges in the input.
The \textit{Broadcast Congested Clique} model differs from the Congested Clique model
only in that in each round, each node is required to send the same message to the
remaining $n-1$ nodes.
For more details on the Congested Clique models, see \cite{HegemanPPSSPODC15,DruckerKOPODC2014}.

\begin{theorem} \textbf{(Becker et al.~\cite{BeckerKKLdisc17})}
For any $0 < \epsilon \le  1$, in the Broadcast Congested Clique model, a deterministic $(1 + \epsilon)$-approximation to 
the SSSP problem in undirected graphs with non-negative edge-weights can be computed in 
	$\mbox{poly}\,(\log n)/\mbox{poly}\,(\epsilon)$ rounds. 
\end{theorem}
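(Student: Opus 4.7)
The plan is to reduce the $(1+\epsilon)$-approximate SSSP problem to approximating the optimal value of the dual of the uncapacitated min-cost transshipment LP in which the source $s$ has supply $n-1$ and every other vertex has unit demand. The optimal dual potentials of this LP are precisely the shortest-path distances from $s$, so a $(1+\epsilon)$-approximate dual solution immediately yields $(1+\epsilon)$-approximate distances. This reduction is attractive because the transshipment dual is a convex program that is amenable to first-order methods, and each gradient evaluation can be implemented with a crude (say, $\mbox{polylog}(n)$-approximate) shortest-path oracle rather than the exact one we are ultimately trying to build.

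Next, I would solve the dual by accelerated gradient descent on an appropriately smoothed version of the objective. A standard analysis gives that $\mbox{poly}(\log n)/\mbox{poly}(\epsilon)$ iterations suffice, provided each iteration can compute a gradient whose relative error is bounded by an $O(\mbox{polylog}\,n)$ factor. Concretely, each iteration reduces to finding an $O(\mbox{polylog}\,n)$-approximate shortest-path tree from a designated source in the current residual/reweighted graph, followed by a broadcast of scalar updates to the dual variables. Since the Broadcast Congested Clique lets every machine send one $O(\log n)$-bit message to all others per round, disseminating the one-dimensional step size and the global scalar updates across iterations costs only $O(1)$ broadcast rounds per iteration.

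The remaining ingredient is the crude approximate-SSSP subroutine, which I would build from a hop-set: a set of shortcut edges added to $G$ such that $\mbox{polylog}(n)$-hop shortest paths in the augmented graph approximate the true distances within a small factor. One can construct such a hop-set in the Broadcast Congested Clique by Thorup--Zwick-style sampling, where each vertex independently samples its inclusion into a logarithmic hierarchy of pivots and then broadcasts its pivot identities and distances to nearby pivots; iteratively alternating between sampling and localized relaxation yields the desired hop-set in $\mbox{poly}(\log n)$ rounds. Once the hop-set is in place, an approximate SSSP tree is obtained by running $\mbox{polylog}(n)$ rounds of Bellman--Ford on the augmented graph, where each relaxation round requires only one broadcast per vertex of its current tentative distance.

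The main obstacle will be keeping the per-round communication within the broadcast bandwidth. Gradient descent iterates naturally want to exchange $\Theta(n)$-dimensional vectors, and a naive implementation would blow up the round count by a factor of $n$. The resolution is to observe that only aggregate statistics (a step size, an objective estimate, and a small number of scalar moments of the current gradient) need to be globally known between iterations; all per-vertex information can stay local to the vertex's host. Combining the $\mbox{poly}(\log n)/\mbox{poly}(\epsilon)$ gradient iterations, each using a $\mbox{poly}(\log n)$-round hop-set-based SSSP subroutine, gives the claimed $\mbox{poly}(\log n)/\mbox{poly}(\epsilon)$ round complexity, and the whole procedure is deterministic once the hop-set is built with a deterministic Thorup--Zwick-style construction.
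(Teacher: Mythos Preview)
The paper does not prove this theorem at all: it is quoted verbatim as a result of Becker et al.\ and used as a black box, with the subsequent Corollary~\ref{theorem:SSSP} simply lifting it to the $k$-machine model via the Conversion Theorem of Klauck et al. There is therefore no ``paper's own proof'' to compare your proposal against.

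That said, your sketch is broadly in the spirit of the actual Becker et al.\ argument (reduce approximate SSSP to approximate undirected transshipment, solve the latter by an iterative first-order method whose oracle calls are crude approximate SSSP computations, and implement those via hop sets plus bounded-hop Bellman--Ford). A few points in your outline would not survive as written, however. First, the iterative solver in Becker et al.\ is Sherman's $\ell_\infty/\ell_1$ framework with a spanner-based preconditioner, not vanilla accelerated gradient descent on a smoothed objective; the preconditioner is what keeps the iteration count $\mbox{poly}(\log n)/\mbox{poly}(\epsilon)$, and you have not identified what plays that role. Second, your claim that ``only aggregate statistics\ldots need to be globally known between iterations'' is where the real work lies in the broadcast model: each iteration updates an $n$-dimensional potential vector, and showing that every vertex can maintain its own coordinate with only $O(\mbox{polylog}\,n)$ broadcast per round is nontrivial and depends on the specific structure of the Sherman update and the spanner. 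Third, obtaining a \emph{deterministic} hop-set construction in the Broadcast Congested Clique is delicate; the Thorup--Zwick sampling you invoke is randomized, and derandomizing it within the stated bounds requires additional ideas that you have not supplied.
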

It is easy to see that any Broadcast Congested Clique algorithm that runs in $T$ rounds can be simulated in the $k$-machine model in
$T \cdot \to(n/k)$ rounds. 
A more general version of this claim is proved in Klauck et al.~in the Conversion Theorem (Theorem 4.1 \cite{KlauckNPRSODA15}).
This leads to the following result about the SSSP problem in the $k$-machine model.
\begin{Corollary} 
\label{theorem:SSSP}
For any $0 < \epsilon \le  1$, there is a deterministic $(1 + \epsilon)$-approximation algorithm in the $k$-machine model for solving 
the SSSP problem in undirected graphs with non-negative edge-weights in
	$O((n/k) \cdot \mbox{poly}(\log n)/\mbox{poly}(\epsilon))$ rounds.
\end{Corollary}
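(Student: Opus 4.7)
The plan is to obtain the corollary as a direct simulation of the Becker et al.~Broadcast Congested Clique (BCC) algorithm in the $k$-machine model, invoking the Conversion Theorem (Theorem 4.1) of Klauck et al.~\cite{KlauckNPRSODA15}. Concretely, I would first observe that the Becker et al.~algorithm runs in $T = \mathrm{poly}(\log n)/\mathrm{poly}(\epsilon)$ rounds in the BCC model on $n$ nodes, and then show that each BCC round can be simulated in the $k$-machine model in $\tilde{O}(n/k)$ rounds, yielding the stated bound $O((n/k)\cdot \mathrm{poly}(\log n)/\mathrm{poly}(\epsilon))$.

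The simulation goes as follows. Each $k$-machine machine $m_j$ is responsible for simulating the BCC behaviour of every vertex $v \in H(m_j)$. Since the random partition assumption gives $|H(m_j)| = \tilde{O}(n/k)$ w.h.p., and since the initial BCC input for vertex $v$ consists only of its incident edges (which are already stored at $m_j$ by the graph-metric assumption), each $m_j$ has exactly the local state required to run the BCC algorithm on behalf of its hosted vertices. To simulate one BCC round, each $m_j$ computes, for each $v \in H(m_j)$, the $O(\log n)$-bit message that $v$ would broadcast, and must deliver each such message to every other machine. Because a broadcast message is identical across recipients, $m_j$ sends the same $\tilde{O}(n/k)$ messages along each of its $k-1$ outgoing links; each link has bandwidth $O(\log n)$ per round, so this takes $\tilde{O}(n/k)$ rounds per simulated round in total. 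Summing over the $T$ BCC rounds yields the claimed bound.

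The output format also matches what is needed: the Becker et al.~algorithm terminates with each vertex $v$ knowing a $(1+\epsilon)$-approximate distance from the source, and in the simulation this value is known to $m_j$ where $v \in H(m_j)$. Determinism is preserved since the simulation is itself deterministic given the random initial partition, and the BCC algorithm is deterministic.

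The only nontrivial step is the per-round simulation cost, and since this is exactly the content of Klauck et al.'s Conversion Theorem specialised to the broadcast setting, the proof essentially reduces to citing that theorem. I anticipate no real obstacle beyond verifying that (i) the BCC model assumed by Becker et al.~matches the one in which the Conversion Theorem applies, and (ii) the per-vertex load $\tilde{O}(n/k)$ from the random partition (rather than a worst-case balanced partition) suffices, both of which are standard.
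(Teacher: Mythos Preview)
Your proposal is correct and follows essentially the same approach as the paper: the paper simply states that any Broadcast Congested Clique algorithm running in $T$ rounds can be simulated in the $k$-machine model in $T \cdot \tilde{O}(n/k)$ rounds, cites the Conversion Theorem of Klauck et al., and combines this with the Becker et al.\ bound. You have spelled out the per-round simulation in more detail than the paper does, but the argument is the same.
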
 

In addition to SSSP, our clustering algorithms require an efficient solution to a more general problem that we
call \textit{Multi-Source Shortest Paths} (in short, MSSP).
The input to MSSP is an edge-weighted graph $G = (V, E)$, with non-negative edge-weights, and a set $T \subseteq V$ of sources.
The output is required to be, for each vertex $v$, the distance $d(v, T)$ (i.e., $\min\{d(v, u) \mid u \in T\}$) 
and the vertex $v^* \in T$ that realizes this distance.
The following lemma uses ideas from Thorup \cite{Thorup2001} to show that MSSP can be reduced to a single call to SSSP 
and can be solved in an approximate sense in the $k$-machine model in $\to(n/k)$ rounds. 

\begin{Lemma} \label{lemma:MSSP}
Given a set \(T \subseteq V\) of sources known to the machines (i.e., each machine $m_j$ knows $T \cap H(m_j)$), we can, for any value $0 \le \epsilon \le 1$,
compute a $(1 + \epsilon)$-approximation to MSSP in $\to(1/\mbox{poly}(\epsilon) \cdot n/k)$ rounds, w.h.p.
Specifically, after the algorithm has ended, for each $v \in V \setminus T$, the machine $m_j$ that hosts $v$ knows a pair
$(u, \tilde{d}) \in T \times \mathbb{R}^+$, such that
	\(d(v, u) \le \tilde{d} \le (1+\epsilon) \cdot d(v, T)\).
\end{Lemma}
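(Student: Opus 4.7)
The plan is to use the standard super-source trick attributed to Thorup. I construct an augmented graph $G'$ from $G$ by adding one new vertex $s^*$ together with a zero-weight edge $(s^*, u)$ for every $u \in T$. For any $v \in V$,
\[ d_{G'}(s^*, v) \;=\; \min_{u \in T}\bigl(0 + d(u,v)\bigr) \;=\; d(v, T), \]
and the $T$-neighbor of $s^*$ on any approximate shortest $s^*$-to-$v$ path in $G'$ is a witness source realizing that distance. Thus MSSP on $G$ reduces to a single SSSP call from $s^*$ on $G'$, to which Corollary \ref{theorem:SSSP} can be applied.

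The main implementation issue is that $s^*$ has degree as large as $|T| = \Theta(n)$, so naively materializing its edges at one machine would blow past the per-machine memory and bandwidth budget. I avoid ever instantiating these edges. The sole role of $s^*$ in the SSSP computation is the very first relaxation, which sets the tentative distance of each $u \in T$ to zero; afterwards $s^*$ has no remaining neighbors to relax from and contributes nothing further. Because each machine $m_j$ already knows $T \cap H(m_j)$, this initialization is purely local: each machine independently sets $\tilde d(u) := 0$ and $\text{src}(u) := u$ for each source it hosts, and $\tilde d(v) := \infty$ otherwise. I then invoke the $(1+\epsilon)$-SSSP routine of Corollary \ref{theorem:SSSP} directly on $G$ with this multi-source initialization; the computation is identical to SSSP from $s^*$ on $G'$ because $s^*$'s sole contribution has already been absorbed into the initial distances. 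On top of every distance-relaxation message I piggyback an $O(\log n)$-bit source identifier: whenever vertex $v$ reduces its estimate via a message from a neighbor $w$, it also copies $\text{src}(v) := \text{src}(w)$. This blows up message sizes only by a constant factor and so preserves the $\to(n/k)/\mbox{poly}(\epsilon)$ round complexity. At termination, each machine holds $(\text{src}(v), \tilde d(v))$ for every $v \in H(m_j) \setminus T$, and the guarantee $d(v, \text{src}(v)) \le \tilde d(v) \le (1+\epsilon)\, d(v, T)$ follows because $\tilde d(v)$ is simultaneously the length of an actual $v$-to-$\text{src}(v)$ walk tracked by the algorithm and a $(1+\epsilon)$-approximation of $d_{G'}(s^*, v) = d(v, T)$.

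The main obstacle, in my view, is precisely the handling of the virtual super-source under the $k$-machine model's bandwidth restriction; the resolution rests on the observation that $s^*$'s role collapses into a distributed initialization step that never requires any actual communication. A secondary concern is whether the Becker et al.\ SSSP algorithm supports both multi-source initialization and source-ID propagation, but since that algorithm is built from standard distance-relaxation primitives (hop-set construction followed by approximate Bellman--Ford-style iterations), both extensions slot in without any asymptotic overhead.
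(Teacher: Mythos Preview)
Your super-source reduction is exactly the one the paper uses. Where you diverge is in the implementation, and there the proposal has a real weak spot.

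First, a minor point: your claim that materializing $s^*$ and its edges would ``blow past the per-machine memory and bandwidth budget'' is not correct in this model. The paper simply hosts $s^*$ at an arbitrary machine; since each machine $m_j$ already knows $T\cap H(m_j)$, it sends those (at most $\tilde O(n/k)$) vertex IDs to the host of $s^*$, which takes $\tilde O(n/k)$ rounds. So the paper just runs the black-box SSSP of Corollary~\ref{theorem:SSSP} on the augmented graph with no tricks.

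Second, and more substantively, the paper does \emph{not} recover the witness $u\in T$ by piggybacking source identifiers on relaxation messages. Instead it uses the fact that the Becker et al.\ algorithm can also output an approximate shortest-path \emph{tree} (this is an explicit guarantee, cited as Section~2.3 of their paper). Since a tree has $O(n)$ edges, all machines broadcast the tree in $\tilde O(n/k)$ rounds so that every machine holds the full tree locally; each machine then reads off, for each $v\in H(m_j)$, the $T$-vertex at the root of $v$'s tree path.

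Your source-ID piggybacking rests on the assertion that the Becker et al.\ routine is ``built from standard distance-relaxation primitives'' so that tagging each relaxation with a source ID is free. That is the gap. The algorithm is hopset-based: it first constructs auxiliary shortcut edges and then does bounded-hop approximate Bellman--Ford on the augmented graph. A relaxation along a hopset edge $(x,y)$ does not correspond to a single original edge and carries no canonical ``source of $x$'s current estimate'' that you can copy to $y$ without having threaded such labels through the hopset construction itself. This may well be doable, but it is not the two-line observation you present; the paper's tree-broadcast approach sidesteps the issue entirely by using only the black-box output guarantees of Becker et al.
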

\begin{proof}
	  First, as in \cite{Thorup2001}, we add a dummy source vertex $s$, and connecting $s$ to each vertex $u \in T$ by $0$-weight edges. The shortest path distance from $s$ to any other vertex $v \in V$, is same as $d(v, T)$ in the original graph. 
This dummy source can be hosted by an arbitrary machine and the edge information can be exchanged in \(\to(n/k)\) rounds 

	     Using Theorem \ref{theorem:SSSP}, we can compute approximate shortest path distance $\tilde{d}$ that satisfies the first property of the lemma, in $\tilde{O}(n/k)$ rounds. By \cite{BeckerKKLdisc17} (Section 2.3) we can compute an approximate shortest path tree in addition to approximate distances in the Broadcast Congested Clique in $O(\mbox{poly}(\log n)/\mbox{poly}\,(\epsilon))$ rounds w.h.p.~and hence in the \(k\)-machine model in $\tilde{O}(1/\mbox{poly}(\epsilon) \cdot n/k)$ rounds w.h.p.

	Since a tree contains linear (in $n$) number of edges, all machines 
	can exchange this information in $\to(n/k)$ rounds so that every machine knows the computed approximate shortest path tree. Now, each machine $m_j$ can determine locally, for each vertex $v \in H(m_j)$ the vertex $u \in T$ which satisfies the properties stated in the lemma.
\end{proof}

Note that in the solution to MSSP, for each $v \in T$, $d(v, T) = 0$.
For our algorithms, we also need the solution to a variant of MSSP that we call \textsc{ExclusiveMSSP} in which
for each $v \in T$, we are required to output $d(v, T\setminus\{v\})$ and the vertex $u^* \in T \setminus \{v\}$ that realizes this
distance. 
The following lemma uses ideas from Thorup \cite{Thorup2001} to show that \textsc{ExclusiveMSSP} can 
be solved by making $O(\log n)$ calls to a subroutine that solves SSSP.
\begin{Lemma} \label{lemma:ExclusiveMSSP}
Given a set \(T \subseteq V\) of sources known to the machines (i.e., each machine $m_j$ knows $T \cap H(m_j)$), we can, for any value $0 \le \epsilon \le 1$,
compute a $(1 + \epsilon)$-approximation to \textsc{ExclusiveMSSP} in $\to(1/\mbox{poly}\,(\epsilon) \cdot n/k)$ rounds, w.h.p.
Specifically, after the algorithm has ended, for each $v \in T$, the machine $m_j$ that hosts $v$ knows a pair
	$(u, \tilde{d}) \in T \setminus \{v\} \times \mathbb{R}^+$, such that $d(v, u) \le \tilde{d} \le (1+\epsilon) \cdot d(v, T \setminus \{v\})$.
\end{Lemma}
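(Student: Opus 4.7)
The plan is to reduce \textsc{ExclusiveMSSP} to $O(\log n)$ invocations of \textsc{MSSP} via a bit-indexed partition of the source set $T$, following the trick from Thorup \cite{Thorup2001}. Since all vertex IDs lie in $\{1, \ldots, n\}$, they can be encoded using $\lceil \log n \rceil$ bits. For each bit position $i \in \{1, \ldots, \lceil \log n\rceil\}$ and each value $b \in \{0,1\}$, define
\[
T_{i,b} = \{u \in T : \text{the $i$-th bit of ID}(u) \text{ is } b\}.
\]
Each machine $m_j$ can locally determine $T_{i,b} \cap H(m_j)$ from the IDs of the sources it hosts, so no communication is required to set up these partitions. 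I would then invoke Lemma \ref{lemma:MSSP} once per pair $(i,b)$, producing, for each vertex $v$, an approximate pair $(u^{i,b}_v, \tilde{d}^{i,b}_v)$ with $d(v, T_{i,b}) \le \tilde{d}^{i,b}_v \le (1+\epsilon)\, d(v, T_{i,b})$.

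For the output at $v \in T$, the machine hosting $v$ considers only those $(i,b)$ with $v \notin T_{i,b}$, i.e., those where the $i$-th bit of $\mathrm{ID}(v)$ equals $1-b$, and returns the pair achieving the smallest $\tilde{d}^{i,b}_v$ among these. Correctness follows from the standard bit-differ argument: let $u^\star \in T\setminus\{v\}$ be a true nearest neighbor. Since $\mathrm{ID}(v) \ne \mathrm{ID}(u^\star)$, there is a bit position $i$ on which they disagree; take $b$ to be the $i$-th bit of $\mathrm{ID}(u^\star)$. Then $v \notin T_{i,b}$ but $u^\star \in T_{i,b}$, so
\[
\tilde{d}^{i,b}_v \le (1+\epsilon)\, d(v, T_{i,b}) \le (1+\epsilon)\, d(v, u^\star) = (1+\epsilon)\, d(v, T\setminus\{v\}).
\]
Conversely, every candidate pair used in the minimum comes from a source set $T_{i,b} \subseteq T\setminus\{v\}$, so $\tilde{d}^{i,b}_v \ge d(v, T_{i,b}) \ge d(v, T\setminus\{v\})$. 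Hence the returned distance lies in $[d(v, T\setminus\{v\}),\, (1+\epsilon)\, d(v, T\setminus\{v\})]$, and the returned vertex realizes that distance.

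For the round complexity, there are $2\lceil \log n \rceil = O(\log n)$ invocations of Lemma \ref{lemma:MSSP}, each costing $\tilde{O}(1/\mathrm{poly}(\epsilon) \cdot n/k)$ rounds w.h.p.; the extra logarithmic factor is absorbed into the $\tilde{O}(\cdot)$ notation, yielding the claimed bound. The only step that needs any care is ensuring that the minimization at $v$ uses exclusively the runs for which $v \notin T_{i,b}$, which is a purely local decision based on $\mathrm{ID}(v)$; no additional communication beyond the $O(\log n)$ MSSP calls is required. The main conceptual step, and the only non-routine one, is the bit-partition argument above that guarantees every $v \in T$ is excluded from at least one source set that still contains a true nearest neighbor.
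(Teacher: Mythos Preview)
Your proof is correct and follows essentially the same approach as the paper: both reduce \textsc{ExclusiveMSSP} to $O(\log n)$ calls to \textsc{MSSP} via the bit-partition trick from Thorup, taking the minimum over those source sets $T_{i,b}$ that exclude $v$. The only cosmetic difference is that the paper assigns fresh $\log|T|$-bit labels to the vertices of $T$ whereas you use the existing $\lceil\log n\rceil$-bit vertex IDs; both yield $O(\log n)$ MSSP invocations and the same guarantees.
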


\begin{proof}
  Breaking ties by machine ID, each vertex in \(T\) is assigned a \(\log |T|\) size bit vector. We create \(2\log |T|\) subsets of \(T\) by making two sets \(T_i^0\) and \(T_i^1\) for each bit position \(i\). The set \(T_i^b\) contains vertices whose \(i^{th}\) bit value is \(b\). Note that for all pairs of vertices \(v, w\), there is at least one set \(T_i^b\) such that \(v \in T_i^b\) and \(w \notin T_i^b\). Now we run an MSSP algorithm for each \(T_i^b\) using lemma \ref{lemma:MSSP}. Now for each vertex \(v \in T\) \(\tilde{d}\) is the smallest \(d(v, T_i^b)\) such that \(v \notin T_i^b\) and the vertex \(u\) is an arbitrary vertex that realizes the distance \(\tilde{d}\).

\end{proof}
 \section{Facility Location in \(\tilde{O}(n/k)\) rounds}
\label{section:facilityLocation}

At the heart of our $k$-machine algorithm for \facloc\ is the well-known sequential algorithm of Mettu and Plaxton \cite{MettuPlaxtonSICOMP2003}, that computes a 
3-approximation for \facloc.
To describe the Mettu-Plaxton algorithm (henceforth, MP algorithm), we need some notation. For each real $r \ge 0$ and vertex $v$, define the ``ball'' $B(v, r)$ as
the set $\{u \in V \mid d(v, u) \le r\}$.
For each vertex $v \in V$, we define a \textit{radius} $r_v$ as the solution
$r$ to the equation $f_v = \sum_{u \in B(v, r)} (r - d(v, u))$.  Figure \ref{fig:rv} illustrates the definition of $r_v$ (note that
$r_v$ is well-defined for every vertex $v$).
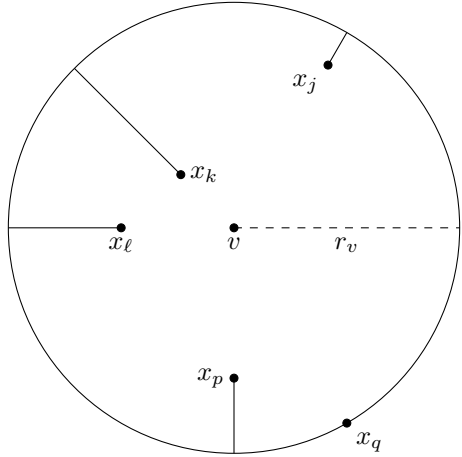
\begin{figure}
\begin{center}
\begin{tikzpicture}[scale=1,auto,swap]
    \draw (0,0) circle (3cm);
    \draw (0,0) coordinate (i);
    \draw (60:25mm) coordinate (v);
    \draw (135:1cm) coordinate (w);
    \draw (180:15mm) coordinate (x);
    \draw (270:2cm) coordinate (y);
    \draw (300:3cm) coordinate (z);
    \path[draw,dashed,-] (i) -- node[below] {$r_v$} (0:3cm);
    \draw (v) -- (60:3cm);
    \draw (w) -- (135:3cm);
    \draw (x) -- (180:3cm);
    \draw (y) -- (270:3cm);
    \node[circle,draw,fill=black!100,inner sep=0pt,minimum width=3pt] at (i) {};
    \node[below] at (i) {$v$};
    \node[circle,draw,fill=black!100,inner sep=0pt,minimum width=3pt] at (v) {};
    \node[below left] at (v) {$x_j$};
    \node[circle,draw,fill=black!100,inner sep=0pt,minimum width=3pt] at (w) {};
    \node[right] at (w) {$x_k$};
    \node[circle,draw,fill=black!100,inner sep=0pt,minimum width=3pt] at (x) {};
    \node[below] at (x) {$x_\ell$};
    \node[circle,draw,fill=black!100,inner sep=0pt,minimum width=3pt] at (y) {};
    \node[left] at (y) {$x_p$};
    \node[circle,draw,fill=black!100,inner sep=0pt,minimum width=3pt] at (z) {};
    \node[below right] at (z) {$x_q$};
\end{tikzpicture}
\end{center}
        \caption{This illustration, which originally appeared in \cite{HegemanPemmarajuDC2015}, shows
        $B(v, r_v)$, the radius-$r_v$ ball centered at
        $v$. If we imagine the ball $B(v, r)$ growing with increasing $r$ and we reach a
        stage at which $r = r_v$, then the sum
        of the 5 distances, denoted by solid line segments from points within the
        ball to the ball-boundary equals $f_v$.}
\label{fig:rv}
\end{figure}

The MP algorithm is the following simple, 2-phase, greedy algorithm:
\RestyleAlgo{boxruled}
\begin{algorithm2e}\caption{\textsc{MP} Algorithm\label{alg:MP}}
	\textbf{Radius Computation Phase.} For each vertex $v \in V$, compute $r_v$.\\
	\textbf{Greedy Phase.} Consider vertices $v \in V$ in non-decreasing order of radii $r_v$.  Starting with \(S = \emptyset\), add $v$ to $S$ if $d(v, S) > 2r_v$.
\end{algorithm2e}

We will work with a slight variant of the MP algorithm, called MP-$\beta$ in \cite{ArcherRSESA2003}.
The only difference between the MP algorithm and the MP-$\beta$ algorithm is in the definition of each radius $r_v$, which is 
defined for the MP-$\beta$ algorithm, as the value $r$ satisfying $\beta \cdot f_v = \sum_{u \in B(v, r)} (r - d(v, u))$.
(Thus, the MP-$\beta$ algorithm with $\beta = 1$ is just the MP algorithm.)

There are two challenges to implementing the MP-$\beta$ algorithm efficiently in the $k$-machine model (and more generally in a distributed or 
parallel setting): (i) The calculation of the radius $r_v$ by the machine hosting vertex $v$ requires that the machine know distances
$\{d(v, u)\}_{u \in V}$; however the distance metric is initially unknown and is too costly to fully calculate, and (ii) the Greedy Phase
seems inherently sequential because it considers vertices one-by-one in non-decreasing order of radii; implementing this algorithm as-is would be too slow.
In the next three sections, we describe how to overcome these challenges and we end the section
with a complete description of our \facloc\ algorithm in the $k$-machine model.

\subsection{Reducing Radius Computation to Neighborhood-Size Computation}
To deal with the challenge of computing radii efficiently, without full knowledge of the metric, we use Thorup's approach \cite{Thorup2001}. 
Thorup works in the sequential setting, but like us, he assumes that the distance metric is implicitly specified via an edge-weighted graph.
He shows that it is possible to implement the MP algorithm in $\to(m)$ time on an $m$-edge graph.
In other words, it is possible to implement the MP algorithm without computing the 
full distance metric (e.g., by solving the \textit{All Pairs Shortest Path (APSP)} problem).
We now show how to translate Thorup's ideas into the $k$-machine model.
(We note here that Thorup's ideas for the \facloc\ problem have already been used to design algorithms in ``Pregel-like'' distributed systems \cite{GarimellaDGSCIKM2015}.)

For some $\epsilon > 0$, we start by discretizing the range of possible radii values using non-negative integer powers of $(1 + \epsilon)$.\footnote{Without loss of generality we assume that all numbers in the input,
i.e., $\{f_v\}_{v \in V}$ and $d(u, v)_{u, v \in V}$, are all at least 1. $O(1)$ rounds of preprocessing suffices
to normalize the input to satisfy this property. This guarantees that the minimum radius $r_v \ge 1$.}
For any vertex $v$ and for any integer $i \ge 1$, let $q_i(v)$ denote $|B(v, (1+\epsilon)^i)|$, the size of the neighborhood of 
$v$ within distance $(1+\epsilon)^i$.
Further, let $\alpha(v, r)$ denote the sum $\sum_{u \in B(v, r)} (r - d(v, u))$.
Now note that if $r$ increases from $(1+\epsilon)^i$ to $(1+\epsilon)^{i+1}$, then $\alpha(v, r)$ increases by at least 
$q_i(v) \cdot ((1+\epsilon)^{i+1} - (1+\epsilon)^i)$. 
This implies that $\sum_{i = 0}^{t-1} q_i(v) \cdot ((1+\epsilon)^{i+1} - (1+\epsilon)^i)$ is a lower bound on $\alpha(v, (1+\epsilon)^t)$.
This observation suggests that we might be able to use, as an approximation to $r_v$, the smallest value $(1+\epsilon)^{t-1}$
for which this lower bound on $\alpha(v, (1+\epsilon)^t)$ exceeds $f_v$.
Denote by $\tilde{r}_v$, this approximation of $r_v$.
In other words, $\tilde{r}_v := (1+\epsilon)^{t-1}$, where $t \ge 1$ is the smallest integer such that 
$\sum_{i = 0}^{t-1} q_i(v) \cdot ((1+\epsilon)^{i+1} - (1+\epsilon)^i) > f_v$.
It is not hard to show that $\tilde{r}_v$ is a good approximation to $r_v$ in the following
sense.
\begin{Lemma}
\label{lemma:approxRadii1}
For all $v \in V$, $\frac{r_v}{1+\epsilon} \le \tilde{r}_v \le r_v(1+\epsilon)$.
\end{Lemma}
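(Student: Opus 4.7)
The plan is to treat $\alpha(v, r) := \sum_{u \in B(v, r)}(r - d(v, u))$ as a non-decreasing, continuous, piecewise-linear function of $r$ with $\alpha(v, r_v) = f_v$, and sandwich its value at the geometric grid points $(1+\epsilon)^t$ between two expressions built from the $q_i(v)$'s. Writing $\alpha(v, r) = \int_0^r |B(v, s)|\, ds$ and observing that $q_i(v) \le |B(v, s)| \le q_{i+1}(v)$ on each interval $s \in [(1+\epsilon)^i, (1+\epsilon)^{i+1}]$, integration and telescoping give the two-sided bound
\[
L_t \;\le\; \alpha(v, (1+\epsilon)^t) - \alpha(v, 1) \;\le\; \tfrac{1}{1+\epsilon}\bigl(L_{t+1} - q_0(v)\epsilon\bigr),
\]
where $L_t := \sum_{i=0}^{t-1} q_i(v)\bigl((1+\epsilon)^{i+1} - (1+\epsilon)^i\bigr)$ is exactly the sum appearing in the definition of $\tilde{r}_v$; the upper bound reindexes to the shown form because replacing $q_i$ by $q_{i+1}$ and factoring out $(1+\epsilon)$ produces the same summand shape as $L_{t+1}$.

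For the lower half of the lemma, I would combine the defining inequality $L_t > f_v$ with the left side of the sandwich to conclude $\alpha(v, (1+\epsilon)^t) > f_v = \alpha(v, r_v)$; monotonicity of $\alpha(v, \cdot)$ then forces $r_v < (1+\epsilon)^t = (1+\epsilon)\,\tilde{r}_v$, which is exactly $\tilde{r}_v > r_v/(1+\epsilon)$.

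For the upper half, I would apply the right side of the sandwich at index $t-2$ together with the minimality of $t$, which gives $L_{t-1} \le f_v$. This yields $\alpha(v, (1+\epsilon)^{t-2}) \le \alpha(v, 1) + L_{t-1}/(1+\epsilon) \le f_v$, where the last step absorbs the boundary term $\alpha(v, 1)$ using the footnote's normalization (all $f_v$ and all positive pairwise distances are $\ge 1$, so $B(v, 1) = \{v\}$ and $\alpha(v, 1)$ contributes only an $O(1)$ additive error that is swallowed by the slack between $L_{t-1}/(1+\epsilon)$ and $f_v$; if needed, one rescales $\epsilon \leftarrow \Theta(\epsilon)$ to absorb constants cleanly). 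Monotonicity then gives $(1+\epsilon)^{t-2} \le r_v$, i.e., $\tilde{r}_v = (1+\epsilon)^{t-1} \le (1+\epsilon)\, r_v$. The small-$t$ edge cases (for instance $t=1$, where $\tilde{r}_v = 1$) follow directly from $r_v \ge 1$.

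The main obstacle is the upper-bound direction: the telescoping naturally overestimates by a factor of $(1+\epsilon)$ because the derivative is bounded by the right endpoint of each dyadic interval rather than the left, so one must reindex carefully and verify that the $[0,1]$ boundary term is benign. Everything else is mechanical bookkeeping.
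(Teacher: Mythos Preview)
Your approach is essentially the paper's: they rewrite $L_t := \sum_{i=0}^{t-1} q_i(v)\bigl((1+\epsilon)^{i+1}-(1+\epsilon)^i\bigr)$ as $\sum_{u\in B(v,(1+\epsilon)^t)}\bigl((1+\epsilon)^t - d^{\uparrow}(v,u)\bigr)$ with $d^{\uparrow}$ the distance rounded up to a power of $(1+\epsilon)$, and then sandwich $\alpha(v,(1+\epsilon)^{t-1}) \le L_t \le \alpha(v,(1+\epsilon)^t)$; your integral identity $\alpha(v,r)=\int_0^r |B(v,s)|\,ds$ is just the continuous form of the same telescoping. Both then finish by monotonicity of $\alpha$ and the defining inequalities $L_{t-1}\le f_v < L_t$.

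Where your proposal actually fails is exactly the place you flag: the boundary term $\alpha(v,1)$ in the upper-bound direction. Your suggested fix does not work. There is no guaranteed ``slack between $L_{t-1}/(1+\epsilon)$ and $f_v$'': the only information is $L_{t-1}\le f_v$, which yields slack $f_v\cdot\epsilon/(1+\epsilon)$, and that absorbs an additive $+1$ only when $f_v \ge (1+\epsilon)/\epsilon$. Rescaling $\epsilon\leftarrow\Theta(\epsilon)$ changes multiplicative constants, not additive ones, so it cannot rescue this step either. Concretely, take $f_v=1$ and all other vertices at distance larger than any relevant threshold; then $r_v=1$, while $L_t=(1+\epsilon)^t-1$ first exceeds $f_v=1$ when $(1+\epsilon)^t>2$, giving $\tilde r_v$ arbitrarily close to $2$ for small $\epsilon$. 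So $\tilde r_v \le (1+\epsilon)r_v$ is violated. The paper's own proof has the identical slip: its displayed inequality $(1+\epsilon)\alpha(v,(1+\epsilon)^{t-1}) \le L_t$ fails on the $u=v$ summand by exactly $1$, since $d^{\uparrow}(v,v)=1$ but $(1+\epsilon)d(v,v)=0$. The additive $O(1)$ is immaterial downstream, but neither argument establishes the lemma as literally stated.
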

\begin{proof}
  The values \(\tilde{r}_v\) and \(r_v\) respectively depend on how \(\sum_{i = 0}^{t-1} q_i(v) \cdot ((1+\epsilon)^{i+1} - (1+\epsilon)^i)\) and \(\alpha(v, r_v) = \sum_{u \in B(v, r)} (r - d(v, u))\) relate to \(f_v\).

  Recall that $q_i(v) = |B(v, (1+\epsilon)^i)|$. Following calculations show that $\sum_{i = 0}^{t-1} q_i(v) \cdot ((1+\epsilon)^{i+1} - (1+\epsilon)^i)$ can be interpreted as $\sum_{u \in B(v, (1+\epsilon)^t)} ((1+\epsilon)^t - d^{\uparrow}(v, u))$ where $d^{\uparrow}(v, u)$ is $d(v, u)$ rounded up to nearest power of \((1 + \epsilon)\).

  \begin{align*}
	  \sum_{i = 0}^{t-1} q_i(v) \cdot ((1+\epsilon)^{i+1} - (1+\epsilon)^i) &= ((1+\epsilon)^t - 1) + \sum_{i = 1}^{t-1} \left[\left|B(v, (1+\epsilon)^i) \setminus B(v, (1+\epsilon)^{i-1})\right| \cdot \left((1+\epsilon)^{t} - (1+\epsilon)^i\right)\right] \\ 
    &= ((1+\epsilon)^t - 1) + \sum_{j=1}^{t}\sum_{u \in B(v, (1+\epsilon)^j) \setminus B(v, (1+\epsilon)^{j-1})} (1+\epsilon)^t - (1+\epsilon)^{j} \\
    &= \sum_{u \in B(v, (1+\epsilon)^t)} ((1+\epsilon)^t - d^{\uparrow}(v, u))
  \end{align*}

  Therefore, we can say that--
  \[(1 + \epsilon)\alpha(v, (1+\epsilon)^{t-1}) \le \sum_{i = 0}^{t-1} q_i(v) \cdot ((1+\epsilon)^{i+1} - (1+\epsilon)^i) \le \alpha(v, (1+\epsilon)^t)\] 

  Which implies --
  \[\alpha(v, (1+\epsilon)^{t-1}) \le \sum_{i = 0}^{t-1} q_i(v) \cdot ((1+\epsilon)^{i+1} - (1+\epsilon)^i) \le \alpha(v, (1+\epsilon)^t)\] 

  Note that by definition of $\tilde{r}_v$, if $\tilde{r}_v = (1+\epsilon)^{t-1}$ then $\sum_{i = 0}^{t-1} q_i(v) \cdot ((1+\epsilon)^{i+1} - (1+\epsilon)^i) > f_v$ and $\sum_{i = 0}^{t-2} q_i(v) \cdot ((1+\epsilon)^{i+1} - (1+\epsilon)^i) \le f_v$. Thus, there has to exist a value $r_v \in [(1 + \epsilon)^{t-2}, (1 + \epsilon)^t]$ such that $\alpha(v, r_v) = f_v$ and this is the \(r\)-value computed by the MP algorithm. Since $\tilde{r}_v = (1+\epsilon)^{t-1}$, the Lemma follows.
\end{proof}
From the definition of $\tilde{r}_v$ one can see that in order to compute these values, we only require knowledge of $q_i(v)$ for all $i \ge 0$,
rather than actual distances $d(v, u)$ for all $u \in V$.
We now state the high-level $k$-machine model algorithm (Algorithm \ref{alg:RC}) for computing $\tilde{r}_v$ values.
\RestyleAlgo{boxruled}  
\begin{algorithm2e}\caption{\textsc{RadiusComputation} Algorithm (Version 1)\label{alg:RC}}
	\textbf{Neighborhood-Size Computation.} Each machine $m_j$ computes $q_i(v)$, for all integers $i \ge 0$ and for all vertices $v \in H(m_j)$.\\
	\textbf{Local Computation.} Each machine $m_j$ computes $\tilde{r}_v$ locally, for all vertices $v \in H(m_j)$. (Recall that $\tilde{r}_v := (1+\epsilon)^{t-1}$ where $t \ge 1$ is the smallest integer for which $\sum_{i = 0}^t q_i(v) \cdot ((1+\epsilon)^{i+1} - (1+\epsilon)^i) > f_v$.)
\end{algorithm2e}

In Algorithm \ref{alg:RC}, step 2 is just local computation, so we focus on Step 1 which requires the solution to the problem of computing neighborhood sizes. More specifically,  
we define the problem \textsc{NbdSizeComputation} as follows:
given an edge-weighted graph, with non-negative edge weights, compute the size of $B(v, d)$ for each vertex $v$ and positive real $d$.
The output to the problem in the $k$-machine model is required to be a distributed data structure 
(distributed among the $k$ machines) such that
each machine $m_j$ can answer any query ``What is $|B(v, d)|$?'' for any $v \in H(m_j)$
and any positive real $d$, using local computation.
Note that a ``trivial'' way of solving \textsc{NbdSizeComputation} is to solve APSP, but as
mentioned earlier this is too costly.
In the next subsection we show how to solve a ``relaxed'' version of this problem in the 
$k$-machine model in $\to(n/k)$ rounds, making only $O(\mbox{poly}(\log n))$ calls to
a $k$-machine SSSP algorithm.

\subsection{Neighborhood-Size Estimation in the $k$-machine Model}
\label{section:nbdSizeEstimation}

To solve \textsc{NbdSizeComputation} efficiently in the $k$-machine model, 
we turn to an elegant idea due to Cohen \cite{Cohen1997,Cohen2015}.
Motivated by certain counting problems, Cohen \cite{Cohen1997} presents a ``size-estimation 
framework,'' a general randomized method in the sequential setting.
Cohen's algorithm starts by assigning to each vertex $v$ a rank \(\r(v)\) chosen uniformly from \([0, 1]\).
These ranks induce a random permutation of the vertices. 
To compute the size estimate of a neighborhood, say $B(v, d)$, for a vertex $v$ and real $d > 0$, Cohen's algorithm finds the
smallest rank of a vertex in $B(v, d)$.
It is then shown (in Section 6, \cite{Cohen1997}) that the expected value of the smallest rank in $B(v, d)$ is $1/(1 + |B(v, d)|)$.
Thus, in expectation, the reciprocal of the smallest rank in $B(v, d)$ is (almost) identical to $|B(v, d)|$.
To obtain a good estimate of $|B(v, d)|$ with high probability, Cohen simply repeats the above-described procedure
independently a bunch of times and shows the following concentration result on the 
average estimator.

\begin{theorem} \textbf{(Cohen \cite{Cohen1997})} \label{thm:cohen}
	Let $v$ be a vertex and $d > 0$ a real.
	For $1 \le i \le \ell$, let $R_i$ denote the smallest rank of a vertex in $B(v, d)$ obtained in the $i$-th repetition of
	Cohen's neighborhood-size estimation procedure. 
	Let $\hat{R}$ be the average of $R_1, R_2, \ldots, R_\ell$.
	Let $\mu = 1/(1 + |B(v, d)|)$. Then, for any $0 < \epsilon < 1$,
	$$\Pr(|\hat{R} - \mu| \ge \epsilon \mu) = \exp(-\Omega(\epsilon^2 \cdot \ell)).$$
\end{theorem}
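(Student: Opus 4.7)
The plan is to prove the stated concentration via a Chernoff argument tailored to the specific (Beta) distribution of each $R_i$. First I would identify that within a single repetition $i$ the ranks are independent $\mathrm{Uniform}[0,1]$, so $R_i$ is the minimum of $n_B := |B(v,d)|$ i.i.d.\ $\mathrm{Uniform}[0,1]$ random variables, i.e., $R_i \sim \mathrm{Beta}(1, n_B)$ with density $n_B(1-x)^{n_B-1}$ on $[0,1]$. A direct moment computation then gives
\[
\E[R_i^k] \;=\; \frac{k!\, n_B!}{(n_B+k)!} \;\le\; k!\,\mu^k,
\]
which confirms $\E[R_i] = \mu$ and $\mathrm{Var}(R_i) \le \mu^2$, so that $\hat R$ is unbiased with variance at most $\mu^2/\ell$.

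The core of the argument is a moment generating function (MGF) bound that is independent of $n_B$. Summing the moment inequality termwise, for $0 \le \lambda\mu < 1$,
\[
\E[e^{\lambda R_i}] \;\le\; \sum_{k \ge 0}(\lambda\mu)^k \;=\; \frac{1}{1-\lambda\mu},
\]
and after centering and using $-\ln(1-x) \le x + x^2$ on $[0,1/2]$ one obtains the ``sub-gamma'' style bound $\E[e^{\lambda(R_i - \mu)}] \le \exp((\lambda\mu)^2)$ whenever $\lambda\mu \le 1/2$. For the opposite tail, the elementary inequality $e^{-y} \le 1 - y + y^2/2$ (valid for $y \ge 0$) combined with $\E[R_i^2] \le 2\mu^2$ gives $\E[e^{-\lambda R_i}] \le \exp(-\lambda\mu + \lambda^2\mu^2)$ for all $\lambda \ge 0$.

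Independence of the $R_i$'s tensorizes both bounds, yielding $\E[e^{\pm\lambda(\hat R - \mu)}] \le \exp(\lambda^2\mu^2/\ell)$ in the admissible range of $\lambda$. Markov's inequality then reduces the claim to optimizing $-\lambda\epsilon\mu + \lambda^2\mu^2/\ell$ over $\lambda \ge 0$; the optimal choice $\lambda^{\star} = \ell\epsilon/(2\mu)$ lies within the admissible range because $\epsilon \le 1$, and it produces $\exp(-\ell\epsilon^2/4)$ for each tail. A union bound then delivers $\Pr(|\hat R - \mu| \ge \epsilon\mu) \le 2\exp(-\ell\epsilon^2/4) = \exp(-\Omega(\epsilon^2 \ell))$, as claimed.

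The main obstacle I foresee is ensuring that the MGF bound is independent of $n_B$. A black-box application of Bernstein or Hoeffding, based only on $R_i \in [0,1]$ and $\mathrm{Var}(R_i) \le \mu^2$, degrades to $\exp(-\Omega(\epsilon^2 \ell \mu))$ in the small-$\mu$ regime, because it discards the exponential-like concentration of the minimum of many uniforms near $0$. The Beta-specific moment bound $\E[R_i^k] \le k!\,\mu^k$ is precisely what captures this concentration and makes $R_i$ behave like a sub-gamma random variable of scale $\mu$; everything else is routine Chernoff bookkeeping.
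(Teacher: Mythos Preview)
The paper does not prove this theorem at all: it is stated as a black-box result attributed to Cohen~\cite{Cohen1997} and used without proof. So there is no ``paper's own proof'' to compare against.

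That said, your proposal is a correct self-contained proof of the cited bound. The identification $R_i \sim \mathrm{Beta}(1,n_B)$, the factorial moment bound $\E[R_i^k] = k!\,n_B!/(n_B+k)! \le k!\,\mu^k$, and the resulting sub-gamma MGF control $\E[e^{\lambda(R_i-\mu)}] \le \exp((\lambda\mu)^2)$ on $\lambda\mu \le 1/2$ are all valid; the lower-tail argument via $e^{-y} \le 1 - y + y^2/2$ and $\E[R_i^2] \le 2\mu^2$ is also fine. The optimization over $\lambda$ lands inside the admissible range precisely because $\epsilon \le 1$, giving the stated $\exp(-\ell\epsilon^2/4)$ per tail. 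Your diagnosis of why a black-box Bernstein bound would fail (it would pick up an extra factor of $\mu$ in the exponent) is exactly the point, and the moment bound is the right fix. Cohen's original argument in \cite{Cohen1997} proceeds along essentially the same lines (exploiting the explicit distribution of the minimum of i.i.d.\ uniforms), so your approach is not a departure from the literature, just from the present paper, which simply quotes the result.
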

This theorem implies that $\ell = O(\log n/\epsilon^2)$ repetitions suffice for obtaining $(1 \pm \epsilon)$-factor estimates w.h.p.~of the sizes of $B(v, d)$ for all $v$ and all $d$.

Cohen proposes a modified Dijkstra's SSSP algorithm to find smallest rank vertices in each neighborhood.
Let $v_1, v_2, \ldots, v_n$ be the vertices of the graph in non-decreasing order of rank. 
Initiate Dijkstra's algorithm, first with source $v_1$, then with source $v_2$, and so on.
During the search with source $v_i$, if it is detected that for a vertex $u$, $d(u, v_j) \le d(u, v_i)$ for some $j < i$,
then the current search can be ``pruned'' at $u$.
This is because the vertex $v_j$ has ruled out $v_i$ from being the lowest ranked vertex in any of $u$'s neighborhoods.
In fact, this is true not just for $u$, but for all vertices whose shortest paths to $v_i$ pass through $u$.
Even though this algorithm performs $n$ SSSP computations, the fact that each search is pruned by the results of
previous searches makes the overall running time much less than $n$ times the worst case running time of
an SSSP computation.
In particular, by making critical use of the fact that the random vertex ranks induce a random permutation of the vertices,
Cohen is able to show that the algorithm runs in $O(m \log n + n \log^2 n)$ time, 
on $n$-vertex, $m$-edge graphs, w.h.p.

We don't know how to implement Cohen's algorithm, as is, efficiently in the $k$-machine model.
In particular, it is not clear how to take advantage of pruning that occurs in later searches while simultaneously taking advantage of the 
parallelism provided by the $k$ machines.
A naive implementation of Cohen's algorithm in the $k$-machine model is equivalent to $n$ different SSSP computations, which is too expensive. 
Below, in Algorithm \textsc{NbdSizeEstimates} (Algorithm \ref{alg:CohenEstimates}), we show that we can reduce Cohen's 
algorithm to a polylogarithmic number of SSSP computations provided
we are willing to relax the requirement that we find the smallest rank in each neighborhood. 

The goal of Algorithm \ref{alg:CohenEstimates} is to estimate $|B(v, d)|$ for all 
$v \in V$ and all \(d > 0\). 
In Step \ref{alg2:ChooseRanks}, each vertex $v \in V$ picks a rank uniformly at random 
from $[0, 1]$, which is rounded down to the closest value $(1+\epsilon')^i/n^2$ for some 
integer $i$ ($\epsilon'$ is suitably chosen in the algorithm). 
In Steps 5-7, in each iteration $i$, $0 \le i < \lceil \log_{1+\epsilon'}(n^2) \rceil$, 
we consider the set $T_i$ of vertices that have rounded rank equal to $(1+\epsilon')^i / n^2$ 
and solve an instance of the MSSP problem (see Lemma \ref{lemma:MSSP}) using the vertices 
in $T_i$ as sources. We repeat the algorithm $\lceil c \log n / (\epsilon')^2 \rceil$ times 
for a suitably chosen constant $c$, so that the neighborhood size estimates satisfy the 
property provided in Theorem \ref{thm:cohen} with high probability.

Notice that the algorithm's behavior is not well-defined if a rank falls in the range 
$[0, (1+\epsilon')/n^2)$ However, since ranks are chosen uniformly at random from $[0,1]$, 
the probability that the rank of a vertex falls in this range is $O(1/n^2)$. By union bound, 
no rank falls in the interval $[0, (1+\epsilon')/n^2]$ with probability at least $1-1/n$.
We condition the correctness proof of this algorithm on this high probability event.

\RestyleAlgo{boxruled}
\begin{algorithm2e}\caption{\textsc{NbdSizeEstimates}\((G, \epsilon)\)\label{alg:CohenEstimates}}
	$\epsilon' := \epsilon/(\epsilon + 4)$; $t = \lceil 2 \log_{1+\epsilon'} n \rceil$; $\ell := \lceil c \log n/(\epsilon')^2 \rceil$\\
  \For{\(j := 1, \ldots, \ell\)} {
	  \textbf{Local Computation.} Each machine $m_j$ picks a rank $\r(v)$, for each vertex $v \in H(m_j)$,
	  chosen uniformly at random from $[0, 1]$. Machine $m_j$ then rounds $\r(v)$ down to the closest 
	  $(1+\epsilon')^i/n^2$ for integer $i \ge 0$\label{alg2:ChooseRanks} \\
    \For{$i := 0, 1, \ldots, t-1$}{
      $T_i := \{v \in W \mid \r(v) = (1+\epsilon')^i/n^2\}$\\
	Compute a \((1+\epsilon)\)-approximate solution to MSSP using \(T_i\) as the set of sources \label{alg2:MSSP}; let $\tilde{d}(v, T_i)$ denote the computed approximate distances\\
	\textbf{Local Computation.} Machine $m_j$ stores $\tilde{d}(v, T_i)$ for each \(v \in H(m_j)\)\\
      }
  }
\end{algorithm2e}

\noindent
\textbf{Running time.} There are $\ell \cdot t$ calls to the subroutine solving MSSP. By Corollary \ref{theorem:SSSP},
each of these calls takes $\to(1/\mbox{poly}(\epsilon) \cdot n/k)$ rounds.
Since $\ell \cdot t = O((1/\mbox{poly}(\epsilon') \cdot \log^2 n)$, the overall round complexity of this algorithm
in the $k$-machine model is $\to(1/\mbox{poly}(\epsilon) \cdot n/k)$.

\noindent
\textbf{Answering queries.}
At the end of each iteration, each machine $m_j$ holds, for each vertex $v \in H(m_j)$, the sequence of 
distances, $\{\tilde{d}(v, T_i)\}_{i=0}^{t-1}$. Over $\ell$ repetitions, machine $m_j$ holds $\ell$ such sequences
for each vertex $v \in H(m_j)$.
Note that each distance $\tilde{d}(v, T_i)$ is associated with the rounded rank $(1 + \epsilon')^i/n^2$.
For any vertex $v \in V$ and real $d > 0$, let us denote the query ``What is the size of $B(v, d)$?'' by $Q(v, d)$.
To answer query $Q(v, d)$, we consider one of the $\ell$ sequences $\{\tilde{d}(v, T_i)\}_{i=0}^{t-1}$ and
find the smallest $i$, such that $\tilde{d}(v, T_i) \le d$, and return the rounded rank $(1+\epsilon')^i/n^2$.
To get an estimate that has low relative error, we repeat this over the $\ell$ sequences and compute the 
average \(\overline{R}\) of the
ranks computed in each iteration. 
The estimator is obtained by subtracting \(1\) from the reciprocal of \(\overline{R}\). 

The following lemma shows the correctness of Algorithm \ref{alg:CohenEstimates} in the sense
that even though we might not get an approximately correct answer to $Q(v, d)$,
the size \(|B(v, d)|\) is guaranteed to be ``sandwiched'' between the answers to 
two queries with nearby distances.
This guarantee is sufficient to ensure that the \textsc{RadiusComputation} Algorithm produces
approximately correct radii (see Section \ref{section:radiusComputation}).

\begin{Lemma}\label{lem:cohen}
Let $s$ denote $|B(v, d)|$ for some vertex $v$ and real $d > 0$. 
For any $0 < \epsilon < 1$, w.h.p., Algorithm \ref{alg:CohenEstimates} satisfies the following properties:
	\begin{itemize}
	\item for the query $Q(v, d/(1+\epsilon))$, the algorithm returns an answer that is at most $s(1+\epsilon)$.
	\item for the query $Q(v, d(1+\epsilon))$, the algorithm returns an answer that is at least $s/(1+\epsilon)$.
	\end{itemize}
\end{Lemma}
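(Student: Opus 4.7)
The plan is to isolate the two sources of error that $\textsc{NbdSizeEstimates}$ introduces on top of the ``ideal'' estimator Cohen analyzes: (i) the MSSP subroutine invoked in Step 6 returns only $(1+\epsilon)$-approximate distances $\tilde{d}(v, T_i)$, and (ii) ranks are rounded down to the grid $\{(1+\epsilon')^i/n^2\}_i$, which perturbs the smallest rank inside any ball by at most a multiplicative factor of $(1+\epsilon')$. Fix a single iteration $j$ and write $r^{\min}_j$ for the smallest unrounded rank of a vertex in $B(v,d)$, $R^*_j$ for the smallest rounded rank of a vertex in $B(v,d)$, and $R'_j$ for the rank the algorithm actually outputs in iteration $j$ for the query being analyzed. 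Because rounding is monotone and downward, $r^{\min}_j/(1+\epsilon') \le R^*_j \le r^{\min}_j$. Applying Theorem \ref{thm:cohen} to the $\ell = \Theta(\log n/(\epsilon')^2)$ iterations shows that the average $\hat{r}^{\min}$ is within a factor $(1\pm \epsilon')$ of $\mu = 1/(1+s)$ with high probability, and averaging the rounding inequality carries the same control to $\hat{R}^*$ up to an additional $(1+\epsilon')$ factor.

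The two bullets are then proved symmetrically, using the direction of the MSSP error to relate $R'_j$ to $R^*_j$. For the query $Q(v, d/(1+\epsilon))$, whenever the algorithm picks an index $i$ with $\tilde{d}(v, T_i) \le d/(1+\epsilon)$, Lemma \ref{lemma:MSSP} guarantees that the witnessing vertex in $T_i$ has actual distance at most $d/(1+\epsilon) \le d$ and therefore lies in $B(v, d)$; this forces $R'_j \ge R^*_j$ iteration by iteration. Averaging, inverting, and combining with the Cohen and rounding bounds yields $1/\overline{R'} - 1 \le (1+\epsilon')(1+s)/(1-\epsilon') - 1$, and a short calculation using $\epsilon' = \epsilon/(\epsilon+4)$ bounds this above by $s(1+\epsilon)$. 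For $Q(v, d(1+\epsilon))$, every $u \in B(v,d)$ satisfies $\tilde{d}(v, T_{i(u)}) \le (1+\epsilon)\cdot d \le d(1+\epsilon)$, where $i(u)$ is $u$'s bucket, so the algorithm's smallest-index choice yields $R'_j \le R^*_j$; averaging and inverting gives $1/\overline{R'} - 1 \ge (1+s)/(1+\epsilon') - 1 = (s-\epsilon')/(1+\epsilon')$, and the same choice of $\epsilon'$ makes this at least $s/(1+\epsilon)$ whenever $s \ge 1$ (the $s=0$ boundary is vacuous).

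The main obstacle is the algebraic coupling between the two error sources at the inversion step. Cohen's theorem gives only a $(1\pm\epsilon')$-relative guarantee on $\hat{r}^{\min}$, rank rounding introduces another multiplicative $(1+\epsilon')$ factor, and after taking reciprocals and subtracting $1$ these compounded losses must still fit inside the $(1+\epsilon)$-slack produced by the query's $d/(1+\epsilon)$ or $d(1+\epsilon)$ perturbation. This is precisely why one cannot simply set $\epsilon' = \epsilon$; the scheduling $\epsilon' = \epsilon/(\epsilon+4)$ is calibrated so that both one-sided bounds close, and this is the only nontrivial ingredient beyond Theorem \ref{thm:cohen} and the MSSP approximation guarantee of Lemma \ref{lemma:MSSP}.
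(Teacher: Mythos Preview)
Your proposal is correct and follows essentially the same route as the paper. The paper argues the per-iteration inequalities $R_j \ge r_j(v,d)/(1+\epsilon')$ (for $Q(v,d/(1+\epsilon))$) and $R_j \le r_j(v,d)$ (for $Q(v,d(1+\epsilon))$) directly, whereas you insert the intermediate quantity $R^*_j$ (smallest \emph{rounded} rank in $B(v,d)$) and split each inequality into an MSSP step and a rounding step; after averaging and invoking Theorem~\ref{thm:cohen}, the algebra and the use of $\epsilon' = \epsilon/(\epsilon+4)$ are identical. One cosmetic remark: the case $s=0$ is not ``vacuous'' but impossible, since $v \in B(v,d)$ forces $s \ge 1$.
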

\begin{proof}
Fix a particular repetition $j$, $1 \le j \le \ell$, of the algorithm and a ranking of the vertices. 
Let $\r_j(v, d)$ denote the smallest rank in $B(v, d)$ in repetition $j$.
To answer query $Q(v, d/(1+\epsilon))$, the algorithm examines the sequence of approximate
distances $\{\tilde{d}(v, T_i)\}_{i=0}^{t-1}$, finds the smallest $i$ such that 
$\tilde{d}(v, T_i) \le d/(1+\epsilon)$, and uses $R_j := (1+\epsilon')^i/n^2$ as an approximation for 
$\r_j(v, d)$.
Since $\tilde{d}(v, T_i) \le d/(1+\epsilon)$ there is a vertex $u \in T_i$ such that $\tilde{d}(v, u) \le d/(1+\epsilon)$.
Since we compute a $(1+\epsilon)$-approximate solution to MSSP, the actual distance $d(v, u) \le d$. 
Thus the rank of $u$ is at least $\r_j(v, d)$ and therefore the rounded-rank of $u$ is at least $\r_j(v, d)/(1+\epsilon')$.
Since $u \in T_i$, the rounded-rank of $u$ is simply $R_j$ and so we get that $R_j \ge \r_j(v, d)/(1+\epsilon')$.

Over all $\ell$ repetitions, the algorithm computes the average $\overline{R}$ of the sequence $\{R_j\}_{j=1}^\ell$.
Letting $\overline{r}(v, d)$ denote the average of $\r_j(v, d)$ over all $\ell$ repetitions, we see that 
$\overline{R} \ge \overline{r}(v, d)/(1+\epsilon')$. From Theorem \ref{thm:cohen}, we know that w.h.p.~$\overline{r}(v, d) 
\ge (1 - \epsilon')/(1 + s)$. Combining these two inequalities, we get
\begin{eqnarray*}
\frac{1}{\overline{R}} & \le & \left(\frac{1+\epsilon'}{1-\epsilon'}\right) \cdot (s + 1)\\
\frac{1}{\overline{R}} - 1     & \le & \left(\frac{1+\epsilon'}{1-\epsilon'}\right) \cdot s + \left(\frac{2\epsilon'}{1-\epsilon'}\right)\\ 
	                       & \le & \left(\frac{1+3\epsilon'}{1-\epsilon'}\right) \cdot s\\ 
	                       & \le & (1+\epsilon) \cdot s.
\end{eqnarray*}
The second last inequality above follows from the fact $s \ge 1$, since $v \in B(v, d)$.
The last inequality follows from the setting $\epsilon' = \epsilon/(\epsilon + 4)$.

Now we consider query $Q(v, d \cdot (1+\epsilon))$.
Again, fix a repetition $j$, $1 \le j \le \ell$, of the algorithm and a ranking of the vertices. 
Let $u \in B(v, d)$ be a vertex with rank equal to $\r_j(v, d)$. 
We get two immediate implications: (i) the rounded-rank of $u$ is at most $\r_j(v, d)$ and
(ii) $\tilde{d}(v, u) \le d(1 + \epsilon)$. Together these imply that $R_j$, the approximate
rank computed by the algorithm in repetition $j$ is at most $\r_j(v, d)$.
Averaging over all $\ell$ repetitions we get that $\overline{R} \le \overline{r}(v, d)$.
Using Theorem \ref{thm:cohen}, we know that w.h.p.~$\overline{r}(v, d) 
\le (1 + \epsilon')/(1 + s)$. Combining these two inequalities, we that get $\overline{R} \le (1 + \epsilon')/(1+s)$.
This leads to
\begin{eqnarray*}
\frac{1}{\overline{R}} - 1 & \ge & \left(\frac{1 + s}{1 + \epsilon'}\right) - 1\\
                           & \ge & \left(\frac{1 - \epsilon'}{1 + \epsilon'}\right) \cdot s\\
                           & \ge & \frac{s}{1 + \epsilon}.
\end{eqnarray*}
The second last inequality follows from the fact that $s \ge 1$.
A little bit of algebra shows that $\epsilon' = \epsilon/(\epsilon + 4)$ implies
that $(1-\epsilon')/(1+\epsilon') \ge 1/(1+\epsilon)$ and the last inequality
follows from this.
\end{proof}


\subsection{Radius Computation Revisited}
\label{section:radiusComputation}

Having designed a $k$-machine algorithm that returns approximate neighborhood-size estimates
we restate the \textsc{RadiusComputation} algorithm (Algorithm \ref{alg:RC}) below.
\RestyleAlgo{boxruled}
\begin{algorithm2e}\caption{\textsc{RadiusComputation} Algorithm (Version 2)\label{alg:RC2}}
	\textbf{Neighborhood-Size Computation.} Call the \textsc{NbdSizeEstimates} algorithm (Algorithm \ref{alg:CohenEstimates}) to obtain \textit{approximate} neighborhood-size estimates $\tilde{q}_i(v)$
for all integers $i \ge 0$ and for all vertices $v$.\\
	\textbf{Local Computation.} Each machine $m_j$ computes $\tilde{r}_v$ locally, for all vertices $v \in H(m_j)$ using the formula $\tilde{r}_v := (1+\epsilon)^{t-1}$ where $t \ge 1$ is the smallest integer for which $\sum_{i = 0}^t \tilde{q}_i(v) \cdot ((1+\epsilon)^{i+1} - (1+\epsilon)^i) > f_v$.
\end{algorithm2e}

We show below that even though the computed neighborhood-sizes are approximate, in the sense
of Lemma \ref{lem:cohen}, the radii that are computed by the \textsc{RadiusComputation} algorithm (Version 2)
are a close approximation of the actual radii.
\begin{Lemma} \label{lem:facloc-guarantee}
	For every $v \in V$, $\frac{r_v}{(1+\epsilon)^3} \le \tilde{r}_v \le (1+\epsilon)^3 r_v$.
\end{Lemma}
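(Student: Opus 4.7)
\textbf{Proof sketch plan for Lemma \ref{lem:facloc-guarantee}.}

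The plan is to bound $\tilde{r}_v$ (computed from approximate neighborhood sizes) by chaining it through the ``exact'' approximate radius from Version~1 of the algorithm. Denote by $\hat{r}_v$ the radius that Version~1 would compute using exact sizes $q_i(v)$; Lemma \ref{lemma:approxRadii1} already gives $\hat{r}_v \in [r_v/(1+\epsilon),\, r_v(1+\epsilon)]$. So it suffices to show $\tilde{r}_v \in [\hat{r}_v/(1+\epsilon)^2,\, \hat{r}_v(1+\epsilon)^2]$, which together with Lemma \ref{lemma:approxRadii1} yields the claimed $(1+\epsilon)^3$ factor (with some slack on the lower side).

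The first key step is to translate the guarantee of Lemma \ref{lem:cohen} into a pointwise sandwich for the size estimates. We define $\tilde{q}_i(v)$ as the answer to the query $Q(v,(1+\epsilon)^i)$ produced by \textsc{NbdSizeEstimates}. Applying Lemma \ref{lem:cohen} with $d = (1+\epsilon)^{i+1}$ gives the upper bound $\tilde{q}_i(v) \le (1+\epsilon)\, q_{i+1}(v)$, and applying it with $d = (1+\epsilon)^{i-1}$ gives the lower bound $\tilde{q}_i(v) \ge q_{i-1}(v)/(1+\epsilon)$ for $i \ge 1$.

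The second step is to translate these pointwise bounds on $\tilde{q}_i(v)$ into bounds on the cumulative sums that determine the computed radii. Let
\[
  S_t \;:=\; \sum_{i=0}^{t-1} q_i(v)\bigl((1+\epsilon)^{i+1}-(1+\epsilon)^i\bigr), \qquad
  \tilde{S}_t \;:=\; \sum_{i=0}^{t-1} \tilde{q}_i(v)\bigl((1+\epsilon)^{i+1}-(1+\epsilon)^i\bigr).
\]
Substituting the upper bound $\tilde{q}_i(v) \le (1+\epsilon)\,q_{i+1}(v)$ and performing the index shift $j=i+1$ (using $(1+\epsilon)^{j}-(1+\epsilon)^{j-1} = \tfrac{1}{1+\epsilon}\bigl((1+\epsilon)^{j+1}-(1+\epsilon)^{j}\bigr)$) gives $\tilde{S}_t \le S_{t+1}$. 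Symmetrically, using the lower bound $\tilde{q}_i(v) \ge q_{i-1}(v)/(1+\epsilon)$ (dropping the $i=0$ term which is nonnegative) and the index shift $j=i-1$ gives $\tilde{S}_t \ge S_{t-1}$.

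The final step is to turn these sum inequalities into radius inequalities. If $t$ is the smallest integer with $\tilde{S}_t > f_v$ (so $\tilde{r}_v = (1+\epsilon)^{t-1}$) and $s$ is the smallest integer with $S_s > f_v$ (so $\hat{r}_v = (1+\epsilon)^{s-1}$), then $\tilde{S}_t > f_v$ combined with $\tilde{S}_t \le S_{t+1}$ gives $s \le t+1$, while $\tilde{S}_{t-1}\le f_v$ combined with $\tilde{S}_{t-1}\ge S_{t-2}$ gives $s \ge t-1$. Hence $|s-t|\le 1$, which forces $\tilde{r}_v \in [\hat{r}_v/(1+\epsilon),\,\hat{r}_v(1+\epsilon)]$; combining with Lemma \ref{lemma:approxRadii1} completes the proof. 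The main technical care is just in the index bookkeeping and in verifying that dropping the $i=0$ term in the lower bound does not hurt, which is immediate since the term is nonnegative.
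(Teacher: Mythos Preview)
Your proposal is correct and follows essentially the same approach as the paper: both derive the sandwich $\frac{1}{1+\epsilon}\,q_{i-1}(v) \le \tilde{q}_i(v) \le (1+\epsilon)\,q_{i+1}(v)$ from Lemma~\ref{lem:cohen}, sum it with the geometric weights, and then convert the resulting sum inequality into a radius bound. The only cosmetic difference is in the last step: the paper compares the sum directly to $\alpha(v,(1+\epsilon)^{t-2})$ and $\alpha(v,(1+\epsilon)^{t+1})$ (re-doing the Lemma~\ref{lemma:approxRadii1} argument inline), whereas you compare $\tilde S_t$ to $S_{t\pm 1}$, bound $|s-t|\le 1$, and then invoke Lemma~\ref{lemma:approxRadii1} as a black box; both routes in fact yield the slightly stronger $(1+\epsilon)^2$ factor, comfortably within the stated $(1+\epsilon)^3$.
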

\begin{proof}
  By Lemma \ref{lem:cohen}, we have the following bounds on \(\tilde{q}_i(v)\):
  \[\frac{1}{(1+\epsilon)}q_{i-1}(v)  \le \tilde{q}_i(v) \le (1 + \epsilon) q_{i+1}(v)\]

  Similar bounds will apply for the terms \(((1+\epsilon)^{i+1} - (1+\epsilon)^{i})\tilde{q}_i(v)\). Adding the respective inequalities for these terms, yields the following inequality:
  \[\sum_{i = 0}^{t-1} ((1+\epsilon)^i - (1+\epsilon)^{i-1}) q_{i-1}(v) \le \sum_{i = 0}^{t-1} ((1+\epsilon)^{i+1} - (1+\epsilon)^{i}) \tilde{q}_i(v) \le \sum_{i = 0}^{t-1} ((1+\epsilon)^{i+2} - (1+\epsilon)^{i+1}) q_{i+1}(v).\]

  Now we obtain the following bound using similar arguments as in Lemma \ref{lemma:approxRadii1}:
  \[\alpha(v, (1+\epsilon)^{t-2}) \le \sum_{i = 0}^{t-1} \tilde{q}_i(v) \cdot ((1+\epsilon)^{i+1} - (1+\epsilon)^i) \le \alpha(v, (1+\epsilon)^{t+1}).\]
  
  This means that there must exist a value $r_v \in [ (1+\epsilon)^{t-3}, (1+\epsilon)^{t+1}]$ such that $\alpha(v, r_v) = f_v$. The lemma follows since $\tilde{r}_v = (1 + \epsilon)^{t-1}$.
\end{proof}

\subsection{Implementing the Greedy Phase}

Referring to the two phases in the MP Algorithm (Algorithm \ref{alg:MP}), we have now completed the implementation of the 
Radius Computation Phase in the $k$-machine model. Turning to the Greedy Phase, we note that discretizing the radius values results in $O(\log_{1+\epsilon} n)$
distinct values. If we can efficiently process each batch of vertices with the same (rounded) radius in the $k$-machine model,
that would yield an efficient $k$-machine implementation of the Greedy Phase as well.
Consider the set $W$ of vertices with (rounded) radius $\tilde{r}$.
Note that a set $I \subseteq W$ is opened as facilities by the Greedy Phase iff $I$ satisfies two properties: (i) for any two vertices
$u, v \in I$, $d(u, v) > 2 \tilde{r}$ and (ii) for any $w \in W \setminus I$, $d(w, I) \le 2 \tilde{r}$.
Thus the set $I$ can be identified by computing a \textit{maximal independent set (MIS)} in the graph $G_{\tilde{r}}[W]$,
where $G_{\tilde{r}}$ is the graph with vertex set $V$ and edge set $E_{\tilde{r}} = \{\{u, v\} \mid u, v \in V, d(u, v) \le \tilde{r}\}$.
($G_{\tilde{r}}[W]$ denotes the subgraph of $G_{\tilde{r}}$ induced by $W$.)

The well-known distributed MIS algorithm of Luby \cite{Luby1986} runs in $O(\log n)$ rounds w.h.p.~and it can be easily implemented in the $k$-machine model in
$O(n/k \cdot \log n)$ rounds.
However, Luby's algorithm assumes that the graph on which the MIS is being computed is provided explicitly. This is not possible here because
explicitly providing the edges of a graph $G_{d}$ would require pairwise-distance computation, which we're trying to avoid.
Another problem with using Luby's algorithm is that it uses randomization, where the probabilities of certain events depend on vertex-degrees.
The degree of a vertex $v$ in $G_d[W]$ is exactly $|B(v, d) \cap W|$ and this is the quantity we would need to estimate.
Unfortunately, the correctness guarantees for Algorithm \ref{alg:CohenEstimates} proved in Lemma \ref{lem:cohen} are not strong enough
to give good estimates for $|B(v, d) \cap W|$.
We deal with these challenges by instead using the \textit{beeping model} MIS algorithm of Afek et al.~\cite{AfekABHBBScience2011},
which is quite similar to Luby's algorithm except that it does require knowledge of vertex-degrees.
In Luby's algorithm vertices ``mark'' themselves at random as candidates for joining the MIS. After this step, if a marked vertex $v$ detects that a neighbor
has also marked itself, then $v$ ``backs off.''
In the current setting, this step would require every marked vertex $v$ to detect if there is \textit{another} marked vertex within distance
$d$. We use ideas from Thorup \cite{Thorup2001} to show that this problem can be solved using $O(\log n)$ calls to a subroutine that solves
\textsc{ExclusiveMSSP} (Lemma \ref{lemma:ExclusiveMSSP}).
In Luby's algorithm marked vertices that do not back off, join the MIS (permanently).
Then, any vertex $v$ that has a neighbor who has joined the MIS will withdraw from the algorithm.
Determining the set of vertices that should withdraw in each iteration requires a call to an MSSP subroutine.
Because the calls to the \textsc{ExclusiveMSSP} and MSSP subroutines return only 
approximate shortest path distances, what Algorithm \ref{alg:DistdMIS} computes is a 
relaxation of an MIS, that we call $(\epsilon,  d)$-approximate MIS.

\begin{Definition}[$(\epsilon, d)$-approximate MIS]
	For an edge-weighted graph $G = (V, E)$, and parameters $d, \epsilon > 0$, an $(\epsilon, d)$-approximate MIS is a subset $I \subseteq V$ such that
	\begin{enumerate}
		\item For all distinct vertices $u, v \in I$, $d(u, v) \ge \frac{d}{1+\epsilon}$.
		\item For any $u \in V \setminus I$, there exists a $v \in I$ such that $d(u, v) \le d \cdot (1+\epsilon)$.
	\end{enumerate}
\end{Definition}

\RestyleAlgo{boxruled}
\begin{algorithm2e}\caption{\textsc{ApproximateMIS}\((G, W, d, \epsilon)\)\label{alg:DistdMIS}}
	Each machine $m_j$ initializes \(U_j := \emptyset\) \\
	\tcc{Let $W_j$ denote $W \cap H(m_j)$.}
	\For{$i := 0, 1, \ldots, \lceil \log n \rceil$}{
		\For{$\lceil c \log n \rceil$ iterations}{
			Each machine \(m_j\) marks each vertex $v \in W_j$ with probability $2^i/n$ \\
			\tcc{Let \(R_j \subset W_j\) denote the set of marked vertices hosted by $m_j$, let $R := \cup_{j=1}^k R_j$} 
			Solve an instance of the \textsc{ExclusiveMSSP} problem using $R$ as the set of sources (see Lemma \ref{lemma:ExclusiveMSSP}) to obtain $(1+\epsilon)$-approximate distances $\tilde{d}$ \label{alg1:MSSP1}\\
			Each machine $m_j$ computes $T_j := \{v \in R_j \mid \tilde{d}(v, R \setminus \{v\}) > d\}$ \\
    			Each Machine $m_j$ sets $U_j := U_j \cup T_j$ \\
			\tcc{Let $T := \cup_{j=1}^k T_j$}
			Solve an instance of the \textsc{MSSP} problem using $T$ as the set of sources (see Lemma \ref{lemma:MSSP}) to obtain $(1+\epsilon)$-approximate distances $\tilde{d}$ \label{alg1:MSSP2} \\
			Each machine $m_j$ computes $Q_j = \{v \in W_j \mid \tilde{d}(v, T) \le d\}$ \\
			Each machine $m_j$ sets $W_j := W_j \setminus (T_j \cup Q_j)$
		}
	}
  	\textbf{return} \(U := \cup_{j = 1}^k U_j\)
\end{algorithm2e}

The algorithm consists of \(\lceil \log n \rceil\) stages and in each Stage \(i\), 
we run a Luby-like MIS algorithm for \(\lceil c\log n \rceil\) iterations 
(for some constant \(c > 0\)) with fixed marking probability which we double in each stage. 
In each iteration of the two for loops, the set \(R_j\) is the set of marked vertices in 
machine \(m_j\). 
The machines solve an \textsc{ExclusiveMSSP} instance in Step \ref{alg1:MSSP1} with all 
marked vertices to ensure that marked vertices that are within approximated distance 
\(d\) of each other back-off. The marked vertices in machine \(m_j\) that do not back-off 
(i.e., vertices in \(T_j\)) join the MIS (\(U\)). The machines then solve an instance of 
the \textsc{MSSP} problem in Step \ref{alg1:MSSP2} to remove the vertices that within 
approximate distance \(d\) from the vertices in the MIS. We formalize the correctness of 
Algorithm \ref{alg:DistdMIS} in the following Lemma.

\begin{Lemma} \label{lem:DistdMIS}
For a given set $W \subseteq V$, Algorithm \ref{alg:DistdMIS} finds an $(O(\epsilon), d)$-approximate MIS $I$ 
of $G[W]$ whp in \(\tilde{O}(n/k)\) rounds.
\end{Lemma}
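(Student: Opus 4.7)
The plan is to check three things separately: approximate independence of the returned set $U$, approximate domination of $W\setminus U$ by $U$, and round complexity. The first two follow cleanly from the $(1+\epsilon)$-approximation guarantees of the \textsc{MSSP} and \textsc{ExclusiveMSSP} subroutines (Lemmas \ref{lemma:MSSP} and \ref{lemma:ExclusiveMSSP}), and the last is a direct count; the only substantive difficulty will be showing that every vertex of $W$ is actually removed within the budgeted $O(\log^2 n)$ inner iterations with high probability.

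For independence, I would argue that any $v \in U$ was added in some inner iteration in which $v$ passed the test $\tilde d(v, R \setminus \{v\}) > d$. By Lemma \ref{lemma:ExclusiveMSSP} this forces $d(v, R \setminus \{v\}) > d/(1+\epsilon)$, separating $v$ from every other marked vertex of that iteration; in particular from any other vertex added to $U$ simultaneously. For $u,v \in U$ added in different iterations, say $v$ first in iteration $t$ and $u$ later, the \textsc{MSSP} pruning at the end of iteration $t$ removes from $W$ every vertex $u'$ with $\tilde d(u', T) \le d$; so $u$ surviving past iteration $t$ implies $d(u, v) > d/(1+\epsilon)$. For domination, every $w \in W \setminus U$ leaves $W$ only via membership in some $Q_j$, which certifies $\tilde d(w, T) \le d$ and hence $d(w, U) \le d \le d(1+\epsilon)$.

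The hard part is the Luby/Afek-style termination analysis, which I would carry out stage by stage. Fix a vertex $v$ still alive at the start of stage $i$ and let $N(v) := |B(v, d(1+\epsilon)) \cap W|$ with respect to the current $W$. The marking probabilities $2^i/n$ cycle dyadically over $[1/n, 1]$ as $i$ ranges from $0$ to $\lceil \log n \rceil$, so regardless of the current value of $N(v)$ there is some stage $i^\star$ whose marking probability lies in $[1/(2N(v)), 1/N(v)]$. A standard calculation then shows that in any single inner iteration of stage $i^\star$, with constant probability either (a) $v$ is marked while no other vertex in $B(v, d(1+\epsilon))$ is marked, whence $\tilde d(v, R \setminus \{v\}) > d$ and $v$ joins $U$; or (b) some vertex $u \in B(v, d/(1+\epsilon))$ is marked with no other marked vertex within distance $d(1+\epsilon)$ of $u$, so that $u$ joins $T$ and the subsequent \textsc{MSSP} pruning removes $v$ via $\tilde d(v, T) \le (1+\epsilon) d(v,u) \le d$. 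Running $\lceil c \log n \rceil$ independent iterations in stage $i^\star$ for $c$ large enough drives the survival probability of $v$ below $1/n^{c'}$, and a union bound over all $|W| \le n$ vertices finishes the argument. The subtlety that gets absorbed into the ``$O(\epsilon)$'' in the statement is that the relevant distance thresholds are $d(1+\epsilon)$ in case (a) and $d/(1+\epsilon)$ in case (b), rather than exactly $d$.

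The round complexity is immediate: the algorithm performs $\lceil \log n \rceil \cdot \lceil c \log n \rceil = O(\log^2 n)$ inner iterations, each consisting of one \textsc{ExclusiveMSSP} call and one \textsc{MSSP} call at a cost of $\tilde{O}(n/k)$ rounds each by Lemmas \ref{lemma:MSSP} and \ref{lemma:ExclusiveMSSP}, for a total of $\tilde{O}(n/k)$.
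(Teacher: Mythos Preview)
Your overall structure mirrors the paper's proof exactly: independence via the \textsc{ExclusiveMSSP} guarantee, domination via the \textsc{MSSP} guarantee, termination outsourced to the Afek et al.\ analysis, and a direct $O(\log^2 n)\cdot\tilde O(n/k)$ round count. Your independence and domination arguments are correct and in fact more explicit than what the paper writes (the paper simply says ``by the analysis in \cite{AfekABHBBScience2011} and the guarantees provided by \textsc{ExclusiveMSSP}/\textsc{MSSP}'').

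There is, however, a genuine gap in your termination sketch. The assertion that in stage $i^\star$ one of (a) or (b) occurs with constant probability does not follow merely from $p\in[1/(2N(v)),1/N(v)]$. Consider a vertex $v$ with $B(v,d/(1+\epsilon))\cap W=\{v\}$ but $N(v)=|B(v,d(1+\epsilon))\cap W|$ large; then (b) collapses to (a), and (a) has probability $p(1-p)^{N(v)-1}=\Theta(1/N(v))$, not a constant. The point is that whether a marked vertex $u\in B(v,d/(1+\epsilon))$ survives into $T$ depends on $|B(u,d(1+\epsilon))|$, i.e.\ on the two-hop structure around $v$, which your choice of $i^\star$ does not control. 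The actual Afek et al.\ argument proceeds inductively over stages, maintaining the invariant that after stage $i$ every surviving vertex has (approximate) degree at most $n/2^i$; this caps the neighbors' degrees in stage $i^\star$ and is what makes the ``some neighbor is marked and isolated'' event occur with constant probability. Without that inductive degree bound your ``standard calculation'' does not go through. The paper sidesteps this by citing \cite{AfekABHBBScience2011} directly rather than reproving it; you could do the same, or add the missing induction.
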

\begin{proof}
  We first bound the running time of the algorithm. The double nested
  loop runs for $O(\log^2 n)$ iterations. In each iteration, Steps
  \ref{alg1:MSSP1} and \ref{alg1:MSSP2} run in $\to(n/k)$ rounds via
  Lemmas \ref{lemma:ExclusiveMSSP} and \ref{lemma:MSSP} respectively and
  all other steps are local computations. This means that the overall
  running time is \(\tilde{O}(n/k)\).

  By the analysis in \cite{AfekABHBBScience2011} and the guarantees
  provided by the solution to \textsc{ExclusiveMSSP}, no two vertices in
  $W$ at distance at most $d/(1+\epsilon)$ end up in $T$.  Similarly, by
  the analysis in \cite{AfekABHBBScience2011} and the guarantees
  provided by the solution to \textsc{MSSP}, every vertex in $W
  \setminus U$ is at distance at most $d(1+\epsilon)$ from $U$.  Thus
  $U$ is an $(O(\epsilon), d)$-approximate MIS and it is computed in the
  $k$-machine model in $\to(n/k)$ rounds.
\end{proof}
  
\subsection{Putting It All Together}  \label{subsec:primaldual}

\RestyleAlgo{boxruled}
\begin{algorithm2e}\caption{\(\beta\)-\textsc{MettuPlaxton}\((G)\)\label{alg:MettuPlaxtonPhase2}}
  \tcc{Start Phase 1 of the \(\beta\)-MP algorithm}
  Call the \textsc{RadiusComputation} algorithm Version 2 (Algorithm \ref{alg:RC2}) to compute  approximate radii. \label{alg4:radius-computation}\\
  \tcc{Start Phase 2 of the \(\beta\)-MP algorithm}
  Let \(S = \emptyset\) \\
  \For{\(i = 0, 1, 2, \dots\)}{
    Let \(W\) be the set of vertices \(w \in V\) across all machines with \(\tilde{r}_w = \tilde{r} = (1 + \epsilon)^i\) \\
    Using Lemma \ref{lemma:MSSP}, remove all vertices from \(W\) within distance \(2(1 + \epsilon)^2\cdot \tilde{r}\) from \(S\) \label{alg4:remove-vertices}\\
    \(I \leftarrow \textsc{ApproximateMIS}(G, W, 2(1 + \epsilon)^3 \cdot \tilde{r}, \epsilon)\) \\
    \(S \leftarrow S \cup I\)
  }
  \textbf{return} \(S\)

\end{algorithm2e}

Our $k$-machine model algorithm for \facloc\ is shown in Algorithm \ref{alg:MettuPlaxtonPhase2}.
We could analyze the algorithm as done in \cite{Thorup2001} to show the constant approximation guarantee. However, we want to use this algorithm for obtaining a $p$-median algorithm in the next section. Therefore, we take an approach similar to \cite{JainV01} and \cite{ArcherRSESA2003}, to show a stronger approximation guarantee in Lemma \ref{lem:mp-beta} (see appendix). 
We require several claims, which are along the lines of those in Thorup \cite{Thorup2001}, 
and Archer et al.~\cite{ArcherRSESA2003}. The details are technical and since they largely appear in 
Thorup \cite{Thorup2001} and Archer et al.~\cite{ArcherRSESA2003}, they are deferred to the 
appendix. Finally, using Lemma \ref{lem:mp-beta}, we get the following theorem.

\begin{theorem} \label{thm:FacLocGuarantee}
In \(\tilde{O}(n/k)\) rounds, whp, Algorithm \ref{alg:MettuPlaxtonPhase2} finds a factor \(3 + O(\epsilon)\) approximate solution \(S\) to the facility location problem. 
Furthermore, if \(F\) is the total facility cost of the algorithm's solution, 
\(C\) is the total connection cost of the algorithm's solution, 
\(OPT\) is the optimal solution cost, and $\beta \in [1, 3/2]$ then \(C + 2\beta F \le 3 (1 + \epsilon) \sum_{j}{v_j} \le 3(1+\epsilon)OPT\)
\end{theorem}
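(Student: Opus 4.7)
The plan has two prongs: (i) a direct round-complexity accounting, and (ii) a reduction showing that Algorithm~\ref{alg:MettuPlaxtonPhase2} structurally emulates the sequential MP-$\beta$ algorithm with all radii and all pairwise distance tests distorted by at most a $(1+\epsilon)^{O(1)}$ multiplicative factor, so that the sequential analyses of Thorup~\cite{Thorup2001} and Archer--Rajaraman--Srinivasan~\cite{ArcherRSESA2003} port over with the loss absorbed into an $O(\epsilon)$ slack in the final constant. The round complexity is immediate: Line~\ref{alg4:radius-computation} runs in $\tilde{O}(n/k)$ rounds by the combined analysis of Algorithms~\ref{alg:RC2} and~\ref{alg:CohenEstimates}; after normalizing the input so that all distances and opening costs are at least $1$, each $\tilde{r}_v$ equals $(1+\epsilon)^i$ for some integer $0 \le i = O(\log_{1+\epsilon} n)$, so the outer for-loop runs for $\mbox{poly}(\log n)/\mbox{poly}(\epsilon)$ iterations; each iteration performs one MSSP call (Line~\ref{alg4:remove-vertices}, Lemma~\ref{lemma:MSSP}) and one call to \textsc{ApproximateMIS} (Lemma~\ref{lem:DistdMIS}), each of which is $\tilde{O}(n/k)$. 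Summing yields $\tilde{O}(n/k)$ overall.

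For the approximation and Lagrangian guarantees, I would first verify that the two structural invariants of MP-$\beta$ hold up to $(1+\epsilon)^{O(1)}$ distortion: any two facilities $u,v \in S$ satisfy $d(u,v) > 2\min(\tilde{r}_u, \tilde{r}_v)/(1+\epsilon)^{O(1)}$, by the separation guarantee of \textsc{ApproximateMIS} together with Lemma~\ref{lem:facloc-guarantee}, and every vertex $w$ pruned in Line~\ref{alg4:remove-vertices} or absorbed by the MIS lies within $2\tilde{r}_w (1+\epsilon)^{O(1)}$ of $S$, by the domination guarantee of Lemma~\ref{lem:DistdMIS} together with Lemma~\ref{lemma:MSSP}. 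These are exactly the invariants the sequential analyses exploit. I would then adopt the dual-fitting viewpoint of~\cite{JainVaziraniJACM2001,ArcherRSESA2003}: set $v_j := \tilde{r}_j$, so that the defining equation $\beta f_v = \sum_{u \in B(v, r_v)}(r_v - d(v,u))$ translates into a packing inequality for the LP dual of \facloc\ (feasible up to $(1+\epsilon)^{O(1)}$ slack); bound each client's connection cost against $3v_j$ via the triangle inequality $d(j, S) \le 2\tilde{r}_{i_j} + v_j \le 3 v_j$ (where $i_j \in S$ is the facility that ``covered'' $j$ during the greedy phase); and amortize $2\beta f_i$ against the contributions $\tilde{r}_i - d(i,u)$ summed over the clients that paid into $i$'s radius. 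Weak duality gives $\sum_j v_j \le \mbox{OPT}$, so $C + 2\beta F \le 3(1+\epsilon)\sum_j v_j \le 3(1+\epsilon)\mbox{OPT}$; specializing to $\beta = 1$ yields $F + C \le C + 2F \le 3(1+\epsilon)\mbox{OPT}$ and hence the $3+O(\epsilon)$ approximation.

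The hard part will be the careful bookkeeping of $(1+\epsilon)$ factors so that the leading constant in front of $\sum_j v_j$ remains exactly $3$ rather than $3 \cdot (1+\epsilon)^{O(1)}$ with a larger hidden constant. In particular, the pruning radius $2(1+\epsilon)^2 \tilde{r}$ in Line~\ref{alg4:remove-vertices} and the MIS separation scale $2(1+\epsilon)^3 \tilde{r}$ passed to \textsc{ApproximateMIS} must be tuned so that, after absorbing the $(1+\epsilon)$ slack from Lemma~\ref{lem:DistdMIS} and the $(1+\epsilon)^3$ slack from Lemma~\ref{lem:facloc-guarantee}, opened facilities are still separated enough to preserve dual feasibility while every client is still close enough to $S$ to charge $d(j, S)$ against $3 v_j$. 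This two-sided tension, together with the requirement that the MP-$\beta$ analysis hold uniformly for the full range $\beta \in [1, 3/2]$ needed by the $p$-median reduction in the next section, is precisely what the deferred appendix arguments culminating in Lemma~\ref{lem:mp-beta} must verify.
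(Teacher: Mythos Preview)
Your overall plan---running-time accounting plus porting the sequential MP-$\beta$ analysis through $(1+\epsilon)^{O(1)}$-distorted separation and coverage invariants---matches the paper, and the round-complexity argument is correct and essentially identical to the paper's.

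The gap is the dual assignment $v_j := \tilde r_j$. With $w_{ij}=\tfrac{1}{\beta}\max\{0,r_i-d(i,j)\}$ (which is what the packing equation $\beta f_i=\sum_u(r_i-d(i,u))^+$ forces), taking $i=j$ in the dual constraint $v_j \le d(i,j)+w_{ij}$ yields $r_j \le r_j/\beta$, which fails for every $\beta>1$; so $\{\tilde r_j\}$ is \emph{not} feasible for the original LP dual, and weak duality does not deliver $\sum_j v_j \le \mathrm{OPT}$. The direct Mettu--Plaxton charging argument does give $\sum_j r_j^{(\beta)} \le \mathrm{OPT}(\beta f)$, but that is the optimum for the inflated instance and is in general strictly larger than $\mathrm{OPT}$. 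Separately, that same charge bound yields only $C+\beta F \le 3\sum_j r_j$, not $C+2\beta F$; the extra factor of $2$ on the facility side is precisely what the $p$-median reduction exploits (one takes $\beta\approx 3/2$ so that $2\beta$ matches the constant $3$ in Lemma~\ref{lem:jv}). The paper therefore follows Archer et al.\ literally: set $v_j=\min_i\bigl(d(i,j)+w_{ij}\bigr)$, which is feasible for the \emph{original} dual for every $\beta\ge 1$, and then establish the per-client inequality of Lemma~\ref{lem:mp-beta} via a four-case analysis on whether $j$'s bottleneck facility is open or closed and whether $j$ contributes to some open facility. Your single triangle-inequality line ``$d(j,S)\le 2\tilde r_{i_j}+v_j\le 3v_j$'' collapses these cases and cannot recover the $2\beta$ coefficient; the additional $\beta s_j$ in the closed-bottleneck cases comes from invoking the separation invariant (Claim~\ref{cl:cl9.2}) between two \emph{open} facilities, not merely the coverage invariant you highlight.
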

\begin{proof}
	Algorithm \ref{alg:MettuPlaxtonPhase2} consists of two phases which correspond to the Radius Computation and Greedy Phases of the MP algorithm (Algorithm \ref{alg:MP}). We bound the running time of both these phases. There are at most $O(\log_{1 + \epsilon}{nN}) = O(\log nN) = O(\log n)$ possible values of \(i\) and hence at most $O(\log n)$ iterations in the two phases of Algorithm \ref{alg:MettuPlaxtonPhase2} (where \(N = \mbox{poly}(n)\) is the largest edge weight). In each iteration of Algorithm \ref{alg:RC2} consists of a call to Algorithm \ref{alg:CohenEstimates} which runs in \(\tilde{O}(n/k)\) rounds and hence Phase \(1\) of Algorithm \ref{alg:MettuPlaxtonPhase2}) requires \(\tilde{O}(n/k)\) rounds. Each iteration in Phase \(2\) Algorithm \ref{alg:MettuPlaxtonPhase2} takes \(\tilde{O}(n/k)\) rounds therefore we conclude that the overall running time is \(\tilde{O}(n/k)\) rounds.
	
	As for the approximation guarantee, we note by Lemma \ref{lem:mp-beta}, we get that for each vertex $j \in V$, we have shown that there exists an opened facility $c(j) \in S$ such that $(3 + \epsilon)\cdot v_j \ge d(j, c(j)) + \beta s_j$ which gives the desired guarantee. Finally, we note that the cost of any feasible dual solution is a lower bound on the optimal cost. Then, by setting $\beta \in [1, 3/2]$ appropriately, the theorem follows.
\end{proof}

\section{A \(p\)-median algorithm}
\label{section:pMedian}
In this section, we describe an $\tilde{O}(n/k)$ round algorithm for the $p$-median problem. We will follow the randomized rounding algorithm of Jain and Vazirani \cite{JainV01} which shows an 
interesting connection between $p$-median and uniform facility location problems. As observed in \cite{JainV01}, the similarities between the linear programming formulations of the uniform facility location problem, and the $p$-median problem can be exploited to obtain an $O(1)$ approximation algorithm for the $p$-median problem, if one has a subroutine that returns an $O(1)$ approximation for the uniform facility location problem, with a specific property. This is summarized in the following lemma.

\begin{Lemma}[Modified from \cite{JainV01}]\label{lem:jv}
Let $\mathcal{A}$ be a polynomial time uniform facility location algorithm that takes the facility opening cost $z$ as input and returns a solution such that, 
$C + \mu \cdot F z \le \mu \cdot OPT$ where $C$ is the total connection cost, 
$F$ is the number of facilities opened by the algorithm, and $OPT$ is the optimal solution cost. Then there exists a randomized $p$-median algorithm $\mathcal{A}'$ that returns a solution with expected cost at most $2\mu$ times the optimal $p$-median cost. 
\end{Lemma}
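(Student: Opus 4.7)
The plan is to follow the classical Jain--Vazirani Lagrangian-relaxation template, treating $\mathcal{A}$ as a black box. I interpret the opening cost $z$ as a Lagrange multiplier enforcing the constraint that exactly $p$ facilities be opened, and I will produce a $p$-median solution in three steps: (i) binary search for the right $z$, (ii) take a convex combination of two extremal solutions, and (iii) round to exactly $p$ facilities while losing an additional factor of $2$.

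First, I would binary search on $z$. Since $\mathcal{A}(0)$ opens at most $n$ facilities while $\mathcal{A}(z)$ opens only $1$ once $z$ exceeds $\mbox{poly}(n)\cdot N$ (with $N$ the maximum edge weight), polynomially many iterations let me either land on a $z^\star$ with $|\mathcal{A}(z^\star)|=p$ -- in which case the hypothesis combined with the easy bound $OPT(z^\star)\le OPT_p + pz^\star$ already yields connection cost at most $\mu\cdot OPT_p$ -- or bracket the transition within a window $[z_1,z_2]$ of width $\eta$ with associated solutions $S_1,S_2$ of sizes $p_1>p>p_2$. Setting $a:=(p_1-p)/(p_1-p_2)$ so that $(1-a)p_1+ap_2=p$, the hypothesis at $z_i$ rewrites as $C_i\le\mu\cdot OPT_p+\mu z_i(p-p_i)$, and taking the $(1-a,a)$-weighted sum lets the Lagrangian-penalty terms cancel up to an error of $O(\eta n)$ driven by $|z_1-z_2|$. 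Thus the randomized algorithm that outputs $S_1$ with probability $1-a$ and $S_2$ with probability $a$ uses $p$ facilities in expectation and has expected connection cost at most $\mu\cdot OPT_p+o(1)$.

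The last step is randomized rounding to guarantee exactly $p$ open facilities, and this is where the factor of $2$ enters. Following JV, I would match each $i_2\in S_2$ to its nearest $\phi(i_2)\in S_1$; with probability $a$ the algorithm outputs $S_2$ augmented by $p-p_2$ elements drawn uniformly from $S_1\setminus\phi(S_2)$, and with probability $1-a$ it outputs a random $p$-subset of $S_1$ that contains $\phi(S_2)$. A client $j$ served by $i_2\in S_2$ in the $S_2$-branch can always fall back on $\phi(i_2)$ in the $S_1$-branch, and the triangle inequality $d(j,\phi(i_2))\le d(j,i_2)+d(i_2,i_1(j))\le 2d(j,i_2)+d(j,i_1(j))$ -- with $i_1(j)$ the nearest facility to $j$ in $S_1$ -- is exactly what produces the factor-$2$ blow-up on the expected connection cost. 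Combined with the previous paragraph, linearity of expectation yields expected total cost at most $2\mu\cdot OPT_p$.

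I expect the randomized rounding to be the delicate step: the binary-search phase and the cancellation of multiplier terms in the convex combination are largely mechanical once the right weight $a$ is chosen, whereas verifying that the rounding simultaneously lands on exactly $p$ facilities and preserves expected cost up to a factor of $2$ requires carefully aligning the matching $\phi$ so that every client has a nearby backup facility in every realization of the rounding.
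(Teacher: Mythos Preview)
Your proposal is correct and follows exactly the Jain--Vazirani Lagrangian-relaxation template that the paper defers to: binary search on the uniform opening cost $z$, convex combination of the two bracketing solutions so that the multiplier terms cancel (up to the negligible $O(\eta n)$ slack from the window width), and randomized rounding via the nearest-neighbor matching between the small and large solutions to obtain exactly $p$ facilities with a factor-$2$ loss. Modulo the harmless relabeling (the paper takes $p_1 = |A| < p < |B| = p_2$ whereas you take $p_1 > p > p_2$), your description of the rounding---output the small set plus a random top-up, or the matched subset of the large set plus the same random top-up---coincides with the paper's Section~\ref{subsubsec:seq-randalgo}, so there is nothing to add.
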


Note that the facility location algorithm described in Section \ref{section:facilityLocation} 
returns a solution satisfying the guarantee in Lemma \ref{lem:jv} (cf. Theorem \ref{thm:FacLocGuarantee}). All that we need to show is that the randomized rounding algorithm can be efficiently implemented in the $k$-machine model. In the following sections, we first describe the sequential randomized algorithm $\mathcal{A}'$ \cite{JainV01}, and then discuss how to implement it in $k$-machine model.

\subsection{The Sequential Algorithm} \label{subsubsec:seq-randalgo}

Let $c_{\max}$ and $c_{\min}$ be the maximum and minimum inter-point distances respectively. Using a Facility Location algorithm that has the guarantee of Lemma \ref{lem:jv}, we perform binary search on the facility opening cost $z$ in the range $[0, n\cdot c_{\max}]$. If we come across a solution $A'$ such that $|A'| = p$, then we have a $\mu$-approximate solution and we stop. Otherwise, we find two solutions $A$ and $B$, such that $|A| < p < |B|$, with $z_A - z_B \le c_{\min}/(12n^2)$, where $z_A$ and $z_B$ are the facility opening costs corresponding to the solutions $A$ and $B$ respectively. Let $p_1 = |A|$ and $p_2 =|B|$. We now obtain a solution $C$ from $A$ and $B$, such that $|C| = p$.

Construct the set $B' \subseteq B$ as follows. Starting with an empty set, for each vertex in $A$, add the closest vertex in $B$ to $B'$, breaking ties arbitrarily. If at this point, $|B'| < p_1$, add arbitrary vertices from $B \setminus B'$ to $B'$ until $|B'| = p_1$. Set $C = A$, with probability $a$, and $C = B'$ with probability $b$, where $a = \frac{p_2-p}{p_2 - p_1}, b = \frac{p-p_1}{p_2 - p_1}$. Now, pick a set of $p-p_1$ vertices from $B \setminus B'$, and add it to $C$. It is clear that $|C| = p$, and this is the claimed solution with expected cost $2\mu$ times that of the optimal $p$-median cost.

\subsection{Implementation in the \texorpdfstring{$k$}{k}-machine Model}

In order to implement the sequential algorithm in the $k$-machine model, we will assign a special machine (say the machine with the smallest ID), which executes the key steps of the sequential algorithm. For convenience, we refer to this machine as $M_1$. First, each machine sends the weights of minimum and maximum weight edges incident on any of the vertices hosted by it to $M_1$. This allows $M_1$ to figure out the smallest edge weight \(w_{min}\) and the largest edge weight \(w_{max}\) in the input graph and it sets $c_{min} = w_{min}$ and $c_{max} = n \cdot w_{max}$ (which is a crude polynomial upper bound). The machines perform binary search on the facility opening cost to obtain two solutions $A$, and $B$ by using Algorithm \ref{alg:MettuPlaxtonPhase2} (modified appropriately to take facility opening cost as input parameter). We assume that each machine knows the subsets of the vertices hosted by it that belong to $A$ and $B$ respectively.

Now, we show how the machines identify the set $B' \subseteq B$ in $\tilde{O}(n/k)$ rounds. Using Lemmas \ref{lemma:MSSP} and \ref{lemma:ExclusiveMSSP} with \(T = B\), for each vertex in  $A$, we determine the approximately closest vertex from $B$ in $\to(n/k)$ rounds, and let $B''$ be this set. At this point, each machine also knows which of its vertices belongs to $B''$. In $O(1)$ rounds, each machine sends the number of its vertices belonging to $A, B$, and $B''$, to $M_1$. If $M_1$ discovers that $|B''| < p_1$, then it decides arbitrary $p_1 - |B''|$ vertices from $B$, and informs the respective machines to mark those vertices as belonging to $B'$, and update the counts accordingly. This takes $\tilde{O}(n/k)$ rounds.

Now, $M_1$ locally determines whether $A$ or $B'$ will be included in the solution set $C$ (with probability $a$ and $b$ respectively) and informs all other machines. Note that $M_1$ knows the number of vertices in $B \setminus B'$ that belong to each of the machines so it can sample $p - p_1$ vertices in the set $B'' \subseteq B \setminus B'$ as follows. For a machine $M_j$, $M_1$ sends it the number $t_j$ which is the number of vertices from $B \setminus B'$ hosted by $M_j$ that are chosen by $M_1$ uniformly at random to be in $B''$. Finally, each machine $M_j$ chooses a set of $t_j$ vertices uniformly at random from the set $B \setminus B'$ that it hosts. It is easy to see that this procedure guarantees that each vertex from the set $B \setminus B'$ has probability $b$ of getting chosen in the set $B''$. The set $C \gets C \cup B''$ is the final solution.


At this point, each machine knows the subset of $C$ that is hosted by it. We use Lemmas \ref{lemma:MSSP} and \ref{lemma:ExclusiveMSSP} to identify for each vertex $u \in V$, the approximately closest vertex $v \in C$ in $\tilde{O}(n/k)$ rounds. In additional $\tilde{O}(n/k)$ rounds, $M_1$ can compute the approximate cost of the solution. Note that in this step, and while computing $B'$, we use an $(1+\epsilon)$-approximate SSSP algorithm, instead of an exact SSSP algorithm. 
Using this fact in the analysis of \cite{JainV01}, it can be shown that this does not increase the expected cost of the solution by more than an $O(\epsilon)$ factor. We omit the details. Thus, the solution obtained by our algorithm has cost at most $6 + O(\epsilon)$ times the optimal solution with high probability. Finally, setting the value of $\epsilon$ for the facility location algorithm appropriately yields the following theorem.
\begin{theorem}
	For any constant $\epsilon > 0$, there exists a randomized algorithm to obtain a $6 + \epsilon$ factor approximation to the $p$-median problem in the $k$-machine model in $\tilde{O}(n/k)$ rounds w.h.p.
\end{theorem}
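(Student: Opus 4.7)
The plan is to combine the Jain--Vazirani reduction (Lemma \ref{lem:jv}) with the facility location algorithm developed in Section \ref{section:facilityLocation}, and then argue that every step of the sequential randomized rounding procedure of Section \ref{subsubsec:seq-randalgo} can be executed in $\tilde{O}(n/k)$ rounds in the $k$-machine model. By Theorem \ref{thm:FacLocGuarantee}, Algorithm \ref{alg:MettuPlaxtonPhase2} (modified to take the uniform opening cost $z$ as a parameter) returns a solution satisfying $C + 2\beta F z \le 3(1+\epsilon) \cdot OPT$ for any $\beta \in [1, 3/2]$, so choosing $\beta = 3/2$ gives the hypothesis of Lemma \ref{lem:jv} with $\mu = 3(1+\epsilon)$. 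The final approximation factor coming from Lemma \ref{lem:jv} is therefore $2\mu = 6 + O(\epsilon)$, and rescaling $\epsilon$ at the start yields $6 + \epsilon$.

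For the round complexity, I would first designate the machine with the smallest \texttt{ID} as $M_1$, and in $O(1)$ rounds have every machine forward the smallest and largest edge weights it holds to $M_1$; this lets $M_1$ compute $c_{\min}$ and a polynomial upper bound $c_{\max}$. The binary search over $z \in [0, n \cdot c_{\max}]$ terminates once $z_A - z_B \le c_{\min}/(12n^2)$, which takes $O(\log(n \cdot c_{\max}/c_{\min})) = O(\log n)$ iterations under the standing assumption that edge weights are polynomially bounded. Each iteration invokes Algorithm \ref{alg:MettuPlaxtonPhase2} once (which runs in $\tilde{O}(n/k)$ rounds by Theorem \ref{thm:FacLocGuarantee}), so the binary search contributes $\tilde{O}(n/k)$ rounds in total.

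Next, I would implement the combination step producing $C$ from $A$ and $B$. The set $B' \subseteq B$ is computed by one call to MSSP with source set $T = B$ (Lemma \ref{lemma:MSSP}); each machine then knows, for every $a \in A$ it hosts, the approximately nearest vertex in $B$, and can mark locally which of its vertices belong to $B''$, the closest-assignment set. In $O(1)$ rounds each machine reports $|A \cap H(m_j)|$, $|B \cap H(m_j)|$, and $|B'' \cap H(m_j)|$ to $M_1$, which then (if $|B''| < p_1$) designates arbitrary additional machines to promote specific vertices of $B \setminus B''$ into $B'$, bringing $|B'| = p_1$. $M_1$ then tosses a single biased coin (with probabilities $a,b$) to decide whether $A$ or $B'$ seeds $C$, and broadcasts the outcome in $O(1)$ rounds. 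To sample $p - p_1$ further vertices uniformly at random from $B \setminus B'$, $M_1$ uses a multivariate hypergeometric distribution over machines (using the counts it already has) to allocate quotas $t_j$; machine $m_j$ locally samples $t_j$ vertices uniformly from $(B \setminus B') \cap H(m_j)$, which gives the same marginal distribution on $B \setminus B'$ as global uniform sampling. A final invocation of MSSP with source $C$ (Lemma \ref{lemma:MSSP}) assigns each client to its nearest open facility and lets $M_1$ tabulate the approximate cost; every such step uses $\tilde{O}(n/k)$ rounds.

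The main subtlety, and the step I expect to need the most care, is showing that replacing the exact metric by the $(1+\epsilon)$-approximate distances from our SSSP/MSSP/\textsc{ExclusiveMSSP} routines does not degrade the Jain--Vazirani guarantee by more than an $O(\epsilon)$ factor. Concretely, the analysis of Lemma \ref{lem:jv} bounds the cost of $B'$ by exploiting the triangle inequality on the ``nearest-in-$B$'' assignment from $A$, and the cost of $C$ by a case analysis over whether $C = A$ or $C = B'$. When ``nearest'' is replaced by $(1+\epsilon)$-approximately nearest, every triangle-inequality step in that analysis can absorb a $(1+\epsilon)$ multiplicative slack; aggregating over the $O(1)$ triangle-inequality hops used in \cite{JainV01} gives a cumulative $(1+\epsilon)^{O(1)} = 1+O(\epsilon)$ blow-up. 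Combined with the $3(1+\epsilon)$ factor from Theorem \ref{thm:FacLocGuarantee} and the factor of $2$ from Lemma \ref{lem:jv}, the expected cost of the returned solution is $6 + O(\epsilon)$ times the optimum; rescaling $\epsilon$ proves the theorem. All randomized statements go through with high probability by a union bound over the $O(\log n)$ binary-search iterations and the w.h.p.\ guarantees of Lemmas \ref{lemma:MSSP}, \ref{lemma:ExclusiveMSSP} and Theorem \ref{thm:FacLocGuarantee}.
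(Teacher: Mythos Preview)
Your proposal is correct and follows essentially the same route as the paper: invoke Theorem \ref{thm:FacLocGuarantee} to satisfy the hypothesis of Lemma \ref{lem:jv}, implement the Jain--Vazirani binary search and randomized rounding in the $k$-machine model via a designated coordinator $M_1$ plus calls to MSSP, and absorb the $(1+\epsilon)$ distance slack into an $O(\epsilon)$ loss in the final factor. If anything, your write-up is more explicit than the paper's on two points (the hypergeometric allocation of the $p-p_1$ samples across machines, and the per-hop $(1+\epsilon)$ accounting in the Jain--Vazirani analysis), both of which the paper waves at but does not spell out.
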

 \section{A $p$-center algorithm}
\label{section:pCenter}

In this section, we describe a constant factor approximation algorithm for the $p$-center problem. It is a well-known that (see for example \cite{Gonzalez1985}), if $d^*$ is an optimal $p$-center cost, then any distance-$2d^*$ MIS is a $2$-approximation for the $p$-center. But since we do not know how to compute a distance-$d$ MIS efficiently in the $k$-machine model, we show in the following Lemma that an $(\epsilon, 2 \cdot (1+\epsilon)d^*)$-approximate MIS suffices to get an $O(1)$-approximation.

\begin{Lemma} \label{lem:apx-pcenter}
	For a graph $G = (V, E)$, if $d^*$ is an optimal $p$-center cost, then any $(\epsilon, 2(1+\epsilon)d^*)$-approximate MIS is an $2(1+\epsilon)^2$ approximation.
\end{Lemma}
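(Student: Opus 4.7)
The plan is to extract the two consequences of the $(\epsilon, d)$-approximate MIS definition with $d = 2(1+\epsilon) d^{*}$ and combine them with a comparison against an optimal $p$-center solution.

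First I would instantiate the definition. Let $I$ be an $(\epsilon, 2(1+\epsilon) d^{*})$-approximate MIS. From property (1), any two distinct $u, v \in I$ satisfy $d(u,v) \ge 2(1+\epsilon)d^{*}/(1+\epsilon) = 2 d^{*}$. From property (2), every $u \in V \setminus I$ has some $v \in I$ with $d(u,v) \le 2(1+\epsilon)d^{*} \cdot (1+\epsilon) = 2(1+\epsilon)^{2}\, d^{*}$. This already gives the cost bound we want, \emph{provided} $I$ can be used (after padding if necessary) as a $p$-center solution.

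Next I would show $|I| \le p$ by comparing to the optimum. Fix an optimal set $C^{*} = \{c_1, \ldots, c_p\}$ of centers achieving cost $d^{*}$, and let $B_i = \{v \in V : d(v, c_i) \le d^{*}\}$, so that $V = \bigcup_i B_i$. Any two points in the same $B_i$ are within $2 d^{*}$ of each other by the triangle inequality. If $|I| > p$, pigeonhole forces some $B_i$ to contain two distinct vertices $u, v \in I$, giving $d(u,v) \le 2 d^{*}$; combined with the MIS lower bound this forces $d(u,v) = 2 d^{*}$, a degenerate equality case. I would eliminate this case by appealing to the construction in Algorithm \ref{alg:DistdMIS}, where the condition $\tilde d(v, R\setminus\{v\}) > d$ (together with the approximation guarantee of \textsc{ExclusiveMSSP}) actually yields the strict inequality $d(u,v) > 2 d^{*}$, contradicting $d(u,v) \le 2 d^{*}$. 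Alternatively, a symbolic tie-breaking / infinitesimal perturbation of edge weights handles the degenerate case without changing the approximation ratio.

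Finally I would assemble the approximation bound. Having established $|I| \le p$, pad $I$ to exactly $p$ centers by adding arbitrary vertices (this can only decrease the $p$-center cost). Every $v \in V$ satisfies $d(v, I) \le 2(1+\epsilon)^{2}\, d^{*}$: vertices in $I$ trivially achieve distance $0$, and vertices outside $I$ are handled by property (2). Hence $\max_{v \in V} d(v, I) \le 2(1+\epsilon)^{2}\, d^{*}$, which is a $2(1+\epsilon)^{2}$-factor approximation to the optimum $d^{*}$.

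The only subtlety, and what I expect to be the main obstacle, is the equality/degenerate case in the pigeonhole step. Everything else is bookkeeping with the two defining inequalities of the approximate MIS; the substantive content of the lemma is that the $1/(1+\epsilon)$ shrinkage in property (1) is just enough to keep $|I|$ bounded by the number of optimal clusters, while the $(1+\epsilon)$ expansion in property (2) contributes only a second $(1+\epsilon)$ factor to the final ratio.
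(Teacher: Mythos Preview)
Your proposal is correct and follows essentially the same approach as the paper: partition (or cover) $V$ by the $p$ optimal clusters, use the triangle inequality to show any two points in the same cluster are at distance $\le 2d^*$, and conclude via property~(1) that $|I|\le p$; property~(2) then gives the $2(1+\epsilon)^2$ cost bound. In fact you are more careful than the paper, which simply asserts that $d(v_1,v_2)\le 2d^*$ ``is a contradiction'' to property~(1) without addressing the equality case you flagged.
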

\begin{proof}
	Let $O = \{o_1, o_2, \cdots, o_p\} \subseteq V$ be an optimal $p$-center solution (we assume without loss of generality that $O$ contains exactly $p$ centers). Define a partition $\{V_i\}$ of the vertex set $V$, by defining the set $V_i$ for each $o_i \in O$ as follows. For each $o_i \in O$, let $V_i \subseteq V$ be the set of vertices, for which $o_i$ is the closest center in $O$. Here we break ties arbitrarily, so that each vertex appears in exactly one of the sets $V_i$. Note that if $v \in V_j$ for some $j$, then $d(v, o_j) = d(v, O) \le d^*$.
	
	Now let $I \subseteq V$ be any $(\epsilon, 2(1+\epsilon)d^*)$-approximate MIS. We first show that $I$ is feasible, i.e. $|I| \le p$, by showing that for any $i \in \{1, 2, \cdots, p\}$, $|V_i \cap I| \le 1$. Assume this is not the case, i.e. for some $i$, there exist distinct $v_1, v_2 \in V_i \cap I$. But this implies that $d(v_1, v_2) \le d(v_1, o_i) + d(o_i, v_2) \le 2d^*$, which is a contradiction to the fact that $I$ is an $(\epsilon, 2(1+\epsilon)d^*)$-approximate MIS.
	
	Finally, the approximation guarantee follows from the definition of an approximate MIS -- for any $v \in V$, there exists an $u \in I$ such that $d(u, v) \le 2(1+\epsilon)^2 d^*$.
\end{proof}

Although we do not know the optimal $p$-center cost $d^*$ we can find it by doing a binary search to get the largest $d$ such that an $(\epsilon, 2(1+\epsilon)d)$-approximate MIS has size at most $p$. There are at most $O(\log n)$ iterations of the binary search because of our assumption that the distances bounded by $\mbox{poly}(n)$. This along with Lemma \ref{lem:DistdMIS} gives us the following theorem.

\begin{theorem}
	For any constant $\epsilon > 0$, there exists a randomized algorithm to obtain a $(2+\epsilon)$-factor approximation to the $p$-center problem in the $k$-machine model in $\tilde{O}(n/k)$ rounds w.h.p.
\end{theorem}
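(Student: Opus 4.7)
The plan is to combine Lemma \ref{lem:apx-pcenter} with the $k$-machine $(\epsilon,d)$-approximate MIS subroutine from Lemma \ref{lem:DistdMIS}, wrapped in a binary search over the threshold distance $d$. First I would designate a coordinator machine $M_1$ (e.g., the machine with smallest \texttt{ID}) that drives the binary search; as in Section \ref{subsubsec:seq-randalgo}, $M_1$ can learn the minimum and maximum edge weights of $G$ in $O(1)$ rounds, and since by our standing normalization all distances lie in $[1,\mbox{poly}(n)]$, the optimal $p$-center radius $d^*$ lies in an interval of width $\mbox{poly}(n)$. Discretizing this interval in multiplicative steps of $(1+\epsilon)$ leaves $O(\log_{1+\epsilon} n)=O(\log n)$ candidate values of $d$.

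For each candidate $d$, the machines collectively invoke \textsc{ApproximateMIS}$(G,V,2(1+\epsilon)d,\epsilon)$ (Algorithm \ref{alg:DistdMIS}) to produce an $(\epsilon,2(1+\epsilon)d)$-approximate MIS $I_d$. Each machine reports $|I_d \cap H(m_j)|$ to $M_1$ in $O(1)$ rounds, so $M_1$ learns $|I_d|$. The coordinator then performs a binary search for the smallest discretized $d$ such that $|I_d|\le p$; call this value $\hat d$. By Lemma \ref{lem:apx-pcenter}, whenever $d\ge d^*$ any $(\epsilon,2(1+\epsilon)d)$-approximate MIS has size at most $p$, so $\hat d\le (1+\epsilon)d^*$ w.h.p. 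Consequently $I_{\hat d}$ is an $(\epsilon,2(1+\epsilon)\hat d)$-approximate MIS, and by the second property of an approximate MIS every $v\in V$ is within distance $2(1+\epsilon)^2\hat d \le 2(1+\epsilon)^3 d^*$ of some vertex in $I_{\hat d}$. Choosing $\epsilon$ small enough makes the final factor $2+\epsilon$. Finally, after $M_1$ fixes $\hat d$ and therefore the facility set $F=I_{\hat d}$, one additional call to MSSP with $F$ as sources (Lemma \ref{lemma:MSSP}) produces, for each client $v$, the \texttt{ID} of its approximately nearest open facility, satisfying the output specification of Section 1.2.

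For the round complexity, the binary search has $O(\log\log n)$ (or, naively, $O(\log n)$) probes; each probe runs \textsc{ApproximateMIS} once in $\tilde O(n/k)$ rounds by Lemma \ref{lem:DistdMIS}, so the total cost is $\tilde O(n/k)$ rounds w.h.p. I would take a union bound over the $O(\log n)$ probes so that every invoked MIS call meets its high-probability correctness guarantee simultaneously.

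The main obstacle I anticipate is the gap between an exact distance-$2d^*$ MIS (which is known to give a $2$-approximation) and the relaxed $(\epsilon,2(1+\epsilon)d)$-approximate MIS actually produced by Algorithm \ref{alg:DistdMIS}. One has to argue two things at once: that feasibility ($|I_d|\le p$) is preserved for every $d\ge d^*$, and that this still holds for the specific discretized value $\hat d$ returned by the binary search despite the approximation factor $(1+\epsilon)$ being absorbed in both the separation and covering conditions. Lemma \ref{lem:apx-pcenter} handles exactly this, so the remaining work is bookkeeping: choosing the base of the discretization, setting the internal $\epsilon$ of \textsc{ApproximateMIS} to a small enough constant multiple of the desired final $\epsilon$, and verifying that the blow-up $(1+\epsilon)^3$ can be folded into $2+\epsilon$ by redefining $\epsilon$.
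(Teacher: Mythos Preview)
Your proposal is correct and follows essentially the same approach as the paper: discretize the candidate radii into $O(\log n)$ powers of $(1+\epsilon)$, run \textsc{ApproximateMIS} for each, and select the smallest $d$ giving at most $p$ centers, invoking Lemma~\ref{lem:apx-pcenter} for feasibility at $d\ge d^*$ and the covering property for the approximation bound. Your write-up is in fact more detailed than the paper's one-paragraph sketch; the only caveat is that the $O(\log\log n)$ binary-search variant tacitly assumes monotonicity of $|I_d|$ in $d$, which is not guaranteed for the randomized approximate MIS, so you should rely on the naive $O(\log n)$ scan you already mention (the paper makes the same elision).
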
 

\section{Conclusions}
\label{section:conclusions}

This paper initiates the study of clustering problems in the $k$-machine model
and presents near-optimal (in rounds) constant-factor approximation algorithms for
these problems.
The near-optimality of our algorithms is established via almost-matching lower bounds on
on the number of rounds needed to solve these problems in the $k$-machine model.
However, the lower bounds critically depend a certain assumption regarding how the output
of the clustering algorithms is to be represented.
Specifically, we require that every machine with an open facility knows all clients
connecting to that facility.
This requirement forces some machines to learn a large volume of information distributed
across the network and this leads to our lower bounds.

We could alternately, impose a rather ``light weight'' output requirement and, for example,
require each machine with an open facility to simply know the \textit{number} of clients connecting to
it or the aggregate connection cost of all the clients connecting to it.
(Of course, independent of this change, the output requires that each client know the facility it connects to.)
So the main open question that follows from our work is whether we can design
optimal $k$-machine algorithms under this relaxed output requirement.
$\Omega(n/k^2)$ lower bounds do not seem difficult to prove in this setting, but to obtain
$\tilde{O}(n/k^2)$-round constant-approximation algorithms seems much harder.
Alternately, can we prove stronger lower bounds even in this, more relaxed, setting?

\bibliographystyle{plain}
\bibliography{bibliography} 

\begin{thebibliography}{10}

\bibitem{AfekABHBBScience2011}
Yehuda Afek, Noga Alon, Omer Barad, Eran Hornstein, Naama Barkai, and Ziv
  Bar-Joseph.
\newblock A biological solution to a fundamental distributed computing problem.
\newblock {\em Science}, 331(6014):183--185, 2011.

\bibitem{AlonMSSTOC1996}
Noga Alon, Yossi Matias, and Mario Szegedy.
\newblock The space complexity of approximating the frequency moments.
\newblock In {\em Proceedings of the Twenty-eighth Annual ACM Symposium on
  Theory of Computing}, STOC '96, pages 20--29, New York, NY, USA, 1996. ACM.

\bibitem{ArcherRSESA2003}
Aaron Archer, Ranjithkumar Rajagopalan, and David~B. Shmoys.
\newblock Lagrangian relaxation for the k-median problem: New insights and
  continuity properties.
\newblock In Giuseppe {Di Battista} and Uri Zwick, editors, {\em Algorithms -
  ESA 2003: 11th Annual European Symposium, Budapest, Hungary, September 16-19,
  2003. Proceedings}, pages 31--42, Berlin, Heidelberg, 2003. Springer Berlin
  Heidelberg.

\bibitem{AryaGKMMPSTOC2001}
Vijay Arya, Naveen Garg, Rohit Khandekar, Adam Meyerson, Kamesh Munagala, and
  Vinayaka Pandit.
\newblock Local search heuristic for k-median and facility location problems.
\newblock In {\em Proceedings of the Thirty-third Annual ACM Symposium on
  Theory of Computing}, STOC '01, pages 21--29, New York, NY, USA, 2001. ACM.

\bibitem{BeckerKKLdisc17}
Ruben Becker, Andreas Karrenbauer, Sebastian Krinninger, and Christoph Lenzen.
\newblock {Near-Optimal Approximate Shortest Paths and Transshipment in
  Distributed and Streaming Models}.
\newblock In Andr{\'e}a~W. Richa, editor, {\em 31st International Symposium on
  Distributed Computing (DISC 2017)}, volume~91 of {\em Leibniz International
  Proceedings in Informatics (LIPIcs)}, pages 7:1--7:16, Dagstuhl, Germany,
  2017. Schloss Dagstuhl--Leibniz-Zentrum fuer Informatik.

\bibitem{ChingEKLMVLDB2015}
Avery Ching, Sergey Edunov, Maja Kabiljo, Dionysios Logothetis, and Sambavi
  Muthukrishnan.
\newblock One trillion edges: Graph processing at facebook-scale.
\newblock {\em Proc. VLDB Endow.}, 8(12):1804--1815, August 2015.

\bibitem{Cohen1997}
Edith Cohen.
\newblock {Size-Estimation Framework with Applications to Transitive Closure
  and Reachability}.
\newblock {\em Journal of Computer and System Sciences}, 55(3):441--453, 1997.

\bibitem{Cohen2015}
Edith Cohen.
\newblock {All-Distances Sketches, Revisited: HIP Estimators for Massive Graphs
  Analysis}.
\newblock {\em IEEE Transactions on Knowledge and Data Engineering},
  27(9):2320--2334, 2015.

\bibitem{DeanGhemawatCACM2010}
Jeffrey Dean and Sanjay Ghemawat.
\newblock Mapreduce: A flexible data processing tool.
\newblock {\em Commun. ACM}, 53(1):72--77, January 2010.

\bibitem{DruckerKOPODC2014}
Andrew Drucker, Fabian Kuhn, and Rotem Oshman.
\newblock On the power of the congested clique model.
\newblock In {\em Proceedings of the 2014 ACM Symposium on Principles of
  Distributed Computing}, PODC '14, pages 367--376, New York, NY, USA, 2014.
  ACM.

\bibitem{EneIMKDD2011}
Alina Ene, Sungjin Im, and Benjamin Moseley.
\newblock Fast clustering using mapreduce.
\newblock In {\em Proceedings of the 17th ACM SIGKDD International Conference
  on Knowledge Discovery and Data Mining}, KDD '11, pages 681--689, New York,
  NY, USA, 2011. ACM.

\bibitem{GarimellaDGSCIKM2015}
Kiran Garimella, Gianmarco {De Francisci Morales}, Aristides Gionis, and Mauro
  Sozio.
\newblock Scalable facility location for massive graphs on pregel-like systems.
\newblock In {\em Proceedings of the 24th ACM International on Conference on
  Information and Knowledge Management}, CIKM '15, pages 273--282, New York,
  NY, USA, 2015. ACM.

\bibitem{GonzalezTCS1985}
Teofilo~F. Gonzalez.
\newblock Clustering to minimize the maximum intercluster distance.
\newblock {\em Theor. Comput. Sci.}, 38:293--306, 1985.

\bibitem{Gonzalez1985}
Teofilo~F Gonzalez.
\newblock Clustering to minimize the maximum intercluster distance.
\newblock {\em Theoretical Computer Science}, 38:293--306, 1985.

\bibitem{GuhaKhullerSODA1998}
Sudipto Guha and Samir Khuller.
\newblock Greedy strikes back: Improved facility location algorithms.
\newblock In {\em Proceedings of the Ninth Annual ACM-SIAM Symposium on
  Discrete Algorithms}, SODA '98, pages 649--657, Philadelphia, PA, USA, 1998.
  Society for Industrial and Applied Mathematics.

\bibitem{HegemanPPSSPODC15}
James~W. Hegeman, Gopal Pandurangan, Sriram~V. Pemmaraju, Vivek~B. Sardeshmukh,
  and Michele Scquizzato.
\newblock Toward optimal bounds in the congested clique: Graph connectivity and
  mst.
\newblock In {\em Proceedings of the 2015 ACM Symposium on Principles of
  Distributed Computing}, PODC '15, pages 91--100, New York, NY, USA, 2015.
  ACM.

\bibitem{HegemanPemmarajuDC2015}
James~W. Hegeman and Sriram~V. Pemmaraju.
\newblock Sub-logarithmic distributed algorithms for metric facility location.
\newblock {\em Distrib. Comput.}, 28(5):351--374, October 2015.

\bibitem{JainMSSTOC2002}
Kamal Jain, Mohammad Mahdian, and Amin Saberi.
\newblock A new greedy approach for facility location problems.
\newblock In {\em Proceedings of the Thiry-fourth Annual ACM Symposium on
  Theory of Computing}, STOC '02, pages 731--740, New York, NY, USA, 2002. ACM.

\bibitem{JainVaziraniJACM2001}
Kamal Jain and Vijay~V. Vazirani.
\newblock Approximation algorithms for metric facility location and k-median
  problems using the primal-dual schema and lagrangian relaxation.
\newblock {\em J. ACM}, 48(2):274--296, March 2001.

\bibitem{JainV01}
Kamal Jain and Vijay~V. Vazirani.
\newblock Approximation algorithms for metric facility location and k-median
  problems using the primal-dual schema and lagrangian relaxation.
\newblock {\em J. ACM}, 48(2):274--296, 2001.

\bibitem{KlauckNPRSODA15}
Hartmut Klauck, Danupon Nanongkai, Gopal Pandurangan, and Peter Robinson.
\newblock Distributed computation of large-scale graph problems.
\newblock In {\em Proceedings of the Twenty-sixth Annual ACM-SIAM Symposium on
  Discrete Algorithms}, SODA '15, pages 391--410, Philadelphia, PA, USA, 2015.
  Society for Industrial and Applied Mathematics.

\bibitem{LiICALP2011}
Shi Li.
\newblock A 1.488 approximation algorithm for the uncapacitated facility
  location problem.
\newblock In {\em Proceedings of the 38th International Conference on Automata,
  Languages and Programming - Volume Part II}, ICALP'11, pages 77--88, Berlin,
  Heidelberg, 2011. Springer-Verlag.

\bibitem{Luby1986}
M.~Luby.
\newblock A simple parallel algorithm for the maximal independent set.
\newblock {\em SIAM Journal on Computing}, 15:1036--1053, 1986.

\bibitem{MalewiczABDHLCSIGMOD2010}
Grzegorz Malewicz, Matthew~H. Austern, Aart~J.C Bik, James~C. Dehnert, Ilan
  Horn, Naty Leiser, and Grzegorz Czajkowski.
\newblock Pregel: A system for large-scale graph processing.
\newblock In {\em Proceedings of the 2010 ACM SIGMOD International Conference
  on Management of Data}, SIGMOD '10, pages 135--146, New York, NY, USA, 2010.
  ACM.

\bibitem{MettuPlaxtonSICOMP2003}
Ramgopal~R. Mettu and C.~Greg Plaxton.
\newblock The online median problem.
\newblock {\em SIAM J. Comput.}, 32(3):816--832, March 2003.

\bibitem{PanduranganRSSPAA2016}
Gopal Pandurangan, Peter Robinson, and Michele Scquizzato.
\newblock Fast distributed algorithms for connectivity and mst in large graphs.
\newblock In {\em Proceedings of the 28th ACM Symposium on Parallelism in
  Algorithms and Architectures}, SPAA '16, pages 429--438, New York, NY, USA,
  2016. ACM.

\bibitem{PanduranganRSarxiv16}
Gopal Pandurangan, Peter Robinson, and Michele Scquizzato.
\newblock Tight bounds for distributed graph computations.
\newblock {\em CoRR}, abs/1602.08481, 2016.

\bibitem{SilvaFBHCCompSurv2013}
Jonathan~A. Silva, Elaine~R. Faria, Rodrigo~C. Barros, Eduardo~R. Hruschka,
  Andr{\'e} C. P. L. F.~de Carvalho, and Jo{\~a}o Gama.
\newblock Data stream clustering: A survey.
\newblock {\em ACM Comput. Surv.}, 46(1):13:1--13:31, July 2013.

\bibitem{Thorup2001}
Mikkel Thorup.
\newblock {Quick k-Median, k-Center, and Facility Location for Sparse Graphs}.
\newblock {\em SIAM Journal on Computing}, 34(2):405--432, 2005.

\bibitem{YaroslavtsevVarxiv2017}
G.~{Yaroslavtsev} and A.~{Vadapalli}.
\newblock {Massively Parallel Algorithms and Hardness for Single-Linkage
  Clustering Under $\ell\_p$-Distances}.
\newblock {\em ArXiv e-prints}, October 2017.

\end{thebibliography}

\appendix
\section{Technical Proofs from Section \ref{subsec:primaldual}} \label{appendix:beta-mp-proofs}
In this appendix, we give some of the technical proofs required to prove Lemma \ref{lem:mp-beta}, and then give its proof.

Throughout this section, we condition on the event that the outcome of all the randomized algorithms is as expected (i.e. the ``bad'' events do not happen). Note that this happens with w.h.p. We first need the following facts along the lines of \cite{Thorup2001}. 

\begin{Lemma}[Modified From Lemma 8 Of \cite{Thorup2001}] \label{lem:lem8}
	There exists a total ordering $\prec$ on the vertices in $V$ such that $u \prec v \implies \rdown_u \le (1+\epsilon) \cdot \rdown_v$, and $v$ is added to $S$ if and only if there is no previous $u \prec v$ in $S$ such that $d(u, v) \le 2(1+\epsilon)^2 \rdown_v$.
\end{Lemma}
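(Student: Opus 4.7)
\textbf{Proposal for the proof of Lemma \ref{lem:lem8}.}

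The plan is to exhibit the ordering $\prec$ explicitly from the trace of Algorithm \ref{alg:MettuPlaxtonPhase2}, and then verify the two stated properties by a case analysis that distinguishes pairs $(u,v)$ lying in different outer iterations from pairs lying in the same outer iteration.

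\emph{Defining the order.} For vertices $u,v \in V$ considered in outer iterations $i_u$ and $i_v$ respectively (so $\tilde{r}_u = (1+\epsilon)^{i_u}$ and $\tilde{r}_v = (1+\epsilon)^{i_v}$), declare $u \prec v$ whenever $i_u < i_v$. For vertices processed in the same outer iteration $i$, totally order them using the execution of \textsc{ApproximateMIS} at that iteration: vertices committed to $I$ in an earlier inner iteration come before vertices committed later, with machine-ID used as a final tiebreaker; every vertex in $W$ at iteration $i$ that is neither committed nor eliminated is placed after all vertices committed during this outer iteration (any consistent rule suffices). The first property is then immediate: across iterations $\tilde{r}_u < \tilde{r}_v$, and within an iteration $\tilde{r}_u = \tilde{r}_v \le (1+\epsilon)\tilde{r}_v$.

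\emph{Verifying the second property.} Consider $v$ processed in iteration $i_v$. For the direction $v \in S$, suppose $u \prec v$ is already in $S$. If $u$ was added in an earlier iteration then $v$ was not removed in Step \ref{alg4:remove-vertices} of iteration $i_v$, so $d(u,v) > 2(1+\epsilon)^2 \tilde{r}_v$. If $u$ was added in the same iteration as $v$ (so both belong to the returned MIS), then the $(\epsilon,\,2(1+\epsilon)^3\tilde{r}_v)$-approximate MIS guarantee gives
\[
d(u,v) \;\ge\; \frac{2(1+\epsilon)^3\tilde{r}_v}{1+\epsilon} \;=\; 2(1+\epsilon)^2\tilde{r}_v.
\]
For the other direction, if $v \notin S$ then either Step \ref{alg4:remove-vertices} of iteration $i_v$ removed $v$ because some $u \in S$ (added in a strictly earlier iteration, hence $u \prec v$) satisfied $d(u,v) \le 2(1+\epsilon)^2\tilde{r}_v$, or $v$ survived Step \ref{alg4:remove-vertices} but was eliminated by the MIS, in which case the approximate-MIS covering property produces some $u \in I$ with $d(u,v) \le 2(1+\epsilon)^4\tilde{r}_v$; by the tiebreaking rule, such a committed $u$ satisfies $u \prec v$.

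\emph{Main obstacle.} The delicate point is reconciling the two constants: the forward implication naturally uses the MIS separation $2(1+\epsilon)^2 \tilde{r}_v$, while the covering direction of an approximate MIS only supplies a neighbour within $2(1+\epsilon)^4 \tilde{r}_v$. I would reconcile this exactly as Thorup does for Lemma~8 of \cite{Thorup2001}: replace $\epsilon$ throughout by a suitable $\epsilon' = \Theta(\epsilon)$ so that $(1+\epsilon')^2$-slack absorbs the covering loss, restating the bound uniformly as $2(1+\epsilon)^2 \tilde{r}_v$ (or, equivalently, leaving the constant as $2(1+\epsilon)^4$ without affecting the subsequent $3 + O(\epsilon)$ approximation bound of Theorem \ref{thm:FacLocGuarantee}). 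The remaining verifications are then routine: all approximate distances from Lemmas \ref{lemma:MSSP} and \ref{lemma:ExclusiveMSSP} are within $(1+\epsilon)$ of the true metric, and this slack can be folded into the same rescaling of $\epsilon$.
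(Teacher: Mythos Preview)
Your proposal is correct and matches the paper's approach: the paper's proof sketch also defines $\prec$ iteration by iteration, listing within each iteration the vertices of $I$ (those added to $S$) before the remaining vertices of $W$. Your explicit treatment of the constant mismatch between the separation and covering guarantees of the approximate MIS is in fact more careful than the paper's two-sentence sketch, which simply attributes the extra $(1+\epsilon)^2$ factor to ``the definition of $(\epsilon,d)$-approximate MIS'' without working through the cases.
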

\begin{proof}[Proof Sketch]
	The ordering is obtained by listing in each iteration, the vertices in $I$ that are included in $S$ before the rest of the vertices of $W$. Note that the extra $(1+\epsilon)^2$ factor appears because of the definition of $(\epsilon, d)$-approximate MIS.
\end{proof}

\begin{Claim}[Modified From Claim 9.2 Of \cite{Thorup2001}] \label{cl:cl9.2}
	For any two distinct vertices $u, v \in S$, we have that $d(u, v) \ge 2(1+\epsilon)^3 \cdot \max\{\rdown_u, \rdown_v\}$.
\end{Claim}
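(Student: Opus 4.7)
The plan is a two-case analysis based on when $u$ and $v$ are added to $S$ during the for-loop of Algorithm \ref{alg:MettuPlaxtonPhase2}, closely mirroring Thorup's original argument \cite{Thorup2001}. By Lemma \ref{lem:lem8}, I may assume without loss of generality that $u \prec v$, so that $\rdown_u \le (1+\epsilon)\rdown_v$ and hence $\max\{\rdown_u,\rdown_v\}$ agrees with $\rdown_v$ up to a $(1+\epsilon)$ factor that can be absorbed into the final bound.

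The first case handles the situation where $u$ and $v$ belong to the same batch, i.e., $\tilde{r}_u = \tilde{r}_v = (1+\epsilon)^i =: \tilde{r}$. In this case both vertices appear in the set $I$ returned by the call \textsc{ApproximateMIS}$(G, W, 2(1+\epsilon)^3 \tilde{r}, \epsilon)$ of that iteration, so the first property in the definition of an $(\epsilon, d)$-approximate MIS gives $d(u,v) \ge 2(1+\epsilon)^3 \tilde{r}/(1+\epsilon) = 2(1+\epsilon)^2 \tilde{r}$. Converting from $\tilde{r}$ back to $\rdown$ via Lemma \ref{lem:facloc-guarantee} then yields the desired lower bound.

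The second case handles the situation where $v$ is added in a strictly later iteration than $u$. Here $v$ must have survived the removal Step \ref{alg4:remove-vertices} of its iteration, so its approximate distance to the then-current set $S \ni u$ exceeded $2(1+\epsilon)^2 \tilde{r}_v$. The $(1+\epsilon)$-approximation guarantee of MSSP (Lemma \ref{lemma:MSSP}) translates this into an honest lower bound $d(u,v) > 2(1+\epsilon)\tilde{r}_v$, and I then promote this to a bound in terms of $\rdown_v$ using Lemma \ref{lem:facloc-guarantee} together with $\rdown_u \le \rdown_v$ (up to a $(1+\epsilon)$ slack). This second case is essentially the contrapositive of the characterization of $S$ given in Lemma \ref{lem:lem8}.

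The main obstacle will be careful bookkeeping of the three compounding $(1+\epsilon)$ factors that appear: one from the rounding of radii in \textsc{RadiusComputation} (Lemma \ref{lem:facloc-guarantee}), one from the approximate distances produced by MSSP in the removal step, and one from the approximate-MIS guarantee. The key to landing cleanly at $2(1+\epsilon)^3$ is to do all of the geometric reasoning in the ``$\tilde{r}$ world'' where the algorithm actually operates, and only at the very end convert back to the true MP-$\beta$ radii $\rdown$, folding the rounding slack into the exponent. Once both cases are dispatched, they combine to give the stated lower bound on $d(u,v)$.
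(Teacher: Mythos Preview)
Your two-case approach is essentially the same as the paper's: the paper simply packages both cases into a one-line appeal to Lemma~\ref{lem:lem8} (whose contrapositive is precisely your second case) together with the definition of an $(\epsilon, d)$-approximate MIS (your first case). Since Lemma~\ref{lem:lem8} is already stated in terms of the true radii $r$, you can in fact skip the detour through $\tilde r$ and Lemma~\ref{lem:facloc-guarantee} and read the bound off directly.
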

\begin{proof}[Proof Sketch]
	Without loss of generality, assume that $u \prec v$, so $r_u \le (1+\epsilon) \cdot r_v$. Now the claim follows from lemma \ref{lem:lem8}, and the definition of $(\epsilon, d)$-approximate MIS.
\end{proof}

In the rest of the section, we follow the primal-dual analysis of \cite{ArcherRSESA2003}, again with necessary modifications arising from various approximations. For completeness, we state the primal and dual LP relaxations below. We reserve the subscript \(i\) for facilities and \(j\) for clients. Note that in our case, \(i, j \in V\).
\begin{align*}
  \text{min. } \sum_{i} f_i y_i + &\sum_{i, j} d(i, j) \cdot x_{ij} & \text{max. } \sum_{j} v_j& \\
  \text{s.t.} \quad \sum_{i} x_{ij} &= 1 \quad \forall j & \text{s.t.} \quad \sum_{j} w_{ij} - f_i &\le 0 \quad \forall i \\
  y_{i} - x_{ij} &\ge 0 \quad \forall i, j & v_j - w_{ij} - d(i ,j) &\le 0 \quad \forall i, j \\
  y_i, x_{ij} & \ge 0 \quad \forall i, j & v_j, w_{ij} & \ge 0 \quad \forall i, j
\end{align*}

Let $\beta \ge 1$ be the parameter that is used in the Facility Location algorithm. Set $w_{ij} = \frac{1}{\beta} \max\{0, \rsim_i - \dup(i, j)\}$. Say that $j$ contributes to $i$ if $w_{ij} > 0$. Then, set $v_j = \min_{i \in V} d(i, j) + w_{ij}$. It is easy to see that the $v$ and $w$ values are dual feasible.

Define for each $j \in V$, $s_j = w_{ij}$ if there exists an $i \in S$ with $w_{ij} > 0$ and $0$ otherwise. Note that $s_j$ is uniquely defined, if it is not zero. This is because of the fact that the balls $B(v, \rdown_v)$ and $B(u, \rdown_u)$ are disjoint using Claim \ref{cl:cl9.2}. Also note that $f_i = \sum_{j \in V} w_{ij}$, therefore, $\sum_{i \in S} f_i = \sum_{j \in V} s_j$.

For $j \in V$, call the facility $i \in P$ that determines the minimum in $v_j = \min_{i \in V} d(i, j) + w_{ij}$, the \emph{bottleneck} of $j$. We say that a facility (or a vertex) is \emph{closed} if it does not belong to the set $S$, and it is opened otherwise. Furthermore, we say that a facility $v \in S$ \emph{caused} another facility $u \notin S$ was closed, if at the time $u$ was removed in the Algorithm \ref{alg:MettuPlaxtonPhase2}, Line \ref{alg4:remove-vertices}, $d(u, v) \le 2(1+\epsilon)^i$. Before showing the approximation guarantee, we need the following four lemmas. (cf. Lemmas 1-4 from \cite{ArcherRSESA2003})

\begin{Lemma} \label{lem:duallemma1}
	For any $i, j \in V$, we have that 
	$\tilde{r}_i \le (1+\epsilon)^3 \cdot (\beta w_{ij} + d(i, j))$. Furthermore, if for some $i, j \in V, w_{ij} > 0$, then $\tilde{r}_i \ge \frac{1}{(1+\epsilon)^3} (\beta w_{ij} + d(i, j))$.
\end{Lemma}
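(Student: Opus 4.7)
The plan is to dispose of the lemma by a direct case analysis on whether $w_{ij}$ vanishes or is strictly positive, using only the definition of the dual variable together with the approximation guarantee for the computed radii from Lemma \ref{lem:facloc-guarantee}. Recall that the dual variable is defined as $w_{ij} = \frac{1}{\beta} \max\{0, r_i - d(i,j)\}$ (in terms of the true Mettu--Plaxton radius $r_i$), while the algorithm only manipulates the approximate radius $\tilde{r}_i$ satisfying $r_i/(1+\epsilon)^3 \le \tilde{r}_i \le (1+\epsilon)^3 r_i$. So the lemma is really a bookkeeping statement that translates between $r_i$ and $\tilde{r}_i$ inside the quantity $\beta w_{ij} + d(i,j)$.

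First I would handle the easy case $w_{ij} = 0$. Here the definition of $w_{ij}$ gives $r_i \le d(i,j)$, so $\beta w_{ij} + d(i,j) = d(i,j) \ge r_i$. Invoking the upper bound in Lemma \ref{lem:facloc-guarantee}, we get
\[
\tilde{r}_i \;\le\; (1+\epsilon)^3 \cdot r_i \;\le\; (1+\epsilon)^3 \cdot d(i,j) \;=\; (1+\epsilon)^3 \cdot(\beta w_{ij} + d(i,j)),
\]
which is the first inequality. The second inequality is vacuous in this case since it is only claimed when $w_{ij} > 0$.

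Next I would treat the case $w_{ij} > 0$. In this case $r_i > d(i,j)$, and the definition unpacks to $\beta w_{ij} = r_i - d(i,j)$, so that $\beta w_{ij} + d(i,j) = r_i$ exactly. Both inequalities of the lemma therefore reduce to the sandwich bound from Lemma \ref{lem:facloc-guarantee}:
\[
\frac{1}{(1+\epsilon)^3}(\beta w_{ij} + d(i,j)) \;=\; \frac{r_i}{(1+\epsilon)^3} \;\le\; \tilde{r}_i \;\le\; (1+\epsilon)^3 \cdot r_i \;=\; (1+\epsilon)^3(\beta w_{ij} + d(i,j)).
\]

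Because everything reduces to the definition of $w_{ij}$ and a single application of Lemma \ref{lem:facloc-guarantee}, there is no real obstacle; the only subtlety worth flagging is that the dual variables are set in terms of the \emph{exact} radius $r_i$ (as used in Thorup's and Archer et al.'s analyses), while the algorithm only ever manipulates $\tilde{r}_i$. This mismatch is precisely what the factor $(1+\epsilon)^3$ in the statement is absorbing, and the proof above makes that absorption explicit.
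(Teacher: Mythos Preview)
Your proof is correct and follows essentially the same approach as the paper's own proof, which tersely notes that $\beta w_{ij} \ge r_i - d(i,j)$ from the definition and then invokes the sandwich bounds of Lemma~\ref{lem:facloc-guarantee}. Your explicit case split on whether $w_{ij}=0$ or $w_{ij}>0$ simply unpacks that one-line argument.
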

\begin{proof}
 	We have that $\beta w_{ij} \ge \rsim_i - \dup(i, j)$. Now using the appropriate upper and lower bounds from lemma \ref{lem:facloc-guarantee} for $\rsim_i$ to get the desired inequality.
\end{proof}

\begin{Lemma} \label{lem:duallemma2}
	If $\beta \le 3$, and if $i$ is a bottleneck for $j$, then $3(1+\epsilon)^3 v_j \ge 2\tilde{r}_i$.
\end{Lemma}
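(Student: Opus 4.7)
The plan is to case-split on whether \(w_{ij}=0\) or \(w_{ij}>0\), exploiting the bottleneck equality \(v_j=d(i,j)+w_{ij}\) together with the defining formula \(w_{ij}=\tfrac{1}{\beta}\max\{0,\,r_i-d(i,j)\}\) in each case.

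In the easy case \(w_{ij}=0\), we have \(v_j=d(i,j)\), so the upper-bound half of Lemma~\ref{lem:duallemma1} gives directly \(\tilde r_i\le (1+\epsilon)^3(\beta w_{ij}+d(i,j))=(1+\epsilon)^3 v_j\). Multiplying by \(2\) and using \(2\le 3\) yields the claim.

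The remaining case is \(w_{ij}>0\), where \(d(i,j)<r_i\) and \(\beta w_{ij}=r_i-d(i,j)\). Substituting into the bottleneck identity rewrites \(v_j\) as the affine expression \(v_j=(1-1/\beta)\,d(i,j)+r_i/\beta\). Since \(\beta\ge 1\) (so the coefficient of \(d(i,j)\) is nonnegative) and \(d(i,j)\ge 0\), this is bounded below by \(r_i/\beta\); equivalently, \(\beta v_j\ge r_i\). Combining this with \(\tilde r_i\le (1+\epsilon)^3 r_i\) from Lemma~\ref{lem:facloc-guarantee} gives \(\tilde r_i\le \beta(1+\epsilon)^3 v_j\), and the hypothesis on \(\beta\) then supplies \(2\beta\le 3\), yielding \(2\tilde r_i\le 3(1+\epsilon)^3 v_j\) as required.

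There is essentially no real obstacle: the crux is noticing that the one-line identity \(\beta v_j\ge r_i\) (valid whenever \(w_{ij}>0\)) transfers the dual value \(v_j\) to a bound on the true MP radius \(r_i\), after which the \((1+\epsilon)^3\) slack coming from the radius approximation of Lemma~\ref{lem:facloc-guarantee} absorbs the errors introduced by \textsc{NbdSizeEstimates}. I note that the stated hypothesis \(\beta\le 3\) appears to be a typo for \(\beta\le 3/2\): the argument above needs exactly \(2\beta\le 3\), and this is consistent with the ambient Theorem~\ref{thm:FacLocGuarantee} being stated for \(\beta\in[1,3/2]\). Under \(\beta\le 3\) the same proof would yield only the weaker bound \((1+\beta)(1+\epsilon)^3 v_j\ge 2\tilde r_i\).
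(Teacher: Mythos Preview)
Your proof is correct and follows essentially the same approach as the paper: both exploit the bottleneck identity $v_j = d(i,j) + w_{ij}$ together with the first inequality of Lemma~\ref{lem:duallemma1} (equivalently Lemma~\ref{lem:facloc-guarantee} in your Case~2), and both in fact require $2\beta \le 3$. The paper does this in one line without an explicit case split, directly bounding $3v_j = 3(d(i,j)+w_{ij}) \ge 2d(i,j) + 2\beta w_{ij}$ and then invoking Lemma~\ref{lem:duallemma1}; your case analysis makes the dependence on $\beta \ge 1$ and $\beta \le 3/2$ more transparent, and your observation that the stated hypothesis $\beta \le 3$ should read $\beta \le 3/2$ is correct and consistent with Theorem~\ref{thm:FacLocGuarantee}.
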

\begin{proof}
 	\begin{align*}
 	v_j &= d(i, j) + w_{ij} \tag{Since $i$ is the bottleneck for $j$}
 	\\\implies 3(1+\epsilon)^4 v_j &\ge (1+\epsilon)^4 (2 \cdot d(i, j) + 2 \cdot \beta) \tag{Using the fact that $\beta \le \frac{3(1+\epsilon)^4}{2}$}
 	\\&\ge 2 \cdot r_i \tag{Using Lemma \ref{lem:duallemma1}.}
 	\end{align*}
\end{proof}

\begin{Lemma} \label{lem:duallemma3}
	If an open facility is a bottleneck for $j$, then $j$ cannot contribute to any other open facility.
\end{Lemma}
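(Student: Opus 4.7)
The plan is to argue by contradiction: assume there is an open facility $i' \in S$ with $i' \neq i$ such that $j$ contributes to $i'$, i.e., $w_{i'j} > 0$. By definition of $w_{i'j}$, this means $d(i',j) < \tilde{r}_{i'}$ and $w_{i'j} = (\tilde{r}_{i'} - d(i',j))/\beta$. The main idea is that whenever $j$ contributes to any open facility $i'$, the ``price'' $d(i',j) + w_{i'j}$ offered at $i'$ is already strictly less than $\tilde{r}_{i'}$, whereas being the bottleneck forces the price through $i$ to be at least $d(i,j)$, and the separation between open facilities will show $d(i,j) > \tilde{r}_{i'}$.

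First I would bound the bottleneck value from above. Writing
\[
d(i',j) + w_{i'j} \;=\; d(i',j)\Bigl(1 - \tfrac{1}{\beta}\Bigr) + \tfrac{\tilde{r}_{i'}}{\beta},
\]
and using $\beta \ge 1$ together with $d(i',j) < \tilde{r}_{i'}$, this quantity is strictly less than $\tilde{r}_{i'}$. Since $i$ is the bottleneck, $v_j \le d(i',j) + w_{i'j} < \tilde{r}_{i'}$.

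Next I would bound $d(i,j)$ from below. Since both $i,i' \in S$ are distinct open facilities, Claim \ref{cl:cl9.2} gives $d(i,i') \ge 2(1+\epsilon)^3\,\max\{\tilde{r}_i, \tilde{r}_{i'}\} \ge 2(1+\epsilon)^3\,\tilde{r}_{i'}$. Combining with the triangle inequality and $d(i',j) < \tilde{r}_{i'}$ yields
\[
d(i,j) \;\ge\; d(i,i') - d(i',j) \;>\; \bigl(2(1+\epsilon)^3 - 1\bigr)\,\tilde{r}_{i'} \;\ge\; \tilde{r}_{i'},
\]
where the final step uses $\epsilon \ge 0$. Since $v_j = d(i,j) + w_{ij} \ge d(i,j) > \tilde{r}_{i'}$, this contradicts the upper bound $v_j < \tilde{r}_{i'}$ from the previous paragraph, completing the argument.

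I do not expect a serious obstacle; the only delicate point is picking the right inequalities so that the $\beta \ge 1$ assumption is used exactly once (to kill the $(1 - 1/\beta)$ coefficient and get strict inequality $d(i',j) + w_{i'j} < \tilde{r}_{i'}$), and then combining it with the ``well-separated centers'' guarantee of Claim \ref{cl:cl9.2}. Note also that the argument does not require $w_{ij} > 0$: it uses only $w_{ij} \ge 0$ together with $i$ being the bottleneck, so both the cases where $j$ does and does not contribute to $i$ are handled uniformly.
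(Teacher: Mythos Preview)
Your argument is correct and in fact cleaner than the paper's. The paper names the bottleneck $i'$ and the second open facility $i$ (the reverse of your convention), then bounds $d(i,i')$ from \emph{above} via the triangle inequality and splits into two cases according to whether $w_{i'j}>0$ (i.e., whether $j$ also contributes to its bottleneck); in each case it argues that $d(i,i')$ comes out below the separation guaranteed by Claim~\ref{cl:cl9.2}. Your route instead sandwiches $v_j$ directly: the contributing facility $i'$ gives the upper bound $v_j \le d(i',j)+w_{i'j}\le r_{i'}$, while Claim~\ref{cl:cl9.2} plus the triangle inequality gives the lower bound $v_j \ge d(i,j) > (2(1+\epsilon)^3-1)\,r_{i'} > r_{i'}$. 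This eliminates the case split entirely, exactly as you note in your closing paragraph, and arguably makes the role of $\beta\ge 1$ more transparent.

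Two minor remarks. First, in the paper $w_{ij}$ is defined using $r_i$, not $\tilde r_i$; your argument goes through verbatim with $r$ in place of $\tilde r$, and this is also the quantity appearing in Claim~\ref{cl:cl9.2}, so everything lines up without needing Lemma~\ref{lem:facloc-guarantee}. Second, at $\beta=1$ the expression $d(i',j)(1-1/\beta)+r_{i'}/\beta$ equals $r_{i'}$ rather than being strictly smaller, so your first bound is really $v_j \le r_{i'}$, not strict. This is harmless: the lower bound $v_j \ge d(i,j) > r_{i'}$ is strict, so the contradiction survives.
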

\begin{proof}
 	Suppose $i' \in S$ is $j$'s bottleneck. Also assume that $j$ contributes to another $i \in S$, i.e. $w_{ij} > 0$. Using triangle inequality, we have that $d(i, i') \le d(i, j) + d(i', j) \le \dup(i, j) + d(i', j) < \rsim_i + d(i', j)$. In the last inequality, we use the fact that $w_{ij} > 0$, which means that $\dup(i, j) < \rsim_i$. Now there are two cases, depending on whether $w_{i'j} > 0$ or $w_{i'j} = 0$.  

 	In the first case, if $w_{i'j} > 0$, then again using similar reasoning, we have that $\dup(i', j) \le \rsim_{i'}$. However, this implies that $d(i, i') < \rsim_i + \rsim_{i'} \le 2(1+\epsilon)^2 (\rdown_i + \rdown_{i'}) \le 2(1+\epsilon^2) \max\{\rdown_i, \rdown_{i'}\}$, which is a contradiction to Claim \ref{cl:cl9.2}.

 	In the second case, $w_{i'j} = 0$. However, since $i'$ is also a bottleneck for $j$, this implies that $v_j = \min_{i \in V} d(i, j) = d(i', j)$. That is, $i'$ is the closest vertex to $j$, i.e. $d(i, j) \le d(i', i)$. However, this implies $d(i, i') \le 2d(i', j) \le 2\dup(i', j) < 2\rsim_{i'} \le 2(1+\epsilon)^2 \rdown_{i'}$, which is again a contradiction to Claim \ref{cl:cl9.2}.
\end{proof}

\begin{Lemma} \label{lem:duallemma4}
	If a closed facility $i \notin S$ is a bottleneck for $j \in V$, and $k \in S$ is the open facility that caused $i$ to close, then 
	$\max \{2\beta, 3\} \cdot (1+\epsilon)^7 \cdot v_j \ge d(k, j)$.
\end{Lemma}
\begin{proof}
 	\begin{align*}
 	d(k, j) &\le d(k, i) + d(i, j) \tag{Triangle inequality}
 	\\&\le 2(1+\epsilon)^3 \rdown_i + d(i, j) \tag{$k$ caused $i$ to close, so using Lemma \ref{lem:lem8}.}
 	\\&\le 2(1+\epsilon)^3 r_i + d(i, j)
 	\\&\le 2(1+\epsilon)^3 \cdot (1+\epsilon)^4 (\beta w_{ij} + d(i, j)) + d(i, j) \tag{Using Lemma \ref{lem:duallemma1}.}
 	\\&\le 2\beta (1+\epsilon)^7 w_{ij} + (2(1+\epsilon)^7 + 1) \cdot d(i, j)
 	\\&\le \max \{2\beta (1+\epsilon)^7, 2(1+\epsilon)^7 + 1 \} \cdot v_j \tag{Since $i$ is the bottleneck for $j$}
 	\\&\le \max \{2\beta, 3\} \cdot (1+\epsilon)^7 \cdot v_j
 	\end{align*}
\end{proof}

\begin{Lemma} \label{lem:duallemma4}
	If a closed facility $i \notin S$ is a bottleneck for $j \in V$, and $k \in S$ is the open facility that caused $i$ to close, then 
	$\max \{2\beta, 3\} \cdot (1+\epsilon)^7 \cdot v_j \ge d(k, j)$.
\end{Lemma}

\begin{proof}
	\begin{align*}
		d(k, j) &\le d(k, i) + d(i, j) \tag{Triangle inequality}
		\\&\le 2(1+\epsilon)^2 r_i + d(i, j) \tag{Using Lemma \ref{lem:lem8}}.
		\\&\le 2(1+\epsilon)^4 \tilde{r}_i + d(i, j) \tag{Using Lemma \ref{lem:facloc-guarantee}}
		\\&\le 2(1+\epsilon)^7 (\beta w_{ij} + d(i,j)) + d(i, j) \tag{Using Lemma \ref{lem:duallemma1}}
		\\&= \max\{2\beta, 3\} \cdot (1+\epsilon)^7 \cdot v_j \tag{Because $i$ is the bottleneck for $j$}
	\end{align*}
\end{proof}

We are finally ready to prove the main guarantee of the modified MP-$\beta$ algorithm, as in \cite{ArcherRSESA2003}. The basic idea is to show that $(3+O(\epsilon))$ times the dual variable $v_j$ pays for the distance traveled by $j$, as well as, $\beta s_j$, which is a part of the facility opening cost. We formalize this in the following lemma.

\begin{Lemma} \label{lem:mp-beta}
	For any vertex $j \in V$, there exists a facility $c(j) \in S$ such that $3 (1+O(\epsilon)) v_j \ge d(j, c(j)) + \beta s_j$.
\end{Lemma}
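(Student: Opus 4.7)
My plan is to prove the pointwise bound by case analysis on whether the bottleneck $i$ of client $j$ belongs to $S$, mirroring the primal-dual argument of Archer et al., with the approximation slacks from Lemmas~\ref{lem:duallemma1}--\ref{lem:duallemma4} absorbed into the $(1+O(\epsilon))$ factor.

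\textbf{Case 1: the bottleneck $i$ of $j$ lies in $S$.} Set $c(j) = i$. If $w_{ij} = 0$, then Lemma~\ref{lem:duallemma3}, applied with $i$ as the open bottleneck, forces $s_j = 0$, and $d(j, c(j)) + \beta s_j = d(i, j) = v_j$. If instead $w_{ij} > 0$, then $j$ contributes to the open facility $i$, and since Claim~\ref{cl:cl9.2} guarantees that at most one such open contributor exists, we have $s_j = w_{ij}$. By the definition of $w_{ij}$ (in the nontrivial regime), $d(i, j) + \beta w_{ij} = r_i$, and Lemma~\ref{lem:duallemma2} immediately gives $r_i \le \tfrac{3(1+\epsilon)^3}{2} v_j$, which is within the required bound.

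\textbf{Case 2: the bottleneck $i$ of $j$ is closed.} Let $k \in S$ be the facility in $S$ that caused $i$ to close, and tentatively set $c(j) = k$. Lemma~\ref{lem:duallemma4} yields $d(j, k) \le \max\{2\beta, 3\}(1+\epsilon)^7 v_j$, and since the theorem restricts $\beta \in [1, 3/2]$ this simplifies to $3(1+\epsilon)^7 v_j$. If $s_j = 0$ we are done, so the remaining task is to handle the subcase $s_j > 0$, where there exists $i' \in S$ with $w_{i'j} > 0$ and $\beta s_j = r_{i'} - d(i', j)$.

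\textbf{The main obstacle} is bounding $\beta s_j$ in this last subcase, since $i'$ is not the bottleneck and Lemma~\ref{lem:duallemma2} does not apply directly to $r_{i'}$. I will dispatch it by comparing $r_{i'}$ and $r_i$ through the total ordering $\prec$ of Lemma~\ref{lem:lem8}. If $r_{i'} \le (1+\epsilon) r_i$, then Lemma~\ref{lem:duallemma2} gives $r_{i'} \le \tfrac{3(1+\epsilon)^4}{2} v_j$, so in this branch I switch the assignment to $c(j) = i'$ and obtain $d(j, i') + \beta w_{i'j} = r_{i'} = O(v_j)$. If instead $r_{i'} > (1+\epsilon) r_i$, then $i \prec i'$, so by the time $i'$ is considered, $i$ is already removed by some $k' \in S$ with $d(k', i) \le 2(1+\epsilon)^5 r_i$ (combining the $2(1+\epsilon)^2 \tilde r$ removal radius of Algorithm~\ref{alg:MettuPlaxtonPhase2} with Lemma~\ref{lem:facloc-guarantee}). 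Triangle inequality through $j$, using $d(i, j) \le v_j$ and $d(i', j) < r_{i'}$, gives $d(k', i') < 2(1+\epsilon)^5 r_i + v_j + r_{i'}$; on the other hand, since $i' \in S$, Claim~\ref{cl:cl9.2} (or equivalently the fact that $i'$ survived the removal step) forces $d(k', i') > 2(1+\epsilon)^{-1} r_{i'}$. Combining these two inequalities and substituting the bound $r_i \le \tfrac{3(1+\epsilon)^3}{2} v_j$ from Lemma~\ref{lem:duallemma2} yields $r_{i'} = O(v_j)$, and with this choice of $c(j) = i'$ we get $d(j, c(j)) + \beta s_j = r_{i'}$ bounded as required.

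Summing up the two cases and absorbing all factors of $(1+\epsilon)^{O(1)}$ into a single $(1 + O(\epsilon))$ term yields the claimed inequality $3(1 + O(\epsilon)) v_j \ge d(j, c(j)) + \beta s_j$. The delicate part is entirely in Case 2 with $s_j > 0$; the combinatorial step I expect to be most error-prone is verifying that the removal-radius constant $2(1+\epsilon)^2$ in line~\ref{alg4:remove-vertices} and the MIS-separation constant from Lemma~\ref{lem:DistdMIS} are consistent with the ``$i$ is closed by $k$'' definition used in Lemma~\ref{lem:duallemma4}, so that the same $k'$ witnessing closure of $i$ can legitimately be invoked in the triangle-inequality comparison.
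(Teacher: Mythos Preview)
Your case structure is sound and covers the same ground as the paper, but the bound you extract in the final sub-case (closed bottleneck $i$, $s_j>0$, $r_{i'} > (1+\epsilon) r_i$) is too weak: it gives $(4+O(\epsilon))v_j$, not $3(1+O(\epsilon))v_j$. You route the triangle inequality through $i$,
\[
d(k',i') \le d(k',i) + d(i,j) + d(j,i'),
\]
and then bound $d(k',i)$ and $d(i,j)$ \emph{separately} as $\lesssim 3v_j$ and $\le v_j$. Combining with the separation bound $d(k',i')\gtrsim 2r_{i'}$ leaves $r_{i'}\lesssim 4v_j$, and since you set $d(j,c(j))+\beta s_j = r_{i'}$, this is a hard constant of $4$, not $3(1+O(\epsilon))$.

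The paper avoids this loss by routing through $j$ instead of $i$: with $\ell = i'$ and $k$ the closer of $i$, it uses $d(\ell,k)\le d(\ell,j)+d(j,k)$ together with Claim~\ref{cl:cl9.2} ($d(\ell,k)\ge 2(1+\epsilon)^3 r_\ell$) to get
\[
d(\ell,j) + 2\beta s_j \le d(j,k) \le 3(1+\epsilon)^7 v_j,
\]
where the last step is exactly Lemma~\ref{lem:duallemma4}. Note that Lemma~\ref{lem:duallemma4} already packages the joint bound $d(k,i)+d(i,j)\le 3(1+O(\epsilon))v_j$; by unbundling it into two pieces you pay an extra additive~$v_j$. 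Your split on $r_{i'}$ versus $r_i$ is also unnecessary: once you know $i'\ne k$ (which your hypothesis $r_{i'}>(1+\epsilon)r_i$ does imply), the paper's Case~3 argument applies directly and gives the right constant without any comparison of radii.
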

\begin{proof}
	Consider a vertex $j \in V$. We prove the theorem by doing a careful case analysis. 
	
	\textbf{Case 1.} Some open facility $i \in S$ is the bottleneck for $j$. Connect $j$ to $i$.
	\\If $d(i, j) \le \rsim_i$, we have that $0 < w_{ij} = s_j$. Also, $v_j = d(i, j) + s_j$.
	\\Otherwise, $w_{ij} = 0$, and $v_j = d(i, j)$.
	
	\textbf{Case 2.} Some closed facility $i \notin S$ is the bottleneck for $j$, and $j$ does not contribute to any open facility (i.e. $s_j = 0$).
	\\There must be some open facility $k \in S$ that caused $i$ to close. Connect $j$ to $k$. By Lemma \ref{lem:duallemma4}, we know that $3(1+\epsilon)^7 v_j \ge d(k, j)$.
	
	\textbf{Case 3.} Some closed facility $i \notin S$ is the bottleneck for $j$, and there exists an open facility $\ell \in S$ with $w_{\ell j} > 0$, but $\ell$ was not the reason why $i$ was closed.
	\\Since $w_{\ell j} > 0$, $s_j = w_{\ell j}$, by the uniqueness of $s_j$. Connect $j$ to $\ell$.
	\\By Lemma \ref{lem:duallemma1}, we have that $\tilde{r}_{\ell} \ge \frac{1}{(1+\epsilon)^3} (d(\ell, j) + w_{\ell j})$. Also, there must be some open facility $k \in S$ which prevented $i$ from opening. Using similar reasoning as in the previous case, we have that $d(k, j) \le 3(1+\epsilon)^7 v_j$. Now, 
	\begin{align*}
	&d(\ell, k) \ge 2(1+\epsilon)^3 r_{\ell} \ge 2 (d(\ell, j) + \beta w_{\ell j}) \tag{Using Claim \ref{cl:cl9.2} and Lemma \ref{lem:duallemma1}.}
	\\&\implies 2 (d(\ell, j) + \beta w_{\ell j}) \le d(\ell, k) \le d(\ell, j) + d(k, j) \tag{Triangle inequality}
	\\&\implies  d(\ell, j) + 2 \beta w_{\ell j} \le d(k, j) \le 3(1+\epsilon)^7 v_j
	\end{align*}
	
	\textbf{Case 4.} Some closed facility $i \notin S$ is the bottleneck for $j$. Furthermore, there is an open facility $k \in S$ such that $w_{kj} > 0$, and $k$ caused $i$ to be closed. Connect $j$ to $k$.
	\\Again, by uniqueness of $s_j$, we have that $s_j = w_{kj}$. Also, from Lemma \ref{lem:duallemma2}, $3(1+\epsilon)^3 v_j \ge 2\tilde{r}_i$. Since $k$ caused $i$ to be closed, we have that $\tilde{r}_i \ge \tilde{r}_k \ge \frac{1}{(1+\epsilon)^3} (d(k, j) + \beta w_{kj}) = \frac{1}{(1+\epsilon)^3} (d(k, j) + \beta s_j)$, by Lemma \ref{lem:duallemma2}. Combining the previous inequalities yields, $3(1+\epsilon)^6 v_j \ge d(k, j) + 2\beta s_j$.
	
	Finally, we use the well-known fact that for any $\epsilon \in (0, 1)$, $(1+\epsilon)^7 \le (1+c\epsilon)$ for some constant $c$, and the lemma follows.
\end{proof}  
\end{document}